\documentclass[11pt]{article}

\RequirePackage{color}
\RequirePackage{titling} 
\RequirePackage[letterpaper, left=1.2truein, right=1.2truein, top = 1.2truein,bottom = 1.2truein]{geometry} 
\RequirePackage[affil-it]{authblk} 
\RequirePackage{natbib} 
\RequirePackage[colorlinks,citecolor=blue,urlcolor=blue,breaklinks=true]{hyperref} 
\RequirePackage{algorithm, algorithmic} 
\RequirePackage{appendix} 
\RequirePackage{amsthm,amsmath,amsfonts,amssymb,mathtools} 
\RequirePackage{booktabs} 
\RequirePackage{enumitem} 
\RequirePackage{graphicx} 
\RequirePackage{subfigure} 
\RequirePackage{tikz} 
\usetikzlibrary{positioning,graphs,quotes,shapes,shadows,arrows}
\tikzstyle{pattern}=[circle,draw,fill=white]
\tikzstyle{line}=[draw,-stealth,thick]
\tikzstyle{dotline}=[draw,-stealth,thick,dotted]

\theoremstyle{plain}
\newtheorem{theorem}{Theorem}[section]

\newtheorem{lemma}[theorem]{Lemma}

\theoremstyle{definition}

\newtheorem{assumption}[theorem]{Assumption}
\theoremstyle{remark}


\newcommand{\E}{\mathbb{E}}
\newcommand{\quads}{\quad\ }
\newcommand{\tr}{^\intercal}
\newcommand{\domr}{\mathrm{dom}_r}

\newcommand{\bbR}{\mathbb R}
\newcommand{\bbP}{\mathbb P}

\newcommand{\calL}{\mathcal L}
\newcommand{\calLr}{\mathcal L^r}
\newcommand{\calR}{\mathcal R}
\newcommand{\calT}{\mathcal T}
\newcommand{\calF}{\mathcal F}
\newcommand{\calM}{\mathcal M}
\newcommand{\calH}{\mathcal H}

\newcommand{\lob}{l^{[r]}}
\newcommand{\lms}{l^{[\overline{r}]}}

\newcommand{\Lob}{L^{[r]}}
\newcommand{\Lms}{L^{[\overline{r}]}}
\newcommand{\Lobi}{L^{[r]}_i}

\newcommand{\Phir}{\Phi^{r}}
\newcommand{\phir}{\phi^{r}}
\newcommand{\alphar}{\alpha^{[r]}}
\newcommand{\hatalphar}{\hat{\alpha}^{[r]}}

\newcommand{\bfD}{\mathbf D}

\newcommand{\PAr}{\mathsf{Pa}(r)}
\newcommand{\PAs}{\mathsf{Pa}(s)}
\newcommand{\PAsk}{\mathsf{Pa}(s_k)}

\newcommand{\calN}{\mathcal N}
\newcommand{\bbG}{\mathbb G}
\newcommand{\bfA}{\mathbf A}
\newcommand{\bfB}{\mathbf B}
\newcommand{\bfC}{\mathbf C}
\newcommand{\bfE}{\mathbf E}
\newcommand{\bfF}{\mathbf F}
\newcommand{\calJ}{\mathcal J}
\newcommand{\calG}{\mathcal G}
\newcommand{\calO}{\mathcal O}
\newcommand{\calU}{\mathcal U}
\newcommand{\PEN}{\mathtt{PEN}}


\title{Efficient Estimation under Multiple Missing Patterns via Balancing Weights}
\author[1]{Jianing Dong}
\author[2]{Raymond K. W. Wong}
\author[3]{Kwun Chuen Gary Chan}
\affil[1,2]{Department of Statistics, Texas A\&M University}
\affil[3]{Department of Biostatistics, University of Washington}
\date{}

\begin{document}

\maketitle

\begin{abstract}
As one of the most commonly seen data challenges, missing data, in particular, multiple, non-monotone missing patterns, complicates estimation and inference due to the fact that missingness mechanisms are often not missing at random, and conventional methods cannot be applied.  Pattern graphs have recently been proposed as a tool to systematically relate various observed patterns in the sample.  We extend its scope to the estimation of parameters defined by moment equations, including common regression models, via solving weighted estimating equations with weights constructed using a sequential balancing approach.  These novel weights are carefully crafted to address the instability issue of the straightforward approach based on local balancing.  We derive the efficiency bound for the model parameters and show that our proposed method, albeit relatively simple, is asymptotically efficient.  Simulation results demonstrate the superior performance of the proposed method, and real-data applications illustrate how the results are robust to the choice of identification assumptions.
\end{abstract}

\begin{keywords}
Non-monotone missing; Missing not at random (MNAR); Covariate balancing; Missing Pattern; Pattern Mixture Model
\end{keywords}

\section{Introduction}
Incomplete data is a prevalent issue in data analysis, arising in a variety of fields, including clinical trials, social sciences, and machine learning. Proper handling of missing data is crucial, as inappropriate assumptions or methods can lead to biased inferences and invalid conclusions. The theoretical framework for handling missing data was first formalized by \citet{rubin1976inference}, who categorized missingness mechanisms into three broad classes: Missing Completely at Random (MCAR), Missing at Random (MAR), and Missing Not at Random (MNAR). While MCAR assumes that missingness is entirely unrelated to the data, and MAR posits that missingness depends only on observed data, MNAR describes scenarios where the probability of missingness depends on unobserved variables. These definitions provide a foundation for understanding the relationship between observed and missing data, but practical applications often require more nuanced structures to model real-world missingness mechanisms.

Although complete-case analysis excludes missing data from the dataset, offering a convenient approach, this straightforward technique works under the stringent MCAR assumption, which rarely holds in real-world data. The MAR assumption is often restrictive and may not align with real-world missing data scenarios, such as non-monotone missingness, where the missing data does not follow a structured sequence \citep{robins1997nonignorable, troxel1998analysis}. In general, MNAR is a more appropriate assumption than MAR. One real-world example is that high-income individuals are less likely to report their income due to privacy concerns, stigma, or social desirability bias. The missingness is directly related to the unobserved income values themselves. 

However, the MNAR assumption introduces significant challenges since missingness cannot be ignored when making inferences from incomplete data. Identifying assumption is required to specify which parameters in the full data model can be estimated despite the missing data. The pattern mixture model is a widely used method \citep{little1993pattern, tchetgen2018discrete} that classifies data based on missing patterns and imposes assumptions on the conditional densities of missing variables for specific patterns. The pattern graph introduced in \citet{chen2022pattern} provides a visualization of more complex identifying assumptions, which hierarchically link the conditional densities of missing variables across multiple patterns. It is worth noting that pattern graphs are different from graphical models \citep{mohan2021graphical, nabi2020full} and the casual graph \citep{bhattacharya2020causal, shpitser2016consistent}.

Imputation-based methods are commonly used with the pattern mixture model to estimate the parameters of interest, but they can be computationally intensive due to repeated iterations or samplings. Alternatively, identification conditions can often be stated in a selection model, leading to inverse probability weighting (IPW) estimation. However, they may lack stability due to extreme estimated weights. Balancing weights  \citep{zubizarreta2015stable, wong2018kernel, dong2024balancing} are attractive since they are designed to achieve a more balanced distribution of variables between groups, which can lead to a more stable and efficient estimation.   

In this paper, we extend the balancing approach to missing mechanisms that can be visualized by a pattern graph or its generalization.  A local estimation encourages the covariate balance between patterns directly connected by edges when hierarchical structures exist in the graph, while it may lead to extreme weights due to model extrapolation or fail to account for the errors accumulated through the multiplications used to construct the inverse propensity weights.  A sequential estimation procedure is proposed to address instability in the estimation.  We expand the scope of estimation under pattern graphs to model equations that include common regression models.  We study the semiparametric efficiency bound and show the consistency and asymptotic efficiency of the proposed estimator.

\section{Missing data assumptions}
\subsection{Preliminaries}
In this section, we formally describe the setup of the problem. Let $L=(L_{(1)},\ldots,L_{(d)})\in \prod_{j=1}^d\calL_{(j)}$, where $\calL_{(j)}\subseteq\bbR$,
be a vector of potentially observable random variables. To indicate the observation of these variables, let $R=(R_{(1)},\ldots,R_{(d)})\in\{0,1\}^d$ be a binary random vector such that $R_{(j)}=1$ when $L_{(j)}$ is observed. Let $\calR=\{r \in \{0,1\}^d:P(R=r)>0\}$ be the set of all possible missing patterns and $M=|\calR|$ be the number of missing patterns in the study. We define a partial ordering of missing pattern vectors: for two patterns $s,r\in\calR$ such that $s\neq r$, we say $s>r$ if $s_{(j)}\ge r_{(j)}$ for all $1\le j\le d$. Denote the complete-case pattern by $1_d=(1,\ldots,1)$. For each missing pattern $r$, we denote the observed variables by $\Lob$ and the missing variables by $\Lms$. For example, $L^{[101]}=(L_{(1)},L_{(3)})$ and  $L^{[\overline{101}]}=L_{(2)}$. So, the observations are $\{(L^{[R_i]}_i,R_i)\}_{i=1}^N$. 
Denote $\domr:=\prod_{\{j: R_{(j)}=1\}}\calL_{(j)}\subseteq\bbR^{d_r}$ where $d_r$ is the number of observed variables in pattern $r$, then $\Lob\in\domr$ and $\Lms\in\prod_{\{j: R_{(j)}=0\}}\calL_{(j)}\subseteq\bbR^{d-d_r}$. 

Let $\theta_0\in\bbR^q$ be the parameter of interest which is the unique solution to $\E\{\psi_\theta(L)\}=0$, with a known vector-valued estimating function $\psi_\theta(L)=\psi(L,\theta)$ that takes values in $\bbR^q$.
For instance, we could use the quasi-likelihood estimating functions for the generalized linear models. If full data were observed, a solution to the estimating equations $N^{-1}\sum_{i=1}^N\psi_\theta(L_i)=0$ is a common Z-estimator. However, $\psi_\theta(L_i)$ can only be evaluated at samples with complete observations of $L_i$. When missing data is present, practitioners often solve the complete-case estimating equation $N^{-1}\sum_{i=1}^N \mathsf{1}_{R_i=1_d}\psi_\theta(L_i)=0$, but it is typically biased unless $R$ and $L$ are independent, i.e., missing completely at random.  There are two directions to reconstruct the full data density and address the bias issues.
    \paragraph{Conditional density of missing variables.} The joint density of $L$ can be expressed as
        \begin{align*}
        p(l)=\sum_{r\in\calR}P(R=r)p(\lob\mid R=r)p(\lms\mid\lob,R=r)\ .
        \end{align*}
    Note that $p(\lms\mid\lob,R=r)$ cannot be identified without assumptions since $\lms$ is never observed when $R=r$.  Given assumptions such that estimators $\hat{p}(\lms\mid\lob,R=r)$ for each pattern $r$ are available, imputation can be performed repeatedly to generate multiple complete datasets, or the conditional density can be directly integrated into the analysis.    

    \paragraph{Selection probability.} The joint density of $L$ can be expressed as $p(l)=p(l,r)/P(R=r\mid l)$. Using the selection probability, the population-level expectation can be reconstructed by weighting the complete cases:
        \begin{align}
        \E\{\psi_\theta(L)\}
        =\E\left\{\frac{\mathsf{1}_{R=1_d}}{P(R=1_d\mid L)}\psi_\theta(L)\right\}\ .        
        \label{eqn:estimating}
        \end{align}
   Given an estimator $\hat{\pi}(l)$ for $\pi(l)=P(R=1_d\mid l)$, an estimator of $\theta$ can be obtained by solving the weighted estimating equation: $N^{-1}\sum_{i=1}^N \mathsf{1}_{R_i=1_d}\psi_\theta(L_i)/\hat{\pi}(L_i)=0$. The modeling and estimation of $\hat{\pi}(l)$ is not straightforward under missing not at random because $\pi(l)$ depends on components of $L$ that are not fully observed when $R\neq 1_d$.

\subsection{Regular pattern graphs}
Various identifying assumptions have been considered in the literature.  \citet{little1993pattern} proposed the complete-case missing variable (CCMV) assumption, that matches the unidentifiable conditional distribution of missing variables for missing patterns to the identifiable distribution for complete cases. That is, for any $r\in\calR \backslash \{1_d\}$ and all $\lob\in\domr$, $p(\lms\mid\lob,R=r)=p(\lms\mid\lob,R=1_d)$. For monotone missingness, \citet{Molenberghs1998} considered the available-case (AC) restriction, $p(\lms\mid\lob,R=r)=p(\lms\mid\lob,R>r)$. \citet{thijs2002strategies} introduced the neighboring-case (NC) restriction, $p(\lms\mid\lob,R=r)=p(\lms\mid\lob,R=s)$ where $s>r$ and $|s|=|r|+1$.

Recently, \citet{chen2022pattern} proposed regular pattern graphs to encode a set of identifying assumptions, which recursively identify the unknown conditional density. A pattern graph is a directed graph $G=(\calR,E)$, where each vertex represents a missing pattern, and the directed edges indicate connections in the distribution of $(L,R)$ across different patterns (to be described clearly later). We define the notion of parents and children in the graph. For two patterns $s,r\in\calR$, $s$ is called as a parent of $r$, and $r$ is called as a child of $s$ if there is a directed edge $s\to r$. Each pattern may have multiple parents and/or children.  A regular pattern graph $G=(\calR, E)$ is a pattern graph such that (i) $G$ is a directed acyclic graph (DAG), (ii) the complete-case pattern $1_d$ is the only node without a parent, and (iii) for any $s,r \in \calR$ with an edge $(s\rightarrow r)\in E$, then $s>r$. Figure \ref{fig1-regular-graph} depicts a few examples of regular pattern graphs.
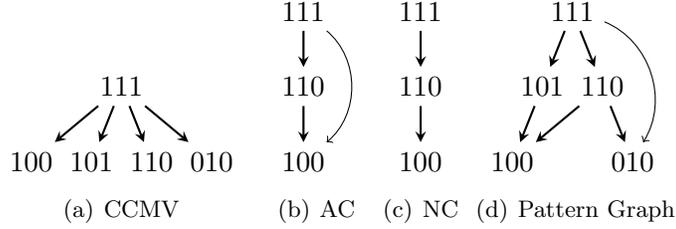
\begin{figure}[ht]
\centering
\subfigure[CCMV]{\label{fig1-CCMV}
\begin{tikzpicture}[node distance=1cm, scale=0.5]
    \node(111){111};
    \node[below of = 111, xshift = 0.4cm](110){110};
    \node[below of = 111, xshift = -0.4cm](101){101};
    \node[below of = 111, xshift = 1.2cm](010){010};
    \node[below of = 111, xshift = -1.2cm](100){100};
    \path[line](111)--(110);
    \path[line](111)--(101);
    \path[line](111)--(010);
    \path[line](111)--(100);
\end{tikzpicture}
}
\subfigure[AC]{
\begin{tikzpicture}[node distance=1cm, scale=0.5]
    \node(111){111};
    \node[below of = 111](110){110};
    \node[below of = 110](100){100};
    \path[line](111)--(110);
    \path[line](110)--(100);
    \draw[->] (111) to[bend left = 50] (100);  
\end{tikzpicture}
}
\subfigure[NC]{
\begin{tikzpicture}[node distance=1cm, scale=0.5]
    \node(111){111};
    \node[below of = 111](110){110};
    \node[below of = 110](100){100};
    \path[line](111)--(110);
    \path[line](110)--(100);
\end{tikzpicture}
}
\subfigure[Pattern Graph]{
\begin{tikzpicture}[node distance=1cm, scale=0.5]
    \node(111){111};
    \node[below of = 111, xshift = 0.4cm](110){110};
    \node[below of = 111, xshift = -0.4cm](101){101};
    \node[below of = 110, xshift = 0.4cm](010){010};
    \node[below of = 101, xshift = -0.4cm](100){100};
    \path[line](111)--(110);
    \path[line](111)--(101);
    \path[line](110)--(010);
    \path[line](110)--(100);
    \path[line](101)--(100);
    \draw[->] (111) to[bend left = 50] (010);
\end{tikzpicture}
}
\caption{Examples of regular pattern graphs.}
\label{fig1-regular-graph}
\end{figure}

Since a parent pattern is more informative than its
child pattern, \citet{chen2022pattern} models the unobserved part of pattern $r$ using the information from its parents. Let $\PAr$ be the set of parents of a pattern $r\in\calR$. Specifically, for any pattern $r\in G$ and $r\neq 1_d$, the identification assumption being encoded in $G$ is
\begin{equation}
p(\lms\mid\lob,R=r)=p(\lms\mid\lob,R\in\PAr)\ .
\label{eqn:chen-density}
\end{equation}

It can be shown that $p(l)$ is nonparametrically identified if the above assumption holds for every missing pattern. Besides, the above assumption can be equivalently stated as a selection odds model: 
\begin{equation}
\frac{P(R=r\mid l)}{P(R=\PAr\mid l)}
=\frac{P(R=r\mid\lob)}{P(R=\PAr\mid\lob)}\ .
\label{eqn:chen-propensity}
\end{equation}
To connect the odds model with selection probability, we define the walk and path in the graph. A walk on the graph $G=(\calR,E)$ is defined as a sequence of directed edges $r_0\to r_1\to \ldots\to r_m$ such that $(r_{j-1}\to r_j)\in E$ for $j=1,\ldots,m$. A path is a walk in which all vertices (and therefore also all edges) are distinct. Since a regular pattern graph is a DAG, we can also represent a path from $r_0$ to $r_m$ by its sequence of vertices along the path:
\begin{align*}
\Xi_{r_0,r_m}=\{r_0, r_1, \ldots, r_m\}
\end{align*}

Let $\Pi_{s,r}$ denote the collection of all paths from $s$ to $r$. Write $\Pi_r:=\Pi_{1_d,r}$ which is the collection of all paths from $1_d$ to $r$. And let $\Pi:=\cup_{r\in\calR}\Pi_r$ denote the collection of all paths from the source $1_d$ in $G$. Also let $O^r(\lob)=P(R=r\mid\lob)/P(R\in\PAr\mid\lob)$ and $Q^r(l)=P(R=r\mid l)/P(R=1_d\mid l)$ for any pattern $r$. The propensity $\pi(l)=P(R=1_d\mid l)$ is identifiable and has the following recursive form:
\begin{align}\label{eqn:chen-ipw}
\pi(l)&=\frac{1}{\sum_{r\in\calR}Q^r(l)};\\
Q^r(l)&=O^r(\lob)\times \sum_{s\in\PAr}Q^s(l)
=\sum_{\Xi\in\Pi_r}\prod_{s\in\Xi}O^s(l^{[s]})\nonumber\ .
\end{align}

\subsection{Generalization of missing data assumptions encoded in pattern graph}
Note that, the right hand side of \eqref{eqn:chen-density} can be rewritten as
\begin{equation}
\sum_{s\in\PAr}\frac{P(R=s\mid\lob)}{P(R\in\PAr\mid\lob)}p(\lms\mid\lob,R=s).
\label{eqn:chen_mix}
\end{equation}

In other words, the missing variable density is assumed to be a mixture density of that for parent patterns, where the mixture coefficients are $P(R=s\mid\lob)/P(R\in\PAr\mid\lob)$. It is possible to generalize the choice of the mixture coefficients, extending the work of \citet{chen2022pattern}. We propose the following generalization of mixture density:
\begin{align}\label{identifying-conditional-density}
P(l^{\bar{r}}\mid \lob,R=r)
=\sum_{s\in\PAr}C^{s,r}(\lob)P(l^{\bar{r}}\mid \lob,R=s),
\end{align}
where $C^{s,r}(\lob)$ is an identifiable function of observed variables $\lob$ under the constraint that $C^{s,r}(\lob)\ge0$ and $\sum_{s\in\PAr}C^{s,r}(\lob)=1$. Let $O^{s,r}(\lob)=P(R=r\mid\lob)/P(R=s\mid\lob)$. Therefore, we have
\begin{align}\label{identifying-Q}
Q^r(l)=\sum_{s\in\PAr}C^{s,r}(\lob)O^{s,r}(\lob)Q^s(l).
\end{align}
Gathering the assumptions for every missing pattern $r$, we claim the following theorems for identifiability.
\begin{theorem}
    Assume that the conditional density of missing variables is modeled as in \eqref{identifying-conditional-density} for every missing pattern $r$, then $p(l,r)$ is nonparametrically identifiable/saturated.
\end{theorem}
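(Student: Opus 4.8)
The plan is to prove identifiability by induction along the DAG structure of $G$, carrying as the inductive hypothesis the stronger statement that the \emph{full} conditional density $p(l\mid R=s)$ is identifiable for every pattern $s$. Because a regular pattern graph is acyclic with $1_d$ as its unique source, there is a topological ordering of $\calR$ in which $1_d$ appears first and each pattern is listed only after all of its parents. The base case is $r=1_d$: every coordinate is observed on the complete cases, so $p(l\mid R=1_d)$ coincides with the observed density and is directly identifiable.

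For the inductive step I fix $r\neq 1_d$ and assume $p(l\mid R=s)$ is identifiable for every parent $s\in\PAr$, which is legitimate since each such $s$ precedes $r$ in the topological order. The essential point is that the mixture components appearing on the right-hand side of \eqref{identifying-conditional-density} are recoverable from these parent densities. Indeed, $s>r$ forces the coordinates observed under $r$ to be a subvector of those observed under $s$, so $p(\lms\mid\lob,R=s)$ is obtained from the (identifiable) $p(l\mid R=s)$ by marginalizing and conditioning,
\begin{align*}
p(\lms\mid\lob,R=s)=\frac{p(l\mid R=s)}{\int p(l\mid R=s)\,d\lms}\ .
\end{align*}
Since the coefficients $C^{s,r}(\lob)$ are identifiable by assumption, \eqref{identifying-conditional-density} now identifies $p(\lms\mid\lob,R=r)$, and multiplying by the directly identifiable observed density $p(\lob\mid R=r)$ gives $p(l\mid R=r)=p(\lob\mid R=r)\,p(\lms\mid\lob,R=r)$, closing the induction. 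The joint law then follows from $p(l,r)=P(R=r)\,p(l\mid R=r)$, as $P(R=r)$ is an observed cell probability. (The same recursion, re-expressed through selection odds, is precisely \eqref{identifying-Q}.)

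To secure the ``saturated'' half of the claim I would verify that the reconstructed object is a genuine law that places no restriction on the observed-data distribution. Nonnegativity of the rebuilt $p(\lms\mid\lob,R=r)$ follows from $C^{s,r}(\lob)\ge0$ together with the inductively valid nonnegativity of the parent densities, while integration to unity uses $\sum_{s\in\PAr}C^{s,r}(\lob)=1$:
\begin{align*}
\int p(\lms\mid\lob,R=r)\,d\lms=\sum_{s\in\PAr}C^{s,r}(\lob)\int p(\lms\mid\lob,R=s)\,d\lms=\sum_{s\in\PAr}C^{s,r}(\lob)=1\ .
\end{align*}
Because the construction never alters the observed marginals $P(R=r)$ and $p(\lob\mid R=r)$, any observed-data distribution is matched by exactly one full-data law in the model, which is the saturation statement.

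I expect the main obstacle to be careful bookkeeping of supports rather than any deep difficulty. One must guarantee $p(\lob\mid R=s)>0$ on the region where $p(\lms\mid\lob,R=r)$ is being defined so that the displayed conditional density is well posed, and one must track that the reconstructed densities across patterns share a common dominating measure so that every marginalization and mixture in the recursion is meaningful. Verifying that these support conditions propagate consistently down the topological order---a consequence of the nesting of observed coordinate sets enforced by $s>r$ in a regular graph---is the step that will demand the most attention.
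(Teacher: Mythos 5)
Your proposal is correct. One thing worth knowing: the paper never writes out a dedicated proof of this theorem (the appendix section reserved for the generalized assumptions is left empty), so the closest thing to an ``official'' argument is the density recursion embedded in the proof of Theorem \ref{efficiency-bound}, namely
\begin{align*}
p(l,r)=\sum_{s\in\PAr}O^{s,r}(\lob)\,C^{s,r}(\lob)\,p(l,s)
=p(l,1_d)\sum_{\Xi\in\Pi_r}\prod_{j=2}^{|\Xi|}O^{s_{j-1},s_j}(l^{[s_j]})\,C_{s_{j-1},s_j}(l^{[s_j]})\ ,
\end{align*}
which unrolls the parent-mixture assumption \eqref{identifying-conditional-density} into a closed-form sum over all paths from $1_d$ to $r$, each factor being identifiable from the observed data. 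Your topological induction on the conditional laws $p(l\mid R=s)$ is the same recursion expressed stepwise rather than in closed form, so the mathematical content coincides; the difference is what each formulation buys. The path-product formula gives an explicit reconstruction of $p(l,r)$ that the paper reuses directly for the inverse-propensity weights \eqref{eqn:chen-ipw} and the influence-function calculations, whereas your induction is the cleaner vehicle for identifiability per se: it isolates exactly where acyclicity and the condition $s>r$ enter (parents precede children in a topological order, and parent patterns observe a superset of $\lob$, so $p(\lms\mid\lob,R=s)$ is a legitimate marginal-conditional of an already-identified law), and it is the only one of the two arguments that explicitly verifies the saturation half of the claim --- nonnegativity from $C^{s,r}(\lob)\ge 0$ and normalization from $\sum_{s\in\PAr}C^{s,r}(\lob)=1$, with the observed marginals untouched. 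Your closing caveat about supports is also apt: the paper's positivity condition (\ref{assump-1B}) is stated only for the asymptotic theory, so a fully rigorous identifiability statement does need the kind of positivity bookkeeping you describe, e.g.\ $p(\lob,\PAr)>0$ wherever the Type-1 coefficients and the odds $O^{s,r}(\lob)$ are evaluated.
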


\begin{theorem}
    Assume that the propensity odds $Q^r(l)$ is modeled as in \eqref{identifying-Q} for every missing pattern $r$, then $\pi(l)$ is nonparametrically identifiable/saturated.
\end{theorem}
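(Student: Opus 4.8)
The plan is to show that the modeling equation \eqref{identifying-Q}, together with the boundary value at the complete-case pattern, pins down every propensity odds $Q^r(l)$ as an explicit combination of functionals of the observed-data law; identifiability of $\pi(l)$ then follows at once from the aggregation identity $\pi(l)=1/\sum_{r\in\calR}Q^r(l)$ recorded in \eqref{eqn:chen-ipw}. First I would note the base case: since $Q^{1_d}(l)=P(R=1_d\mid l)/P(R=1_d\mid l)\equiv 1$, the odds at the root is a known constant and is trivially identifiable. It therefore suffices to propagate identifiability downward through the graph via \eqref{identifying-Q}.

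The second step is to verify that the two multiplicative coefficients in \eqref{identifying-Q} are themselves identifiable functions of $\lob$. The mixture weights $C^{s,r}(\lob)$ are identifiable by assumption. For the odds factor I would rewrite
\begin{align*}
O^{s,r}(\lob)=\frac{P(R=r\mid\lob)}{P(R=s\mid\lob)}=\frac{P(R=r,\lob)}{P(R=s,\lob)},
\end{align*}
where the common marginal $P(\lob)$ cancels. The numerator is the observed joint law of $(R=r,\Lob)$ and is directly available. The denominator is obtained from the observed joint law of the pattern-$s$ subpopulation by marginalizing out the coordinates observed under $s$ but not under $r$; this marginalization is legitimate precisely because $s\in\PAr$ forces $s>r$ by regularity, so every coordinate observed in pattern $r$ is also observed in pattern $s$. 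Hence $O^{s,r}(\lob)$ is a functional of the observed-data distribution.

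Finally I would run a structural induction over the DAG. Because $1_d$ is the unique parentless vertex and every edge strictly lowers the pattern in the partial order, following parent links from any $r$ yields a strictly increasing chain that must terminate at $1_d$; thus every pattern is a descendant of the root and the recursion is well founded. Ordering $\calR$ topologically and assuming all parents $Q^s(l)$ have already been identified, \eqref{identifying-Q} exhibits $Q^r(l)$ as a finite sum of products of the identifiable factors $C^{s,r}(\lob)$, $O^{s,r}(\lob)$ and the identified $Q^s(l)$, hence identifiable; note that $Q^r$ may depend on coordinates missing under $r$, which is harmless since identifiability concerns the functional form, not pointwise observability. This identifies every $Q^r$ and therefore $\pi$. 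For the saturation claim I would simply inherit it from the preceding density-side theorem: since \eqref{identifying-conditional-density} implies \eqref{identifying-Q} with the same coefficients, the full-data laws satisfying the density assumption form a subset of those satisfying the odds assumption, so the latter model realizes at least every observed-data law the former does and thus imposes no testable restriction.

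The main obstacle is the identifiability of $O^{s,r}$, which is the single place where the regularity axiom $s>r$ is used essentially: without the nesting of observed coordinates the denominator $P(R=s,\lob)$ could not be recovered by marginalization from the observed law, and the whole recursion would break. The induction itself is then routine bookkeeping over the topological order, the only nontrivial ingredient being the well-foundedness guaranteed by $1_d$ being the unique source.
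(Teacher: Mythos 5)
Your proof is correct, and it is essentially the argument the paper intends: the paper states this theorem without an explicit proof (the appendix meant to cover the generalized assumptions is left empty), but its identification machinery --- the recursion \eqref{identifying-Q} unrolled into path products as in \eqref{eqn:chen-ipw}, with each factor $C^{s,r}(\lob)$ and $O^{s,r}(\lob)$ an identified function of the observed-data law --- is exactly the topological-order induction you carry out, with the regularity requirement $s>r$ licensing the marginalization that recovers $P(R=s,\lob)$, just as you note. Your treatment of saturation by inheritance from the density-side theorem is likewise consistent with the paper, which derives \eqref{identifying-Q} directly from \eqref{identifying-conditional-density} and treats the two as reformulations of the same identifying assumption.
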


One may recognize that there are infinitely many possible choices for mixture coefficients. In this paper, we focus on the following three types:
\begin{align*}
\textit{Type 1}:\frac{P(R=s\mid \lob)}{P(R\in\PAr\mid \lob)};\quad
\textit{Type 2}:\frac{P(R=s)}{P(R\in\PAr)};\quad
\textit{Type 3}:\textit{known constant}\ .
\end{align*}

The first type corresponds to the assumptions in \citet{chen2022pattern}, where the conditional density of missing variables for pattern $r$ is matched with that for the group containing all the parent patterns of $r$. The second type and the third type are common mixture coefficients of pattern-mixture models \citep{little1993pattern}. They are usually used especially when the researchers have prior knowledge of how the mixture density is constructed.

We can incorporate information about mixture coefficients into the pattern graphs, allowing us to understand how the densities of missing variables are identified. Patterns that have only one parent maintain a constant “mixture coefficient”, specifically 1, and can be abbreviated in the graph.

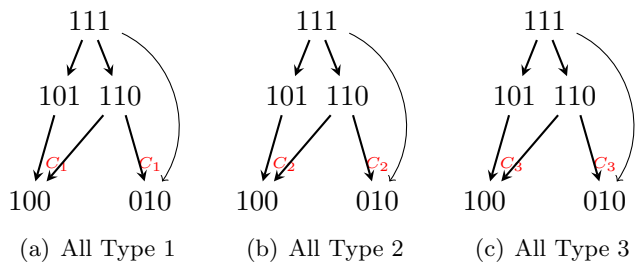
\begin{figure}[ht]
\centering
\subfigure[All Type 1]{
\begin{tikzpicture}[node distance=1cm, scale=0.5]
    \node(111){111};
    \node[below of = 111, xshift = 0.4cm](110){110};
    \node[below of = 111, xshift = -0.4cm](101){101};
    \node[below of = 110, xshift = 0.4cm, yshift = -0.4cm, label={[red]90:\tiny $C_1$}](010){010};
    \node[below of = 101, xshift = -0.4cm, yshift = -0.4cm, label={[red]75:\tiny $C_1$}](100){100};
    \path[line](111)--(110);
    \path[line](111)--(101);
    \path[line](110)--(010);
    \path[line](110)--(100);
    \path[line](101)--(100);
    \draw[->] (111) to[bend left = 50] (010);
\end{tikzpicture}
}
\subfigure[All Type 2]{
\begin{tikzpicture}[node distance=1cm, scale=0.5]
    \node(111){111};
    \node[below of = 111, xshift = 0.4cm](110){110};
    \node[below of = 111, xshift = -0.4cm](101){101};
    \node[below of = 110, xshift = 0.4cm, yshift = -0.4cm, label={[red]90:\tiny $C_2$}](010){010};
    \node[below of = 101, xshift = -0.4cm, yshift = -0.4cm, label={[red]75:\tiny $C_2$}](100){100};
    \path[line](111)--(110);
    \path[line](111)--(101);
    \path[line](110)--(010);
    \path[line](110)--(100);
    \path[line](101)--(100);
    \draw[->] (111) to[bend left = 50] (010);
\end{tikzpicture}
}
\subfigure[All Type 3]{
\begin{tikzpicture}[node distance=1cm, scale=0.5]
    \node(111){111};
    \node[below of = 111, xshift = 0.4cm](110){110};
    \node[below of = 111, xshift = -0.4cm](101){101};
    \node[below of = 110, xshift = 0.4cm, yshift = -0.4cm, label={[red]90:\tiny $C_3$}](010){010};
    \node[below of = 101, xshift = -0.4cm, yshift = -0.4cm, label={[red]75:\tiny $C_3$}](100){100};
    \path[line](111)--(110);
    \path[line](111)--(101);
    \path[line](110)--(010);
    \path[line](110)--(100);
    \path[line](101)--(100);
    \draw[->] (111) to[bend left = 50] (010);
\end{tikzpicture}
}
\caption{Mixture coefficients encoded in regular pattern graphs.}
\label{fig2-regular-graph}
\end{figure}

Our proposed method can handle the generalization if the mixture coefficients belong to the aforementioned types. For ease of exposition, we will first introduce the method with the particular choice of mixture coefficients as in \eqref{eqn:chen_mix}, while deferring the details related to the other two types to Appendix \ref{sec:method-general-assumptions}.

\section{The proposed method}
The proposed estimation of $\theta$ is motivated by \eqref{eqn:estimating}. The idea is to find appropriate weights $\hat{w}_i$'s and estimate $\theta$
by solving the weighted estimating equations:
\begin{align}\label{eqn:weighted_estimating_equations}
    \frac{1}{N}\sum_{i=1}^N \mathsf{1}_{R_i=1_d} \hat{w}_i \psi_\theta(L_i)=0.
\end{align}
From \eqref{eqn:estimating}, it is natural to choose the weight $\hat{w}_i$ as an estimator of $w_i := 1/\pi(L_i)$. The recursive form of inverse propensity \eqref{eqn:chen-ipw} allows us to construct an estimator of $w_i$ from estimators of $O^s$ for all missing patterns $s$. That is,
\begin{align}\label{eqn:recursive-weight}
	\hat{w}(L_i)
	=\sum_{r\in\calR}\hat{Q}^r(L_i)
	=\sum_{r\in\calR}\sum_{\Xi\in\Pi_r}\prod_{s\in\Xi}\hat{O}^s(L^{[s]}_i),
\end{align} 
where $\hat{Q}^r$ represents a generic estimator of $Q^r$ and $\hat{O}^s$ represents a generic estimator of $O^s$.

\subsection{Local estimation}
In this section, we focus on the appropriate estimator of $O^s$ and the corresponding weights described in \eqref{eqn:recursive-weight}. Recall that $O^r(l^{[r]})=P(R=r\mid l^{[r]})/P(R\in\PAr\mid l^{[r]})$, which focuses on pattern $r$ and its parents $\PAr$. The estimation can be done locally using the data from pattern $r$ and $\PAr$.

\subsubsection{Minimizing the entropy loss}
For instance, the estimation can be achieved by fitting a logistic regression \citep{chen2022pattern} with a binary outcome, where label $1$ refers to $R=r$ and label $0$ refers to $R\in\PAr$, and the feature/covariate as $\lob$. Fitting a logistic regression amounts to minimizing the entropy loss. Formally, for each missing pattern $r\neq 1_d$, we model the odds
\begin{align*}
O^r(\lob;\alphar)=\exp\left\{\Phir(\lob)\tr\alphar\right\}\ ,
\end{align*}
where $\Phir(\lob)=\left\{\phi^r_1(\lob),\ldots,\phi^r_{K_r}(\lob)\right\}$ are $K_r$ basis functions for the observed variables in pattern $r$. One may choose suitable basis functions depending on the observed variables and the number of observations in different patterns. The estimator of $\alphar$ is obtained by minimizing the empirical risk 
\begin{align*}
\frac{1}{N}\sum_{i=1}^N\left\{\mathsf{1}_{R_i\in\PAr}\log(1+O^r(\Lobi;\alphar))
+\mathsf{1}_{R_i=r}\log(1+O^r(\Lobi;-\alphar))\right\}\ .
\end{align*}
However, some propensity odds estimates may become extremely large and lead to an unstable estimation of $\theta$, even when an estimate of $P(R\in\PAr\mid l^{[r]})$ is small. 

\subsubsection{Minimizing the tailored loss}
An alternative approach is based on covariate balancing. One can show that for any measurable function $g$ of observed variables in pattern $r$,
\begin{align}\label{eqn:local-balance}
	\E\{\mathsf{1}_{R\in\PAr}O^r(\Lob)g(\Lob)\}
	=\E\{\mathsf{1}_{R=r}g(\Lob)\},
\end{align}
which constitute the balancing conditions. Covariate balancing approach achieves stable estimation by minimizing a choice of variability measure of $\hat{O}^r(L_i^{[r]})$ such that the empirical balancing conditions hold. 

For several standard problems, \citet{zhao2019covariate} shows that one can construct a tailored loss to encourage the empirical balancing conditions, and so the covariate balancing approach is essentially equivalent to (penalized) empirical risk minimization with respect to a tailored loss. We \citep{dong2024balancing} extend this idea and construct an approach to efficiently estimate $\theta$ under the CCMV assumption, which is a special case encoded in pattern graph Fig~\ref{fig1-CCMV}, where each missing pattern has only one parent pattern, $1_d$. 

The local part of a general pattern graph is very similar to the special one except the child pattern may have more than one part. So, it is natural to extend our previous work and derive the balancing weights in the following way. Note that the number of basis functions, $K_r$, is allowed to grow with sample size for flexible modeling. The estimator of $\alphar$ is obtained by minimizing the empirical tailored loss with penalization:
\begin{align}\label{penalized-tailored-loss}
\frac{1}{N}\sum_{i=1}^N\left\{\mathsf{1}_{R_i\in\PAr}O^r(\Lobi;\alphar)-\mathsf{1}_{R_i=r}\log O^r(\Lobi;\alphar)\right\}
+\lambda\sum_{k=1}^{K_r}t_k|\alphar_k|,
\end{align}
where the tuning parameter $\lambda\ge0$ controls the degree of penalization and can be chosen by a cross-validation procedure. The $l_1$-norm penalty is weighted by $t_k$ which represents the imbalance tolerance (or importance). Smaller $t_k$ should be assigned to the basis functions that are important to approximate the desired functions. 

Gathering the estimators of $O^s$ estimated by the tailored loss for all missing patterns, we can construct the weights $\hat{w}(L_i)$ through \eqref{eqn:recursive-weight}. By solving the weighted estimating equations \eqref{eqn:weighted_estimating_equations}, we achieve the estimator of $\theta$ and denote it as $\theta_{\mathrm{local}}$.

\textit{Remark.} We can also apply the similar strategy to the estimation of $\alphar$ when the entropy loss is used. So, we minimize the empirical entropy loss with penalization 
\begin{align}\label{penalized-entropy-loss}
	\frac{1}{N}\sum_{i=1}^N\left\{\mathsf{1}_{R_i\in\PAr}\log(1+O^r(\Lobi;\alphar))
	+\mathsf{1}_{R_i=r}\log(1+O^r(\Lobi;-\alphar))\right\}
	+\lambda\sum_{k=1}^{K_r}t_k|\alphar_k|\ .
\end{align}
Gathering the estimators of $O^s$ for all missing patterns, we can construct the weights, and solve the weighted estimating equations \eqref{eqn:weighted_estimating_equations}. We denote the estimator of $\theta$ using weights estimated by entropy loss as $\theta_{\mathrm{entropy}}$.

\subsection{Drawbacks of local estimation}
However, no matter which loss function is used, the local estimation has three drawbacks. Firstly, the errors could accumulate and escalate due to multiplication (See \eqref{eqn:recursive-weight}) when local estimator $\hat{O}^r$ are used to construct the weights $\hat{w}$. 

Secondly, only the evaluation of weights on the complete case, $\mathsf{1}_{R_i=1_d}\hat{w}(L_i)$, shows up in the weighted estimating equations and affects the final estimate. However, the aforementioned optimization problem is trained by data restricted to missing pattern $r$ and its parent set $\PAr$. If $1_d\notin\PAr$, we need to extrapolate the propensity odds model to achieve the evaluations $\mathsf{1}_{R_i=1_d}\hat{O}^r(\Lobi)$, which is assembled to construct $\mathsf{1}_{R_i=1_d}\hat{w}(L_i)$.The extrapolation process may introduce uncertainty and a higher risk of producing extremely large estimates. 

Lastly, \citet{chen2022pattern} claims that $\theta_{\mathrm{entropy}}$ is consistent and asymptotically normal. However, it does not achieve the asymptotic efficiency. The augmented method (AIPW) is efficient but requires repeated sampling and is computationally demanding. The consistency and asymptotic efficiency of $\theta_{\mathrm{local}}$ are established under the CCMV assumption \citep{dong2024balancing}, which is a special case of \eqref{eqn:chen-propensity} where any missing pattern has only one parent $1_d$.  It requires further study to extend the asymptotic properties to the general case of \eqref{eqn:chen-propensity}.

\subsection{Sequential estimation using balancing method}\label{sec:sequential}
To address these potential issues, we propose the sequential balancing approach. Instead of considering the balancing conditions \eqref{eqn:local-balance} by $O^r$, which focus locally on the balance between $r$ and $\PAr$, we examine the balancing conditions by $Q^r(\lob)=P(R=r\mid\lob)/P(R=1_d\mid\lob)$, which connects pattern $r$ with complete cases $1_d$. Recall the recursive form \eqref{eqn:chen-ipw} that $Q^r(l)=O^r(\lob)\times \sum_{s\in\PAr}Q^s(l)$. Therefore, $Q^r$ can be estimated sequentially. The recursive form encourages the following balancing conditions:
\begin{align}\label{eqn:sequential-balance}
	\E\{\mathsf{1}_{R=r}g(\Lob)\}
    =\E\{\mathsf{1}_{R=1_d}Q^r(L)g(\Lob)\}
	=\E\left\{\mathsf{1}_{R=1_d}O^r(\Lob)\left[\sum_{s\in\PAr}Q^s(L)\right]g(\Lob)\right\}\ .
\end{align}

Suppose that we have estimators $\hat{Q}^s(l)$ for $Q^s(l)$ for all $s\in\PAr$. Abbreviate the summation $\sum_{s\in\PAr}\hat{Q}^s(l)$ by $\hat{Q}^{\PAr}(l)$. To estimate $Q^r$, one can seek the estimator of $O^r$, denoted by $\hat{O}^r$, that encourages the empirical version of \eqref{eqn:sequential-balance}, which equate the empirical average over pattern $r$ and the reweighted average over complete cases $1_d$:
\begin{align}\label{eqn:empirical-reweighted-balance}
	\sum_{i=1}^N\mathsf{1}_{R_i=r}g(\Lobi)
    =\sum_{i=1}^N\mathsf{1}_{R_i=1_d}\hat{O}^r(\Lobi)\hat{Q}^{\PAr}(L_i)g(\Lobi)\ .
\end{align}

In Appendix \ref{sec:sequential-balance}, we proved that minimizing the following sequential balancing loss \eqref{seq_loss} imposes the empirical balance \eqref{eqn:empirical-reweighted-balance}. Let
\begin{align}
\calLr\{O^r(\lob;\alphar),R\}
=\mathsf{1}_{R=1_d}O^r(\lob;\alphar)\hat{Q}^{\PAr}(l)
-\mathsf{1}_{R=r}\log O^r(\lob;\alphar)\ .
\label{seq_loss}
\end{align}

The estimator of $\alphar$, denoted as $\hatalphar$, is obtained by minimizing the empirical sequential balancing loss with penalization:
\begin{align}\label{penalized-seq-loss}
\calLr_\lambda(\alphar)=
\frac{1}{N}\sum_{i=1}^N\calLr\{O^r(\Lobi;\alphar),R\}
+\lambda\sum_{k=1}^{K_r}t_k|\alphar_k|\ .
\end{align}

Then, gathering the estimators $\hatalphar$ for all missing patterns, we can construct the propensity odds estimates $O^r(\Lobi;\hatalphar)$ and $\hat{Q}^r(L_i)$, and weights $\hat{w}(L_i)=\sum_{r\in\calR}\hat{Q}^r(L_i)$. By solving the weighted estimating equations \eqref{eqn:weighted_estimating_equations}, we achieve the estimator of $\theta$ and denote it as $\theta_{\mathrm{seq}}$.

Formally, we propose the following sequential estimation algorithm. 
\begin{algorithm}[ht]
	\caption{Sequential estimation}
	\label{alg:seq}
	\begin{algorithmic}
		\STATE Note that $Q^{1_d}(l)=1$. Run the following steps for each $r\in\calR$ where $d_r=n-1$. Next, repeat the process for each $r\in\calR$ where $d_r=n-2$, and so on, until process each $r\in\calR$ where $d_r=1$. 
		\STATE {\bfseries Input:} The propensity odds estimates on complete cases, $\{\mathsf{1}_{R_i=1_d}\hat{Q}^{\PAr}(L_i)\}_{i=1}^N$.
		\STATE {\bfseries Step 1:} Solve the optimization using sequential loss function \eqref{penalized-seq-loss}, and obtain the model parameter $\hatalphar$.
		\STATE {\bfseries Step 2:} Obtain the estimates $\{\mathsf{1}_{R_i=1_d}O^r(L_i;\hatalphar)\}_{i=1}^N$.
		\STATE {\bfseries Step 3:} Construct the estimates $\{\mathsf{1}_{R_i=1_d}\hat{Q}^r(L_i)\}_{i=1}^N$ by recursive form \eqref{eqn:chen-ipw}. 
        \STATE {\bfseries Output:} When the above estimation is done, obtain $\theta_{\mathrm{seq}}$ by solving the weighted estimating equations with $\hat{w}(L_i)=\sum_{r\in\calR}\hat{Q}^r(\Lobi)$.
	\end{algorithmic}
\end{algorithm}

The advantage of the sequential estimation procedure is apparently in its structure. The estimators $\hat{Q}^s$ with $s\in\PAr$ are utilized for the estimation of $Q^r$. The multiplication terms are naturally controlled in the loss minimization procedure. Additionally, the extrapolation issue is addressed since the data in patterns $r$ and $1_d$ are used to fit the propensity model. So, we do not extrapolate the model for estimation. In the subsequent section, we will show that the proposed estimator of propensity odds is consistent and the resulting estimator of $\theta$ is consistent and asymptotically efficient.

\section{Asymptotic properties}
\label{s:asymptotic}
In this section, we first investigate the asymptotic variance lower bound for all regular estimators of $\theta$. We then develop the asymptotic normality and efficiency of the proposed estimators.

Recall that $\theta_0$ is the unique solution to $\E\{\psi_\theta(L)\}=0$. The concept ``regular estimator'' is defined according to \citet{begun1983information} and \citet{ibragimov2013statistical}. We require the following set of assumptions to establish the asymptotic theory.
\begin{assumption}\label{assump1}$ $ 
	\begin{enumerate}[label={\textbf{~\Alph*:}},ref={Assumption~\theassumption.\Alph*},leftmargin=1cm]
		\item\label{assump-1A}
		The estimating function $\psi(L,\theta)$ is differentiable with respect to $\theta$ with derivative $\dot{\psi}_\theta(L)$. Also, $\E\{\psi_\theta(L)\}$ has the unique root $\theta_0$ and is differentiable at $\theta_0$ with nonsingular derivative $D_{\theta_0}$.
		\item\label{assump-1B}
		There exists a constant $\delta_0>0$ such that $P(R=1_d\mid\lob)\ge\delta_0$ for any $r\in\calR$ and so $1_d\in\calR$.\ .
	\end{enumerate}
\end{assumption}
\ref{assump-1A} is a standard regularity assumption for Z-estimation. \ref{assump-1B} ensures that complete cases are available for analysis. Then, we claim the following efficiency bound under the proposed identifying assumptions \eqref{eqn:chen-propensity}.
\begin{theorem}\label{efficiency-bound}
	Under Assumption \ref{assump1}, the asymptotic variance lower bound for all regular estimators of $\theta_0$ is $D_{\theta_0}^{-1}V_{\theta_0}D_{\theta_0}^{-1\tr}$, where $V_\theta=\E\{F_\theta(L,R)F_\theta(L,R)\tr\}$ and
	\begin{align*}
		&F_\theta(L,R)
		=\mathsf{1}_{R=1_d}\left\{1+\sum_{\Xi\in\Pi}\prod_{s\in\Xi}O^s(L^{[s]})\right\}\psi_\theta(L)\\
		&+\sum_{1_d\neq r\in\calR}\sum_{\Xi\in\Pi_r}\sum_{s\in\Xi}F_\theta^{\Xi,s}(L,R)\ .
	\end{align*}
	Under the identifying assumptions \eqref{eqn:chen-propensity}, $F_\theta^{\Xi,s}(L,R)=\mathsf{1}_{R=s}u_\theta^s(l^{[s]})-\mathsf{1}_{R\in\PAs}O^s(l^{[s]})u_\theta^s(l^{[s]})$ where $u_\theta(\lob)=\E\{\psi_\theta(L)\mid\Lob=\lob,R=r\}$.
\end{theorem}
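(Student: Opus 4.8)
The plan is to establish the bound by the classical projection argument for semiparametric models (\citet{begun1983information}, \citet{ibragimov2013statistical}): characterize the observed-data tangent space, reduce the parameter $\theta_0$ to the smooth functional $\beta_0=\E\{\psi_{\theta_0}(L)\}$, and then verify by direct computation that the stated $F_{\theta_0}$ is the efficient influence function for $\beta_0$. I would describe the model through two variation-independent pieces: the full-data law $p(l)$, left unrestricted so that $p(l,r)$ is saturated, and the collection of selection odds $\{O^s\}_{s\neq 1_d}$, which by \eqref{eqn:chen-propensity} are constrained to be functions of the observed coordinates $l^{[s]}$ only. For a regular parametric submodel through the truth, a mean-zero perturbation direction $a(l)$ of $p(l)$ yields the observed-data score $\E\{a(L)\mid L^{[R]},R\}$, spanning a subspace $\mathcal T_F$, while a perturbation $O^s\mapsto O^s\exp\{\epsilon\,h^s(l^{[s]})\}$ yields a selection score, and these span $\mathcal T_{\mathrm{sel}}$; the tangent space is $\mathcal T=\mathcal T_F\oplus\mathcal T_{\mathrm{sel}}$.

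Next I would reduce to $\beta_0$. Differentiating the defining identity $\E_\epsilon\{\psi_{\theta(\epsilon)}(L)\}=0$ along any submodel and invoking the nonsingular derivative $D_{\theta_0}$ of \ref{assump-1A}, the pathwise derivative of $\theta$ equals $-D_{\theta_0}^{-1}\E\{\psi_{\theta_0}(L)a(L)\}$, which depends on the submodel only through the full-data direction $a$ (the selection perturbation leaves $p(l)$, and hence $\E\{\psi_{\theta_0}(L)\}$, unchanged). Consequently the efficient influence function for $\theta_0$ is $-D_{\theta_0}^{-1}$ times that for $\beta_0$, and the variance bound is $D_{\theta_0}^{-1}V_{\theta_0}D_{\theta_0}^{-1\intercal}$ as soon as $F_{\theta_0}$ is shown to be the efficient influence function for $\beta_0$.

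It then remains to verify three properties of $F_{\theta_0}$. First, that it is centered: using \eqref{eqn:chen-ipw} one rewrites the leading factor as $\mathsf{1}_{R=1_d}\{1+\sum_{\Xi\in\Pi}\prod_{s\in\Xi}O^s(l^{[s]})\}=\mathsf{1}_{R=1_d}\sum_{r\in\calR}Q^r(L)=\mathsf{1}_{R=1_d}/\pi(L)$, so the leading term is the inverse-probability term with mean $\E\{\psi_{\theta_0}(L)\}=0$, while each augmentation term $F_{\theta_0}^{\Xi,s}$ is centered by the local balancing identity \eqref{eqn:local-balance} applied with $g=u_{\theta_0}^s$. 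Second, the gradient conditions: for every full-data direction $a$ one must show $\E\{F_{\theta_0}\,\E\{a\mid L^{[R]},R\}\}=\E\{\psi_{\theta_0}(L)a(L)\}$, and for every selection direction $h^s$ one must show $\E\{F_{\theta_0}\,S_{\mathrm{sel}}\}=0$. Third, membership $F_{\theta_0}\in\overline{\mathcal T}$, which I would establish by exhibiting $F_{\theta_0}$ explicitly as a sum of an element of $\mathcal T_F$ and the augmentation blocks $\sum_{\Xi\in\Pi_r}\sum_{s\in\Xi}F_{\theta_0}^{\Xi,s}\in\mathcal T_{\mathrm{sel}}$.

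The main obstacle is the selection orthogonality in the second check together with the path-and-vertex bookkeeping it requires. Differentiating $\log Q^r=\log\sum_{\Xi\in\Pi_r}\prod_{s\in\Xi}O^s(l^{[s]})$ along $O^s\mapsto O^s\exp\{\epsilon h^s\}$ produces, for each path $\Xi\in\Pi_r$, a sum over its vertices $s\in\Xi$ of $h^s(l^{[s]})$ weighted by $\prod_{s'\in\Xi}O^{s'}$—precisely the index set $\sum_{\Xi\in\Pi_r}\sum_{s\in\Xi}$ appearing in $F_\theta$. I would match these selection-score contributions against the augmentation terms $F_{\theta_0}^{\Xi,s}=\mathsf{1}_{R=s}u_{\theta_0}^s(l^{[s]})-\mathsf{1}_{R\in\PAs}O^s(l^{[s]})u_{\theta_0}^s(l^{[s]})$ and collapse them using \eqref{eqn:local-balance} and the definition $u_{\theta_0}^s(l^{[s]})=\E\{\psi_{\theta_0}(L)\mid L^{[s]}=l^{[s]},R=s\}$, so that the inner product of the inverse-probability term with each selection direction is cancelled exactly by the corresponding augmentation block. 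Propagating this cancellation through the recursion $Q^r=O^r\sum_{s\in\PAr}Q^s$ of \eqref{eqn:chen-ipw}—that is, verifying the telescoping along every path is consistent across the whole DAG—is where the real work lies; the first-check and gradient-in-$a$ computations are routine once this structure is in place.
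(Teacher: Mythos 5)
Your route is genuinely different from the paper's. The paper parametrizes the \emph{observed-data} law directly through the pattern-mixture factorization $f_\beta(\lob,r)=\prod_{s\in\calR}\{P_\beta(R=s)f_\beta(l^{[s]}\mid R=s)\}^{\mathsf{1}_{r=s}}$, reconstructs the full-data density from these components via the path recursion implied by the identifying assumptions, and then \emph{derives} the influence function constructively: it differentiates $\theta(\beta)$ along an arbitrary parametric submodel, expands $\E\{\psi_\theta(L)\,\partial\log f_\beta(L)/\partial\beta\}$ path-by-path and vertex-by-vertex, reads off $F_\theta$ by matching each piece against a score component, checks that $F_\theta$ has mean zero and lies in the tangent set, and invokes Theorem 3.1 of \cite{newey1990semiparametric}. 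You instead parametrize by the variation-independent pair consisting of the full-data law and the odds $\{O^s\}$, and propose to \emph{verify} that the stated $F_{\theta_0}$ is a gradient. That strategy is legitimate here precisely because the observed-data model is saturated (the identifiability theorems of Section 2.3), and several of your steps are correct as sketched: the reduction through $D_{\theta_0}$, the centering of the augmentation blocks via \eqref{eqn:local-balance} with $g=u_{\theta_0}^s$, and the gradient-in-$a$ condition, which indeed follows by conditioning on $L^{[s]}$, using \eqref{eqn:chen-propensity} to replace $\E\{a\mid L^{[s]},R\in\PAs\}$ by $\E\{a\mid L^{[s]},R=s\}$, and cancelling $O^s$ against the ratio of selection probabilities. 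What each approach buys: the paper's derivation needs no a priori guess of $F_\theta$ and handles all three types of mixture coefficients in one computation; yours makes the orthogonality/double-robustness structure explicit and disposes of the easy checks cleanly.

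Two genuine gaps remain, however. First, the orthogonality of $F_{\theta_0}$ to every selection score is not a technical afterthought but the entire content of the theorem: the inverse-probability term $\mathsf{1}_{R=1_d}\psi_{\theta_0}(L)/\pi(L)$ is \emph{not} orthogonal to odds perturbations by itself, and showing that the triple sum of augmentation blocks cancels its correlation requires exactly the path-and-vertex bookkeeping that occupies most of the paper's proof (there it appears as the computation of the terms $\mu_{\Xi,k}$ and the verification that all probability-score contributions telescope to zero across the DAG). Saying the blocks "cancel exactly" restates the claim; until that computation is executed, your argument establishes only that $F_{\theta_0}$ is centered and is a gradient in the full-data directions. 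Second, your membership step will not work as described: neither the inverse-probability term nor any individual block $F_{\theta_0}^{\Xi,s}$ has the form of a projected full-data score $\E\{a(L)\mid L^{[R]},R\}$ or of an odds-perturbation score, so $F_{\theta_0}$ cannot be exhibited as an exact sum of an element of $\mathcal{T}_F$ and elements of $\mathcal{T}_{\mathrm{sel}}$. The repair is immediate and makes the step unnecessary: by saturation the closed tangent space is the whole mean-zero $L_2$ space of the observed data, so any centered square-integrable candidate, in particular $F_{\theta_0}$, lies in it --- which is essentially how the paper closes its own argument.
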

The detailed proof is in the Appendix \ref{sec:proof-efficiency-bound}.

Now, we consider the weights, $\hat{w}(L_i)=\sum_{r\in\calR}\hat{Q}^r(\Lobi)$ obtained from Algorithm \ref{alg:seq}, and construct the weighted estimator of $\E\{\psi_\theta(L)\}$: 
\begin{align*}
\hat{\bbP}_N\psi_\theta=\frac{1}{N}\sum_{i=1}^N\left\{\mathsf{1}_{R_i=1_d}\hat{w}(L_i)\psi_\theta(L_i)\right\},
\end{align*} 
The resulting estimator of $\theta_0$ is the solution to $\hat{\bbP}_N\psi_\theta=0$. Denote it by $\hat{\theta}_N$. Under mild conditions, we show that $O^r(\lob;\hatalphar)$ is consistent, $\hat{\bbP}_N\psi_\theta$ is asymptotically normal for each $\theta$ in a compact set $\Theta\subset\bbR^q$, and $\hat{\theta}_N$ is consistent and efficient. 
\begin{theorem}\label{theta}
Suppose that Assumptions \ref{assump1} and \ref{assump2}--\ref{assump4} hold. Then
\begin{align*}
\hat{\theta}_N\xrightarrow{P}\theta_0
\end{align*}
and
\begin{align*}
N^{\frac{1}{2}}(\hat{\theta}_N-\theta_0)\overset{d}{\to}N(0,D_{\theta_0}^{-1}V_{\theta_0}D_{\theta_0}^{-1\tr})\ ,
\end{align*}
where $D_{\theta_0}^{-1}V_{\theta_0}D_{\theta_0}^{-1\tr}$ is the asymptotic variance bound in Theorem \ref{efficiency-bound}. Therefore, $\hat{\theta}_N$ is semiparametrically efficient. 
\end{theorem}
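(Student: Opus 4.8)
\textbf{Proof strategy for Theorem \ref{theta}.}
The plan is to treat $\hat{\theta}_N$ as a $Z$-estimator with an estimated nuisance weight and invoke the standard master theorem for such estimators. It then suffices to establish two things: (i) consistency, $\hat{\theta}_N\xrightarrow{P}\theta_0$; and (ii) an asymptotically linear expansion of the estimating equation at the truth, $N^{1/2}\hat{\bbP}_N\psi_{\theta_0}=N^{-1/2}\sum_{i=1}^N F_{\theta_0}(L_i,R_i)+o_P(1)$, whose influence summand is \emph{exactly} the $F_{\theta_0}$ of Theorem \ref{efficiency-bound}. Granting (i) and (ii), the classical linearization $N^{1/2}(\hat{\theta}_N-\theta_0)=-D_{\theta_0}^{-1}N^{1/2}\hat{\bbP}_N\psi_{\theta_0}+o_P(1)$ through the nonsingular derivative $D_{\theta_0}$ of \ref{assump-1A}, combined with the central limit theorem, yields the limit $N(0,D_{\theta_0}^{-1}V_{\theta_0}D_{\theta_0}^{-1\tr})$; semiparametric efficiency is then immediate because this matrix is precisely the bound of Theorem \ref{efficiency-bound}.

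For consistency I would first handle the nuisance. Processing the graph by decreasing $d_r$ as in Algorithm \ref{alg:seq}, I would show by induction down the graph that each $O^r(\lob;\hatalphar)$ is (uniformly) consistent for $O^r$, so that the recursively assembled $\hat{Q}^r$ of \eqref{eqn:chen-ipw} and the weight $\hat{w}=\sum_{r\in\calR}\hat{Q}^r$ converge in probability to $w=1/\pi$. The induction is clean because the sequential loss \eqref{penalized-seq-loss} at step $r$ depends on the parents only through $\hat{Q}^{\PAr}$, consistent by the inductive hypothesis; convexity of the loss together with the regularity and rate conditions assumed in the theorem then pins down $\hatalphar$. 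Given weight consistency, a uniform law of large numbers over the compact $\Theta$ gives $\hat{\bbP}_N\psi_\theta\to\E\{\psi_\theta(L)\}$ uniformly in $\theta$, and since $\theta_0$ is the unique well-separated root (\ref{assump-1A}), standard $Z$-estimator theory delivers $\hat{\theta}_N\xrightarrow{P}\theta_0$.

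The crux is part (ii). I would expand $N^{1/2}\hat{\bbP}_N\psi_{\theta_0}$ about the true weights, writing it as an oracle term $N^{-1/2}\sum_i\mathsf{1}_{R_i=1_d}w(L_i)\psi_{\theta_0}(L_i)$ --- whose summand reproduces the first line of $F_{\theta_0}$ via $w=1+\sum_{\Xi\in\Pi}\prod_{s\in\Xi}O^s$ --- plus a correction capturing the estimation of the $O^r$. Because $\hat{w}$ is built as a sum over paths of products of nuisances (see \eqref{eqn:recursive-weight}), a product-rule (first-order) expansion of each $\prod_{s\in\Xi}\hat{O}^s$ around $\prod_{s\in\Xi}O^s$ produces exactly the combinatorial structure $\sum_{1_d\neq r}\sum_{\Xi\in\Pi_r}\sum_{s\in\Xi}$ seen in $F_{\theta_0}$: differentiating the product singles out one factor $s$ at a time. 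The key device converting each such term into the stated residual is the stationarity (first-order) condition for $\hatalphar$, which is precisely the empirical balancing identity \eqref{eqn:empirical-reweighted-balance}: differentiating \eqref{penalized-seq-loss} and using the population balance \eqref{eqn:sequential-balance} projects $\psi_{\theta_0}$ onto the observed coordinates of pattern $s$, so the $s$-contribution reduces to $\mathsf{1}_{R=s}u_{\theta_0}^s(l^{[s]})-\mathsf{1}_{R\in\PAs}O^s(l^{[s]})u_{\theta_0}^s(l^{[s]})=F_{\theta_0}^{\Xi,s}$. Summing these along each path and over all patterns assembles the second line of $F_{\theta_0}$.

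I expect the main obstacle to be the remainder control accompanying this linearization, which is twofold. First, the recursive multiplication in \eqref{eqn:recursive-weight} generates higher-order cross terms among the $\hat{O}^s-O^s$ along a path; I must show these are $o_P(N^{-1/2})$, which is a rate-doubling (Neyman-orthogonality) requirement that the \emph{product} of nuisance estimation errors vanish faster than $N^{-1/2}$, a phenomenon supplied by the balancing conditions together with the rate hypotheses of the theorem. Second, since $K_r$ is allowed to grow with $N$, the stochastic-equicontinuity step must accommodate a sieve of increasing dimension; I would verify a Donsker-type / $L_2$-continuity condition so that the empirical-process remainder $(\hat{\bbP}_N-\bbP)\{\psi_{\theta_0}(\hat{w}-w)\}=o_P(N^{-1/2})$. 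Once both remainders are negligible, the expansion collapses to $N^{-1/2}\sum_i F_{\theta_0}(L_i,R_i)+o_P(1)$, establishing (ii) and completing the proof.
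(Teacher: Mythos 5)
Your overall architecture is the same as the paper's: consistency via consistency of the estimated weights plus a uniform law of large numbers and the identifiability condition \ref{assump-4A} (the paper's Lemma \ref{uniformbound}); asymptotic linearity of the weighted estimating equation at $\theta_0$ with influence summand $F_{\theta_0}$ (the paper's Theorem \ref{psi}, proved by a decomposition into terms $S^r_{\theta,1},\dots,S^r_{\theta,4}$, with the balancing/stationarity conditions killing the basis-function term and bracketing-entropy maximal inequalities killing the rest); and then a delta-method step through the nonsingular $D_{\theta_0}$. However, two steps that you treat as routine are exactly where the paper does real work, and as written your sketch has gaps there.

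First, ``(i) and (ii) suffice'' is not correct: a master theorem for Z-estimators also requires stochastic equicontinuity in $\theta$ of the criterion, and here the criterion $\theta\mapsto\hat{\bbP}_N\psi_\theta$ is a \emph{weighted} empirical measure with estimated weights, so no off-the-shelf theorem applies. The paper proves this hypothesis separately: the Donsker property of $\{\psi_{\theta,j}:\theta\in\Theta\}$ (via \ref{assump-3C}) gives the unweighted equicontinuity \eqref{equicontinuity}, and the estimated-weight correction $\sqrt{N}\bigl(\hat{\bbP}_N\psi_{\hat{\theta}_N}-\bbP_N\psi_{\hat{\theta}_N}-\hat{\bbP}_N\psi_{\theta_0}+\bbP_N\psi_{\theta_0}\bigr)$ is decomposed and shown to vanish using Lemmas \ref{S1}--\ref{S3} \emph{uniformly in} $\theta$ (so they can be applied at the random $\hat{\theta}_N$) together with two additional lemmas, \ref{S5} and \ref{S6}, about $\psi_{\hat{\theta}_N}-\psi_{\theta_0}$ and $u^r_{\hat{\theta}_N}-u^r_{\theta_0}$; this yields \eqref{weighted-equicontinuity}. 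Your plan needs this third ingredient stated and proved, not just the Donsker-type control of the remainder at the fixed point $\theta_0$.

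Second, your term-by-term identification of the nuisance-correction terms with $F^{\Xi,s}_{\theta_0}=\mathsf{1}_{R=s}u^s_{\theta_0}-\mathsf{1}_{R\in\PAs}O^s u^s_{\theta_0}$ does not follow from the stationarity conditions you invoke. The first-order condition of the sequential loss \eqref{penalized-seq-loss} is the balance \eqref{eqn:empirical-reweighted-balance} between pattern $s$ and the \emph{reweighted complete cases}: it involves the indicators $\mathsf{1}_{R=s}$ and $\mathsf{1}_{R=1_d}$, never $\mathsf{1}_{R\in\PAs}$, since the sequential estimator never touches data from the parent patterns. Consequently the linearization naturally produces the ``collapsed'' influence function that the paper actually derives in Theorem \ref{psi}, namely $F_\theta(L,R)=\mathsf{1}_{R=1_d}\sum_{r\in\calR}Q^r(L)\{\psi_\theta(L)-u^r_\theta(\Lob)\}+\sum_{r\in\calR}\mathsf{1}_{R=r}u^r_\theta(\Lob)-\E\{\psi_\theta(L)\}$, whose terms are indexed only by $\{R=r\}$ and $\{R=1_d\}$. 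That this has the same variance $V_{\theta_0}$ as the path-sum expression in Theorem \ref{efficiency-bound} is a separate algebraic (telescoping) identity, which your sketch silently assumes. To complete your argument you should either derive the collapsed form directly, as the paper does, or add the proof that the two expressions for $F_{\theta_0}$ coincide.
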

The proof is given in the Appendix.

\section{Simulation}
\label{s:simulation}
A simulation study is conducted to evaluate the finite-sample performance of the proposed estimators. We designed a missing mechanism that can be represented by the following pattern graph (Figure \ref{graph-sim}). We simulated 1,000 independent data sets, each of size $N$=1,000, where $X_j,\ j=1,2,3,4$, are generated independently from a truncated standard normal distribution with support $[-3,3]$. We considered a logistic regression model $\mathrm{logit}\{P(Y=1\mid X)\}=\theta_0+\theta_1X_1+\theta_2X_2+\theta_3X_3+\theta_4X_4$ where the true coefficients $\theta_0=(3,-2,1,2,-1)$ are the parameters of interest. We generated eight non-monotone response patterns where any variable could be missing. Denote each response pattern by the corresponding binary vector. So, the set of all possible missing patterns is $\calR=\{11111, 01111, 10111, 11110, 11001, 10110, 11010, 11000\}$. The response patterns are generated from a multinomial distribution with the probabilities $P(R=r\mid l)$ calculated from the recursive form \eqref{eqn:chen-ipw} where propensity odds $O^r(\lob)$ are polynomials of observed variables with degrees up to four.

\textsc{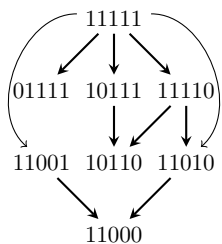
\begin{figure}[ht]
\centering
\begin{tikzpicture}[node distance=1.2cm, every node/.style={scale=0.8}]
    \node(1d){11111};
    \node[below of = 1d, xshift = -1.2cm](r1){01111};
    \node[below of = 1d](r2){10111};
    \node[below of = 1d, xshift = 1.2cm](r3){11110};
    \node[below of = r2, xshift = -1.2cm](s1){11001};
    \node[below of = r2](s2){10110};
    \node[below of = r2, xshift = 1.2cm](s3){11010};
    \node[below of = s2](t){11000};
    \path[line](1d)--(r1);
    \path[line](1d)--(r2);
    \path[line](1d)--(r3);
    \path[line](r2)--(s2);
    \path[line](r3)--(s2);
    \path[line](r3)--(s3);
    \path[line](s1)--(t);
    \path[line](s3)--(t);
    \draw[->] (1d) to[bend right = 70] (s1);
    \draw[->] (1d) to[bend left = 70] (s3);
\end{tikzpicture}
\caption{A regular pattern graph for simulation}
\label{graph-sim}
\end{figure}}

We first analyzed the simulated data with the full dataset (Full), which is the ideal case with no missingness. We then analyzed the data in the complete case pattern (Complete-case), for which data in all missing patterns $r\neq 1_d$ are discarded, and an unweighted analysis is used for the remaining data. Next, we considered the inverse propensity weighting methods with the true inverse propensity weights (True-weight). We also examined the performance of the estimators based on the estimated propensity odds using the different loss functions (Entropy, Local, Sequential). We model the propensity odds with basis functions $\Phir(\lob)$ where six splines of degrees up to four are chosen for each continuous variable, and a binary indicator function is chosen for discrete variables. Given the propensity odds estimators obtained from minimizing the penalized empirical loss \eqref{penalized-entropy-loss}, \eqref{penalized-tailored-loss} and \eqref{penalized-seq-loss}, we construct the estimators $\hat{\pi}(l)$ for $\pi(l)=P(R=1_d\mid l)$. So, $\theta_{\mathrm{entropy}}$, $\theta_{\mathrm{local}}$ and $\theta_{\mathrm{seq}}$ can be obtained by solving the weighted estimating equation: $N^{-1}\sum_{i=1}^N \mathsf{1}_{R_i=1_d}\psi_\theta(L_i)/\hat{\pi}(L_i)=0$ with corresponding loss.
  
The biases and mean squared errors of each coefficient are shown in Table \ref{simulation}. We notice that the local estimations (both Entropy and Local) fail in around $5\%$ dataset under the above setting. The sequential estimation provides smaller errors than the other two IPW estimations. It is expected since sequential estimation not only encourages the balance of observed variables but also alleviates the extrapolation issue. 

\begin{table*}[t]
	\caption{Results of the simulation study based on 1000 replications.}
	\label{simulation}
	\begin{center}
		\begin{tabular}{lcccccccccccc}
			\toprule
			\multicolumn{1}{l}{Method} & \multicolumn{6}{c}{Bias} & \multicolumn{6}{c}{MSE}\\
            \cmidrule(lr){2-7} \cmidrule(lr){8-13}
			& $\theta_1$ & $\theta_2$ & $\theta_3$ & $\theta_4$ & $\theta_5$ & $\|\cdot\|_1$ & $\theta_1$ & $\theta_2$ & $\theta_3$ & $\theta_4$ & $\theta_5$ & $\|\cdot\|_2$\\
			\hline
			Full & 0.04 & -0.03 & 0.02 & 0.03 & -0.01 & 0.67 
            & 0.05 & 0.03 & 0.02 & 0.03 & 0.02 & 0.15 \\
			CC & 0.79 & -0.14 & 0.03 & 0.07 & -0.34 & 2.09 
            & 0.90 & 0.19 & 0.12 & 0.15 & 0.25 & 1.61 \\ 
			True & 0.49 & -0.31 & 0.11 & 0.25 & -0.18 & 2.45 
            & 0.77 & 0.43 & 0.26 & 0.35 & 0.31 & 2.12 \\ 
			Entropy & 0.56 & -0.33 & 0.10 & 0.27 & -0.19 & 2.50 
            & 0.80 & 0.45 & 0.28 & 0.38 & 0.32 & 2.23 \\ 
			Local & 0.37 & -0.19 & 0.03 & 0.14 & -0.16 & 1.90 
            & 0.46 & 0.27 & 0.16 & 0.21 & 0.19 & 1.29 \\ 
			Seq & 0.32 & -0.17 & 0.02 & 0.14 & -0.17 & 1.82 
            & 0.39 & 0.24 & 0.15 & 0.20 & 0.19 & 1.17 \\ 
			\bottomrule 
		\end{tabular}
	\end{center}
\end{table*}

We also perform the sensitivity analysis based on identifying assumptions. Two misspecified pattern graphs are constructed (Figure \ref{graph-sensitivity}, where the first one corresponds to CCMV and the second one has one missing edge compared to the correct graph. Sequential estimation provides more robust results (See Table \ref{simulation-sensitivity}). While the estimation under the misspecified CCMV assumption provides smaller errors.
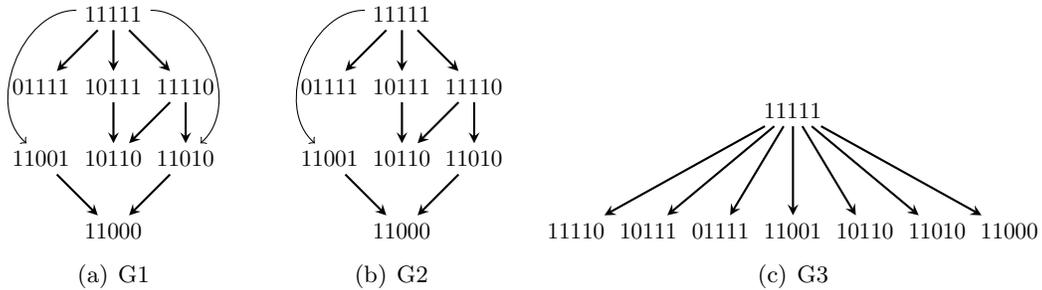
\begin{figure}
    \centering
    \subfigure[G1]{
    \begin{tikzpicture}[node distance=1.2cm, every node/.style={scale=0.8}]
        \node(1d){11111};
        \node[below of = 1d, xshift = -1.2cm](r1){01111};
        \node[below of = 1d](r2){10111};
        \node[below of = 1d, xshift = 1.2cm](r3){11110};
        \node[below of = r2, xshift = -1.2cm](s1){11001};
        \node[below of = r2](s2){10110};
        \node[below of = r2, xshift = 1.2cm](s3){11010};
        \node[below of = s2](t){11000};
        \path[line](1d)--(r1);
        \path[line](1d)--(r2);
        \path[line](1d)--(r3);
        \path[line](r2)--(s2);
        \path[line](r3)--(s2);
        \path[line](r3)--(s3);
        \path[line](s1)--(t);
        \path[line](s3)--(t);
        \draw[->] (1d) to[bend right = 70] (s1);
        \draw[->] (1d) to[bend left = 70] (s3);
    \end{tikzpicture}
    }
    \subfigure[G2]{
    \begin{tikzpicture}[node distance=1.2cm, every node/.style={scale=0.8}]
        \node(1d){11111};
        \node[below of = 1d, xshift = -1.2cm](r1){01111};
        \node[below of = 1d](r2){10111};
        \node[below of = 1d, xshift = 1.2cm](r3){11110};
        \node[below of = r2, xshift = -1.2cm](s1){11001};
        \node[below of = r2](s2){10110};
        \node[below of = r2, xshift = 1.2cm](s3){11010};
        \node[below of = s2](t){11000};
        \path[line](1d)--(r1);
        \path[line](1d)--(r2);
        \path[line](1d)--(r3);
        \path[line](r2)--(s2);
        \path[line](r3)--(s2);
        \path[line](r3)--(s3);
        \path[line](s1)--(t);
        \path[line](s3)--(t);
        \draw[->] (1d) to[bend right = 70] (s1);
    \end{tikzpicture}
    }
    \subfigure[G3]{
    \begin{tikzpicture}[node distance=1cm, every node/.style={scale=0.8}]
        \node(1d){11111};
        \node[below of = 1d, xshift = -1.2cm, yshift = -1cm](r1){01111};
        \node[below of = 1d, xshift = -2.4cm, yshift = -1cm](r2){10111};
        \node[below of = 1d, xshift = -3.6cm, yshift = -1cm](r3){11110};
        \node[below of = 1d, yshift = -1cm](s1){11001};
        \node[below of = 1d, xshift = 1.2cm, yshift = -1cm](s2){10110};
        \node[below of = 1d, xshift = 2.4cm, yshift = -1cm](s3){11010};
        \node[below of = 1d, xshift = 3.6cm, yshift = -1cm](t){11000};
        \path[line](1d)--(r1);
        \path[line](1d)--(r2);
        \path[line](1d)--(r3);
        \path[line](1d)--(s1);
        \path[line](1d)--(s2);
        \path[line](1d)--(s3);
        \path[line](1d)--(t);
    \end{tikzpicture}
    }
\caption{A regular pattern graph for simulation}
\label{graph-sensitivity}
\end{figure}

\begin{table*}[t]
	\caption{Results of the simulation study based on 1000 replications.}
	\label{simulation-sensitivity}
	\begin{center}
		\begin{tabular}{lcccccccccccc}
			\toprule
			\multicolumn{1}{l}{Method} & \multicolumn{6}{c}{Bias} & \multicolumn{6}{c}{MSE}\\
            \cmidrule(lr){2-7} \cmidrule(lr){8-13}
			& $\theta_1$ & $\theta_2$ & $\theta_3$ & $\theta_4$ & $\theta_5$ & $\|\cdot\|_1$ & $\theta_1$ & $\theta_2$ & $\theta_3$ & $\theta_4$ & $\theta_5$ & $\|\cdot\|_2$\\
			\hline
			Entropy(G1) & 0.56 & -0.33 & 0.10 & 0.27 & -0.19 & 2.50 
            & 0.80 & 0.45 & 0.28 & 0.38 & 0.32 & 2.23 \\  
			Entropy(G2)  & 0.72 & -0.42 & 0.12 & 0.33 & -0.21 & 2.83 
            & 1.12 & 0.59 & 0.34 & 0.48 & 0.37 & 2.89 \\ 
            Entropy(G3) & 0.43 & -0.25 & 0.06 & 0.22 & -0.18 & 2.17 
            & 0.57 & 0.34 & 0.21 & 0.28 & 0.26 & 1.67 \\ 
			Local(G1) & 0.37 & -0.19 & 0.03 & 0.14 & -0.16 & 1.90 
            & 0.46 & 0.27 & 0.16 & 0.21 & 0.19 & 1.29 \\ 
			Local(G2) & 0.48 & -0.23 & 0.04 & 0.16 & -0.16 & 2.03 
            & 0.59 & 0.31 & 0.18 & 0.23 & 0.20 & 1.50 \\ 
			Sequential(G1)  & 0.32 & -0.17 & 0.02 & 0.14 & -0.17 & 1.82
            & 0.39 & 0.24 & 0.15 & 0.20 & 0.19 & 1.17 \\ 
			Sequential(G2) & 0.36 & -0.19 & 0.02 & 0.15 & -0.16 & 1.89 
            & 0.44 & 0.26 & 0.16 & 0.21 & 0.19 & 1.27 \\ 
			Sequential(G3) & 0.31 & -0.16 & 0.02 & 0.13 & -0.19 & 1.77 
            & 0.37 & 0.23 & 0.14 & 0.19 & 0.19 & 1.11 \\ 
            \bottomrule 
		\end{tabular}
	\end{center}
\end{table*}

\section{Real data analysis}
This section presents a real-world example to illustrate the proposed methodology, using data from a survey on public responses to the economic crisis \citep{burns2012risk}. Risk perceptions can vary widely among individuals, influenced by personal characteristics and emotions. The key variables considered include age, gender, income, and attitudes toward risk in both investments and jobs. The focus of this analysis is on the coefficients of a logistic regression model, where these five variables serve as predictors, and the outcome of interest is whether participants made riskier investments in the week before completing the questionnaire. We focus on the seventh wave in the serial survey and remove data from participants who are not in this survey. Eight response patterns are observed.

The choice of missing mechanisms, represented by different pattern graphs, is essential for unbiased estimation. For the sensitivity analysis, we examine three missing mechanisms, each illustrated by pattern graphs in Figure \ref{graph-realdata}. The first mechanism follows the CCMV assumption, while the other two adopt a hierarchical structure. In these two cases, each parent set is selected from the patterns one layer above, based on the idea that missing patterns differing by only one observed variable should exhibit greater similarity.

We present the parameter estimates and p-values in Table~\ref{realdata}. Our proposed estimators yield consistent results across three different missing mechanisms. A key observation is that the coefficient for JOB is marginally significant, suggesting that risk perception related to one's job plays an important role in decision-making. However, a notable difference is that the coefficients and their p-values for AGE and INCOME vary substantially across the different missing mechanisms, indicating that complete case analysis may lead to biased estimates.

\begin{figure}[ht]
	\centering
    \subfigure[CCMV]{
	\begin{tikzpicture}[node distance=1cm, every node/.style={scale=0.8}]
	\node(1d){11111};
	\node[below of = 1d, xshift = -1.2cm, yshift = -1cm](r1){111101};
	\node[below of = 1d, xshift = -2.4cm, yshift = -1cm](r2){111011};
	\node[below of = 1d, xshift = -3.6cm, yshift = -1cm](r3){110111};
	\node[below of = 1d, yshift = -1cm](s1){111001};
	\node[below of = 1d, xshift = 1.2cm, yshift = -1cm](s2){110011};
	\node[below of = 1d, xshift = 2.4cm, yshift = -1cm](t1){111000};
	\node[below of = 1d, xshift = 3.6cm, yshift = -1cm](t2){110001};
	\path[line](1d)--(r1);
	\path[line](1d)--(r2);
	\path[line](1d)--(r3);
	\path[line](1d)--(s1);
	\path[line](1d)--(s2);
	\path[line](1d)--(t1);
	\path[line](1d)--(t2);
	\end{tikzpicture}
    }
    \subfigure[Graph1]{
	\begin{tikzpicture}[node distance=1.2cm, every node/.style={scale=0.8}]
		\node(1d){111111};
		\node[below of = 1d, xshift = -1.2cm](r1){111101};
		\node[below of = 1d](r2){111011};
		\node[below of = 1d, xshift = 1.2cm](r3){110111};
		\node[below of = r2, xshift = -1cm](s1){111001};
		\node[below of = r2, xshift = 1cm](s2){110011};
		\node[below of = s1](t1){111000};
		\node[below of = s2](t2){110001};
		\path[line](1d)--(r1);
		\path[line](1d)--(r2);
		\path[line](1d)--(r3);
		\path[line](r1)--(s1);
		\path[line](r2)--(s2);
		\path[line](r3)--(s2);
		\path[line](s1)--(t1);
		\path[line](s2)--(t2);
	\end{tikzpicture}
    }
    \subfigure[Graph2]{
	\begin{tikzpicture}[node distance=1.2cm, every node/.style={scale=0.8}]
	\node(1d){111111};
	\node[below of = 1d, xshift = -1.2cm](r1){111101};
	\node[below of = 1d](r2){111011};
	\node[below of = 1d, xshift = 1.2cm](r3){110111};
	\node[below of = r2, xshift = -1cm](s1){111001};
	\node[below of = r2, xshift = 1cm](s2){110011};
	\node[below of = s1](t1){111000};
	\node[below of = s2](t2){110001};
	\path[line](1d)--(r1);
	\path[line](1d)--(r2);
	\path[line](1d)--(r3);
	\path[line](r1)--(s1);
	\path[line](r2)--(s1);
	\path[line](r2)--(s2);
	\path[line](r3)--(s2);
	\path[line](s1)--(t1);
	\path[line](s1)--(t2);
	\path[line](s2)--(t2);
	\end{tikzpicture}
    }
	\caption{A regular pattern graph for real data analysis}
	\label{graph-realdata}
\end{figure}
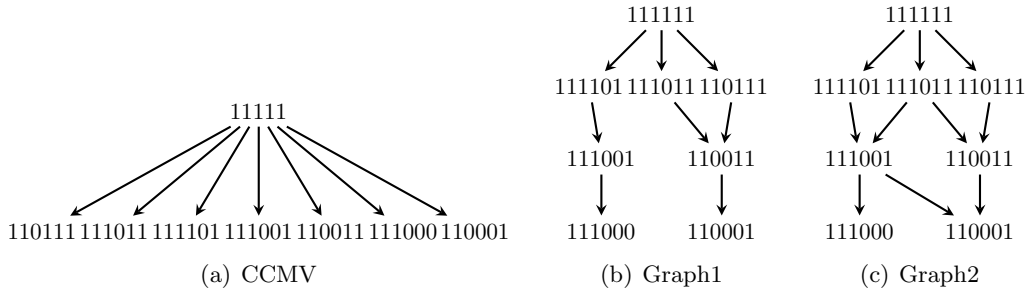

\begin{table*}[t]
	\caption{Results of the Financial Crisis Data analysis: Estimates and p-values.} 
	\label{realdata}
	\begin{center}
		\begin{tabular}{lcccccccc}
			\toprule
			\multicolumn{1}{l}{ } & \multicolumn{2}{c}{Complete-case} & \multicolumn{2}{c}{CCMV}
			& \multicolumn{2}{c}{Graph 1} & \multicolumn{2}{c}{Graph 2}\\ 
			\cmidrule(lr){2-3}
			\cmidrule(lr){4-5}
			\cmidrule(lr){6-7}
			\cmidrule(lr){8-9}
			Parameters & Estimate & p-value & Estimate & p-value & Estimate & p-value & Estimate & p-value  \\ 
			\hline
			AGE & -0.02 & 0.37 & -0.02 & 0.40 & -0.01 & 0.56 & -0.01 & 0.58 \\  
			GENDER & -0.22 & 0.63 & 0.02 & 0.97 & 0.14 & 0.80 & 0.13 & 0.81 \\   
			INCOME & -0.02 & 0.92 & 0.08 & 0.58 & 0.14 & 0.40 & 0.15 & 0.42 \\   
			INVESTMT & -0.18 & 0.46 & -0.24 & 0.28 & -0.23 & 0.30 & -0.22 & 0.31 \\   
			JOB & 0.32 & 0.15 & 0.37 & 0.06 & 0.35 & 0.07 & 0.37 & 0.06 \\ 
			\bottomrule
		\end{tabular}
	\end{center}
\end{table*}


\section*{Acknowledgements}
This work was based on part of the PhD dissertation of the first author,
and was completed while the first author was affiliated with Texas A\&M University.

\begin{appendices}
\section{Proof of Sequential Balance}\label{sec:sequential-balance}
\begin{proof}
Recall the sequential balancing loss function is:
\begin{align*}
\calLr\{O^r(\lob;\alphar),R\}
=\mathsf{1}_{R=1_d}O^r(\lob;\alphar)\hat{Q}^{\PAr}(l)
-\mathsf{1}_{R=r}\log O^r(\lob;\alphar)\ .
\end{align*}
Define the average loss:
\begin{align*}
\calLr_N(\alphar)
=\frac{1}{N}\sum_{i=1}^N\calLr\{O^r(\Lobi;\alphar),R_i\}\ .
\end{align*}
The derivative of $\calLr_N(\alphar)$ with respect to $\alphar$ is:
\begin{align*}
\nabla\calLr_N(\alphar)
=\sum_{i=1}^N\mathsf{1}_{R_i=1_d}w_i\hat{Q}^{\PAr}(l)\Phir(\Lobi)
-\sum_{i=1}^N\mathsf{1}_{R_i=r}\Phir(\Lobi)\ .
\end{align*}
Denote the minimizer of average loss by $\hat{\alpha}^{[r]}$. So, $w_i=O^r(\lob;\hat{\alpha}^{[r]})$. The minimizer satisfies $\nabla\calLr_N(\alphar)=0$, which can be rewritten as:
\begin{align*}
	\sum_{i=1}^N\mathsf{1}_{R_i=r}\Phir(\Lobi)
	=\sum_{i=1}^N\mathsf{1}_{R_i=1_d}w_i\hat{Q}^{\PAr}(l)\Phir(\Lobi)\ .
\end{align*}
Therefore, the proposed sequential balancing loss function encourages the empirical balance between pattern $r$ and $1_d$. The balancing condition in Section \ref{sec:sequential} holds for a general function $g$, instead of the basis functions $\Phir$. To achieve the balance of a desired function, one wants to cautiously choose the basis functions.
\end{proof}

\section{The proposed method under generalized missing data assumptions encoded in pattern graph}\label{sec:method-general-assumptions}

\section{Proof of Theorem \ref{efficiency-bound}}\label{sec:proof-efficiency-bound}
{\bf Proof sketch:}
To show $D_{\theta_0}^{-1}V_{\theta_0}D_{\theta_0}^{-1\tr}$ is the efficiency bound, we closely follow the structure of semiparametric efficiency bound derivation of \cite{newey1990semiparametric}, \cite{bickel1993efficient} and \cite{chen2008semiparametric}. Briefly speaking, we want to utilize Theorem 3.1 in \cite{newey1990semiparametric} to calculate the efficiency bound.

Firstly, pathwise differentiability follows if we can find an influence function satisfying \eqref{path-differentiability} for all regular parametric submodels. Calculation of the tangent set is typically straightforward. $\calT$ is defined as the mean square closure of all $q$-dimensional linear combinations of scores for all smooth functions. Calculation of the projection can be difficult. However, the influence functions we found in the previous step are in the tangent set for the three cases mentioned in this paper, which completes the proof.
\begin{proof}
Consider an \textbf{arbitrary parametric} submodel for the joint density of the $(\lob,R)$ with parameter $\beta$:
\begin{align*}
f_\beta(\lob,r)
=\prod_{s\in\calR}\left\{P_\beta(R=s)f_\beta(l^{[s]}\mid R=s)\right\}^{\mathsf{1}_{r=s}}
\end{align*}
where $\beta_0$ gives the true distribution. The resulting score is given by
\begin{align}\label{parametric-score}
S_\beta(l,r)
=\sum_{s\in\calR}\mathsf{1}_{r=s}S_\beta(l^{[s]}\mid R=s)
+\sum_{s\in\calR}\mathsf{1}_{r=s}\frac{\dot{P}_\beta(R=s)}{P_\beta(R=s)}
\end{align}
where $S_\beta(l^{[s]}\mid R=s)=\partial\log f_\beta(l^{[s]}\mid R=s)/\partial\beta$ satisfies $\int S_\beta(l^{[s]}\mid R=s)f_\beta(l^{[s]}\mid R=s)dl^{[s]}=0$ for $s\in\calR$. Besides, $\sum_{s\in\calR}\E\{\mathsf{1}_{R=s}\}\dot{P}_\beta(R=s)/P_\beta(R=s)=0$. 

Recall that the parameter of interest $\theta_0$ is the solution to $\E\{\psi_\theta(L)\}=0$ and thus is a function of $\beta$, denoted by $\theta_0(\beta)$. To apply Theorem 3.1 in \cite{newey1990semiparametric}, we firstly prove that $\theta_0(\beta)$ is differentiable. Pathwise differentiability follows if we can find an influence function $\zeta(L,R)$ for all regular parametric submodels such that
\begin{align}\label{path-differentiability}
\frac{\partial\theta_0(\beta_0)}{\partial\beta}
=\E\{\zeta(L,R)S_{\beta_0}(L,R)\}\ .
\end{align}
To save notations, $\beta$ is also used as the true parameter value $\beta_0$. Chain rule and Leibniz integral rule (differentiating under the integral) gives
\begin{align*}
\frac{\partial\E\{\psi_\theta(L)\}}{\partial\beta}
&=\int\frac{\partial\psi_\theta(l)f_\beta(l)}{\partial\beta}dl
=\int\left\{\frac{\partial\psi_\theta(l)}{\partial\theta}\frac{\partial\theta(\beta)}{\partial\beta}f_\beta(l)
+\psi_\theta(l)\frac{\partial f_\beta(l)}{\partial\beta}\right\}dl\\
&=\frac{\partial\theta(\beta)}{\partial\beta}\int\frac{\partial\psi_\theta(l)}{\partial\theta}f_\beta(l)dl
+\int\psi_\theta(l)\frac{\partial\log f_\beta(l)}{\partial\beta}f_\beta(l)dl\\
&=\frac{\partial\theta(\beta)}{\partial\beta}\frac{\partial\E\{\psi_\theta(L)\}}{\partial\theta}
+\E\left\{\psi_\theta(L)\frac{\partial\log f_\beta(L)}{\partial\beta}\right\}\ .
\end{align*}
Therefore, by the fact that $\E\{\psi_\theta(L)\}=0$,
\begin{align*}
\frac{\partial\theta_0(\beta)}{\partial\beta}
=-\left[\left.\frac{\partial\E\{\psi_\theta(L)\}}{\partial\theta}\right\vert_{\theta_0}\right]^{-1}\E\left\{\psi_{\theta_0}(L)\frac{\partial\log f_\beta(L)}{\partial\beta}\right\}\ .
\end{align*}
The marginal density of $L$ is
\begin{align*}
f_\beta(l)
=\sum_{r\in\calR}f_\beta(l,r)
=P_\beta(R=1_d)f_\beta(l\mid R=1_d)
+\sum_{1_d\neq r\in\calR}P_\beta(R=r)f_\beta(\lob\mid R=r)f_\beta(\lms\mid\lob,R=r)\ .
\end{align*}
Then,
\begin{align}
\E\left\{\psi_\theta(L)\frac{\partial\log f_\beta(L)}{\partial\beta}\right\}
&=\int\psi_\theta(l)\frac{\partial P_\beta(R=1_d)f_\beta(l\mid R=1_d)}{\partial\beta}dl\label{EIF1}\\
&+\sum_{1_d\neq r\in\calR}\int\psi_\theta(l)\frac{\partial P_\beta(R=r)f_\beta(\lob\mid R=r)f_\beta(\lms\mid\lob,R=r)}{\partial\beta}dl\label{EIF2}\ .
\end{align}
The first term on the right hand side of the equation \eqref{EIF1} is 
\begin{align}
&\quads\int\psi_\theta(l)\frac{\partial P_\beta(R=1_d)f_\beta(l\mid R=1_d)}{\partial\beta}dl\label{ctrb1}\\
&=\dot{P}_\beta(R=1_d)\int\psi_\theta(l)f_\beta(l\mid R=1_d)dl
+\int\psi_\theta(l)P_\beta(R=1_d)S_\beta(l\mid R=1_d)f_\beta(l\mid R=1_d)dl\nonumber\\
&=\frac{\E\left\{\mathsf{1}_{R=1_d}\right\}}{P_\beta(R=1_d)}\dot{P}_\beta(R=1_d)\E\left\{\psi_\theta(L)\mid R=1_d\right\}
+\int\psi_\theta(l)S_\beta(l\mid R=1_d)f_\beta(l, R=1_d)dl\nonumber\\
&=\E\left[\mathsf{1}_{R=1_d}\E\left\{\psi_\theta(L)\mid R=1_d\right\}\frac{\dot{P}_\beta(R=1_d)}{P_\beta(R=1_d)}\right]
+\E\left\{\mathsf{1}_{R=1_d}\psi_\theta(L)S_\beta(L\mid R=1_d)\right\}\nonumber\\
&=\E\left[\mathsf{1}_{R=1_d}\frac{\E\{\mathsf{1}_{R=1_d}\psi_\theta(L)\}}{P_\beta(R=1_d)}\frac{\dot{P}_\beta(R=1_d)}{P_\beta(R=1_d)}\right]
+\E\left[\mathsf{1}_{R=1_d}\left[\psi_\theta(L)-\E\left\{\psi_\theta(L)\mid R=1_d\right\}\right]S_\beta(L\mid R=1_d)\right]\nonumber,
\end{align}
since for any constant $C$,
\begin{align*}
\E\left\{\mathsf{1}_{R=1_d}CS_\beta(L\mid R=1_d)\right\}
=\E\left[\mathsf{1}_{R=1_d}C\E\left\{S_\beta(L\mid R=1_d)\right\}\right]
=0\ .
\end{align*}
Note that $\psi_\theta(l)-\E\{\psi_\theta(L)\mid R=1_d\}$ satisfies
\begin{align*}
\int\left[\psi_\theta(l)-\E\left\{\psi_\theta(L)\mid R=1_d\right\}\right]f_\beta(l\mid R=1_d)dl
=0\ .
\end{align*}
Now, we consider each term in \eqref{EIF2}. For each missing pattern $r\neq 1_d$, by the identification assumption  $f_\beta(\lms\mid\lob,R=r)
=\sum_{s\in\PAr}C^{s,r}(\lob)f_\beta(\lms\mid\lob,R=s)$,
the marginal density $f_\beta(l,r)$ has the recursive form
\begin{align*}
f_\beta(l,r)
&=P_\beta(R=r)f_\beta(\lob\mid R=r)f_\beta(\lms\mid\lob,R=r)\nonumber\\
&=\sum_{s\in\PAr}\frac{f_\beta(\lob,r)}{f_\beta(\lob,s)}C^{s,r}(\lob)f_\beta(l,s)
=\sum_{s\in\PAr}O^{s,r}(\lob)C^{s,r}(\lob)f_\beta(l,s)\\
&=\cdots
=P_\beta(R=1_d)f_\beta(l\mid R=1_d)\sum_{\Xi\in\Pi_r}\prod_{j=2}^{|\Xi|}O^{s_{j-1},s_j}(l^{s_j})C_{s_{j-1},s_j}(l^{s_j})\ .
\end{align*}
\textbf{Notation:} 
For a path $\Xi=\{1_d=s_1,\cdots,s_{|\Xi|}=r\}\in\Pi_r$, abbreviate $O^{s_{j-1},s_j}(l^{s_j})$ and $ C_{s_{j-1},s_j}(l^{s_j})$ as $O^{j-1,j}(l^{s_j})$ and $C_{j-1,j}(l^{s_j})$ respectively. Also denote the product $O^{j-1,j}(l^{s_j})C_{j-1,j}(l^{s_j})$ as $V_{j-1,j}(l^{s_j})$. With a little bit abuse of notation, define $V_{0,1}(l)=1$. When the mixture coefficients are type 2, abbreviate $O^{s_j}(l^{s_j})$ as $O^j(l^{s_j})$. 

Then, the derivative of $f_\beta(l,r)$ is
\begin{align}
\frac{\partial f_\beta(l,r)}{\partial\beta}
&=\left\{\frac{\dot{P_\beta}(R=1_d)}{P_\beta(R=1_d)}+S_\beta(l\mid R=1_d)\right\}f_\beta(l,r)\label{ctrb2}\\
&+P_\beta(R=1_d)f_\beta(l\mid R=1_d)\sum_{\Xi\in\Pi_r}\sum_{k=2}^{|\Xi|}\frac{\partial V_{k-1,k}(l^{s_k})/\partial\beta}{V_{k-1,k}(l^{s_k})}\prod_{j=2}^{|\Xi|}V_{j-1,j}(l^{s_j})\label{ctrb3}\ .
\end{align}
Similar to \eqref{ctrb1}, for each missing pattern $1_d\neq r\in\calR$, the first two terms of $\partial f_\beta(l,r)/\partial\beta$ contributes to $\E\{\psi_\theta(L)\partial\log f_\beta(L)/\partial\beta\}$ with
\begin{align*}
\frac{\dot{P_\beta}(R=1_d)}{P_\beta(R=1_d)}\int\psi_\theta(l)f_\beta(l,r)dl
&=\E\left[\mathsf{1}_{R=1_d}\frac{\E\{\mathsf{1}_{R=r}\psi_\theta(L)\}}{P_\beta(R=1_d)}\frac{\dot{P}_\beta(R=1_d)}{P_\beta(R=1_d)}\right]
\end{align*}
and
\begin{align*}
&\quads\int\psi_\theta(l)S_\beta(l\mid R=1_d)f_\beta(l,r)dl\\
&=\int\psi_\theta(l)\frac{f_\beta(l,r)}{f_\beta(l,1_d)}S_\beta(l\mid R=1_d)f_\beta(l,1_d)dl\nonumber\\
&=\E\left\{\mathsf{1}_{R=1_d}\psi_\theta(L)Q_r(L)S_\beta(L\mid R=1_d)\right\}\nonumber\\
&=\E\left[\mathsf{1}_{R=1_d}\left[\psi_\theta(L)Q_r(L)-\E\left\{\psi_\theta(L)Q_r(L)\mid R=1_d\right\}\right]S_\beta(L\mid R=1_d)\right]\nonumber,
\end{align*}
since for any constant $C$,
\begin{align*}
\E\left\{\mathsf{1}_{R=1_d}CS_\beta(L\mid R=1_d)\right\}
=\E\left[\mathsf{1}_{R=1_d}C\E\left\{S_\beta(L\mid R=1_d)\right\}\right]
=0\ .
\end{align*}
Use the fact that $Q_{1_d}(l)=1$ and 
\begin{align*}
\E\left\{\psi_\theta(L)Q_r(L)\mid R=1_d\right\}
=\int\psi_\theta(l)Q_r(l)\frac{f_\beta(l,1_d)}{P_\beta(R=1_d)}dl
=\frac{\E\{\mathsf{1}_{R=r}\psi_\theta(L)\}}{P_\beta(R=1_d)}\ .
\end{align*}
The components \eqref{ctrb1} and \eqref{ctrb2} collectively contribute to the influence function with term
\begin{align}\label{term1}
\mathsf{1}_{R=1_d}\sum_{r\in\calR}\frac{\E\{\mathsf{1}_{R=r}\psi_\theta(L)\}}{P_\beta(R=1_d)}
=\mathsf{1}_{R=1_d}\frac{\E\{\psi_\theta(L)\}}{P_\beta(R=1_d)}
\ ,
\end{align}
which is related to $\mathsf{1}_{R=1_d}\dot{P}_\beta(R=1_d)/P_\beta(R=1_d)$, and term
\begin{align}\label{term2}
\mathsf{1}_{R=1_d}\sum_{r\in\calR}\left[\psi_\theta(l)Q_r(l)-\frac{\E\{\mathsf{1}_{R=r}\psi_\theta(L)\}}{P_\beta(R=1_d)}\right]
=\mathsf{1}_{R=1_d}\frac{\psi_\theta(l)}{P(R=1_d\mid l)}
-\mathsf{1}_{R=1_d}\frac{\E\{\psi_\theta(L)\}}{P_\beta(R=1_d)}
\ ,
\end{align}
which is related to $\mathsf{1}_{R=1_d}S_\beta(l\mid R=1_d)$. It worth noting that $\E\{\psi_\theta(L)\}$ equals to 0 if one plugs in the true $\theta_0$. 

For the rest terms in \eqref{ctrb3}, it is natural to consider the contribution of each vertex on each path. Consider the path $\Xi=\{1_d=s_1,\cdots,s_{|\Xi|}=r\}$. For $2\le k\le|\Xi|$, the contribution to $\E\{\psi_\theta(L)\partial\log f_\beta(L)/\partial\beta\}$ that is related to $s_k$ is
\begin{align*}
&\quads\int\psi_\theta(l)\frac{\partial V_{k-1,k}(l^{s_k})/\partial\beta}{V_{k-1,k}(l^{s_k})}\prod_{j=2}^{|\Xi|}V_{j-1,j}(l^{s_j})f_\beta(l,1_d)dl\\
&=\int\int\psi_\theta(l)\prod_{j=1}^{k-1}V_{j-1,j}(l^{s_j})f_\beta(l^{\overline{s_k}},1_d\mid l^{s_k})dl^{\overline{s_k}}
\frac{\partial V_{k-1,k}(l^{s_k})/\partial\beta}{V_{k-1,k}(l^{s_k})}\prod_{j=k}^{|\Xi|}V_{j-1,j}(l^{s_j})f_\beta(l^{s_k})dl^{s_k}\\
&=\int m_{\Xi,k}(l^{s_k})\frac{\partial V_{k-1,k}(l^{s_k})/\partial\beta}{V_{k-1,k}(l^{s_k})}\prod_{j=k}^{|\Xi|}V_{j-1,j}(l^{s_j})f_\beta(l^{s_k})dl^{s_k}
\end{align*}
where
\begin{align*}
m_{\Xi,k}(l^{s_k})
&=\int\psi_\theta(l)\prod_{j=1}^{k-1}V_{j-1,j}(l^{s_j})f_\beta(l^{\overline{s_k}},1_d\mid l^{s_k})dl^{\overline{s_k}}
=\E\left\{\mathsf{1}_{R=1_d}\psi_\theta(L)\prod_{j=1}^{k-1}V_{j-1,j}(L^{s_j})\mid L^{s_k}=l^{s_k}\right\}\ .
\end{align*}
\textbf{Notations:} For any pattern $r$ and any pattern $s\in \mathrm{PA}_r$, define $S_\beta(\lob,r):=\partial\log f_\beta(\lob,r)/\partial\beta$, $S_\beta(\lob,s):=\partial\log f_\beta(\lob,s)/\partial\beta$ and $S_\beta(l^{s-r}\mid\lob,R=s):=\partial\log f_\beta(l^{s-r}\mid\lob,R=s)/\partial\beta$. Then,
\begin{align*}
S_\beta(\lob,r)
=S_\beta(\lob\mid R=r)+\frac{\dot{P_\beta}(R=r)}{P_\beta(R=r)},
\end{align*}
and,
\begin{align*}
S_\beta(l^{[s]},s)
=S_\beta(\lob,s)+S_\beta(l^{s-r}\mid\lob,R=s)\ .
\end{align*}
Also define $S_\beta(\lob,\PAr):=\partial\log f_\beta(\lob,\PAr)/\partial\beta$. Then,
\begin{align*}
S_\beta(\lob,\PAr)f_\beta(\lob,\PAr)
=\frac{\partial f_\beta(\lob,\PAr)}{\partial\beta}
=\sum_{s\in\PAr}\frac{\partial f_\beta(\lob,s)}{\partial\beta}
=\sum_{s\in\PAr}S_\beta(\lob,s)f_\beta(\lob,s)\ .
\end{align*}
\textbf{Consider the derivatives $\partial V_{k-1,k}(l^{s_k})/\partial\beta$ given three types of $C_{k-1,k}$.} 

\textbf{Type (1):} $C_{s_{k-1},s_k}(l^{s_k})=C_{s_{k-1},s_k}$. Then, 
\begin{align*}
\frac{\partial V_{k-1,k}(l^{s_k})}{\partial\beta}
=C_{s_{k-1},s_k}\frac{\partial O^{k-1,k}(l^{s_k})}{\partial\beta},
\end{align*}
and
\begin{align*}
\frac{\partial O^{k-1,k}(l^{s_k})}{\partial\beta}
&=\left[\frac{f_\beta(l^{s_k},s_k)}{f_\beta(l^{s_k},s_{k-1})}\right]'
=\frac{f_\beta'(l^{s_k},s_k)}{f_\beta(l^{s_k},s_{k-1})}-\frac{f_\beta(l^{s_k},s_k)f_\beta'(l^{s_k},s_{k-1})}{f^2_\beta(l^{s_k},s_{k-1})}\\
&=\frac{f_\beta'(l^{s_k},s_k)}{f_\beta(l^{s_k},s_k)}\frac{f_\beta(l^{s_k},s_k)}{f_\beta(l^{s_k},s_{k-1})}
-\frac{f_\beta'(l^{s_k},s_{k-1})}{f_\beta(l^{s_k},s_{k-1})}\frac{f_\beta(l^{s_k},s_k)}{f_\beta(l^{s_k},s_{k-1})}\\
&=O^{k-1,k}(l^{s_k})\frac{\partial\log f_\beta(l^{s_k},s_k)}{\partial\beta}
-O^{k-1,k}(l^{s_k})\frac{\partial\log f_\beta(l^{s_k},s_{k-1})}{\partial\beta}\ .
\end{align*}
Thus,
\begin{align*}
\frac{\partial V_{k-1,k}(l^{s_k})/\partial\beta}{V_{k-1,k}(l^{s_k})}
=S_\beta(l^{s_k},s_k)-S_\beta(l^{s_k},s_{k-1})\ .
\end{align*}
Therefore, the contribution is
\begin{align*}
&\quads\int m_{\Xi,k}(l^{s_k})\left\{S_\beta(l^{s_k},s_k)-S_\beta(l^{s_k},s_{k-1})\right\}\prod_{j=k}^{|\Xi|}V_{j-1,j}(l^{s_j})f_\beta(l^{s_k})dl^{s_k}\\
&=\int S_\beta(l^{s_k},s_k)m_{\Xi,k}(l^{s_k})C_{s_{k-1},s_k}\frac{P_\beta(R=s_k\mid l^{s_k})}{P_\beta(R=s_{k-1}\mid l^{s_k})}\prod_{j=k+1}^{|\Xi|}V_{j-1,j}(l^{s_j})f_\beta(l^{s_k})dl^{s_k}\\
&\quads-\int S_\beta(l^{s_k},s_{k-1})m_{\Xi,k}(l^{s_k})\frac{P_\beta(R=s_{k-1}\mid l^{s_k})}{P_\beta(R=s_{k-1}\mid l^{s_k})}V_{k-1,k}(l^{s_k})\prod_{j=k+1}^{|\Xi|}V_{j-1,j}(l^{s_j})f_\beta(l^{s_k})dl^{s_k}\\
&=\int S_\beta(l^{s_k},s_k)\frac{m_{\Xi,k}(l^{s_k})C_{s_{k-1},s_k}}{P_\beta(R=s_{k-1}\mid l^{s_k})}\prod_{j=k+1}^{|\Xi|}V_{j-1,j}(l^{s_j})f_\beta(l^{s_k},s_k)dl^{s_k}\\
&\quads-\int S_\beta(l^{s_k},s_{k-1})\frac{m_{\Xi,k}(l^{s_k})C_{s_{k-1},s_k}}{P_\beta(R=s_{k-1}\mid l^{s_k})}O^{k-1,k}(l^{s_k})\prod_{j=k+1}^{|\Xi|}V_{j-1,j}(l^{s_j})f_\beta(l^{s_k},s_{k-1})dl^{s_k}
\ .
\end{align*}
Define
\begin{align*}
\mu_{1,\Xi,k}(l^{s_k})=\frac{m_{\Xi,k}(l^{s_k})C_{s_{k-1},s_k}}{P_\beta(R=s_{k-1}\mid l^{s_k})}\prod_{j=k+1}^{|\Xi|}V_{j-1,j}(l^{s_j})\ .
\end{align*}
So, the above contribution can be written as 
\begin{align*}
&\quads\E\left\{\mathsf{1}_{R=s_k}\mu_{1,\Xi,k}(L^{s_k})S_\beta(L^{s_k},s_k)\right\}
-\E\left\{\mathsf{1}_{R=s_{k-1}}O^{k-1,k}(L^{s_k})\mu_{1,\Xi,k}(L^{s_k})S_\beta(L^{s_k},s_{k-1})\right\}\\
&=\E\left\{\mathsf{1}_{R=s_k}\mu_{1,\Xi,k}(L^{s_k})S_\beta(L^{s_k},s_k)\right\}
-\E\left\{\mathsf{1}_{R=s_{k-1}}O^{k-1,k}(L^{s_k})\mu_{1,\Xi,k}(L^{s_k})S_\beta(L^{s_{k-1}},s_{k-1})\right\}
\end{align*}
since 
\begin{align*}
S_\beta(l^{s_{k-1}},s_{k-1})
=S_\beta(l^{s_k},s_{k-1})+S_\beta(l^{s_{k-1}-s_k}\mid l^{s_k},R=s_{k-1}),
\end{align*}
and, for any function $g(l^{s_k})$,
\begin{align*}
&\quads\E\{\mathsf{1}_{R=s_{k-1}}g(L^{s_k})S_\beta(L^{s_{k-1}-s_k}\mid L^{s_k},R=s_{k-1})\}\\
&=\E[\mathsf{1}_{R=s_{k-1}}g(L^{s_k})\E\{S_\beta(L^{s_{k-1}-s_k}\mid L^{s_k},R=s_{k-1})\}]
=0\ .
\end{align*}

The above contribution can be further decomposed as
\begin{align*}
&\E\left\{\mathsf{1}_{R=s_k}\mu_{1,\Xi,k}(L^{s_k})S_\beta(L^{s_k}\mid R=s_k)\right\}
+\E\left\{\mathsf{1}_{R=s_k}\mu_{1,\Xi,k}(L^{s_k})\frac{\dot{P}_\beta(R=s_k)}{P_\beta(R=s_k)}\right\}\\
&-\E\left\{\mathsf{1}_{R=s_{k-1}}O^{k-1,k}(L^{s_k})\mu_{1,\Xi,k}(L^{s_k})S_\beta(L^{s_{k-1}}\mid R=s_{k-1})\right\}\\
&-\E\left\{\mathsf{1}_{R=s_{k-1}}O^{k-1,k}(L^{s_k})\mu_{1,\Xi,k}(L^{s_k})\frac{\dot{P}_\beta(R=s_{k-1})}{P_\beta(R=s_{k-1})}\right\},
\end{align*}
since
\begin{align*}
S_\beta(\lob,r)=S_\beta(\lob\mid R=r)+\frac{\dot{P_\beta}(R=r)}{P_\beta(R=r)}\ .
\end{align*}

For any constant $C$, 
\begin{align*}
\E\{\mathsf{1}_{R=r}CS_\beta(L\mid R=r)\}=\E[\mathsf{1}_{R=r}C\E\{S_\beta(L\mid R=r)\}]=0\ .
\end{align*}
Besides,
\begin{align*}
\E\left\{\mathsf{1}_{R=s_k}\mu_{1,\Xi,k}(L^{s_k})\frac{\dot{P}_\beta(R=s_k)}{P_\beta(R=s_k)}\right\}
=\E\left[\mathsf{1}_{R=s_k}\E\left\{\mu_{1,\Xi,k}(L^{s_k})\mid R=s_k\right\}\frac{\dot{P}_\beta(R=s_k)}{P_\beta(R=s_k)}\right],
\end{align*}
and
\begin{align*}
&\quads\E\left\{\mathsf{1}_{R=s_{k-1}}O^{k-1,k}(L^{s_k})\mu_{1,\Xi,k}(L^{s_k})\frac{\dot{P}_\beta(R=s_{k-1})}{P_\beta(R=s_{k-1})}\right\}\\
&=\E\left[\mathsf{1}_{R=s_{k-1}}\E\left\{O^{k-1,k}(L^{s_k})\mu_{1,\Xi,k}(L^{s_k})\right\}\frac{\dot{P}_\beta(R=s_{k-1})}{P_\beta(R=s_{k-1})}\right]\ .
\end{align*}
Therefore, the above four terms can be further simplified, which concludes that the contribution of $s_k$ to the influence function are:
\begin{align*}
\mathsf{1}_{R=s_k}\left[\mu_{1,\Xi,k}(l^{s_k})-\E\left\{\mu_{1,\Xi,k}(L^{s_k})\mid R=s_k\right\}\right],
\end{align*}
which is related to $\mathsf{1}_{R=s_k}S_\beta(l^{s_k}\mid R=s_k)$;
\begin{align*}
\mathsf{1}_{R=s_k}\E\left\{\mu_{1,\Xi,k}(L^{s_k})\mid R=s_k\right\},
\end{align*}
which is related to $\mathsf{1}_{R=s_k}\dot{P}_\beta(R=s_k)/P_\beta(R=s_k)$;
\begin{align*}
-\mathsf{1}_{R=s_{k-1}}\left[O^{k-1,k}(l^{s_k})\mu_{1,\Xi,k}(l^{s_k})-\E\left\{O^{k-1,k}(L^{s_k})\mu_{1,\Xi,k}(L^{s_k})\mid R=s_{k-1}\right\}\right],
\end{align*}
which is related to $\mathsf{1}_{R=s_{k-1}}S_\beta(l^{s_{k-1}}\mid R=s_{k-1})$;
\begin{align*}
-\mathsf{1}_{R=s_{k-1}}\E\left\{O^{k-1,k}(L^{s_k})\mu_{1,\Xi,k}(L^{s_k})\mid R=s_{k-1}\right\},
\end{align*}
which is related to $\mathsf{1}_{R=s_{k-1}}\dot{P}_\beta(R=s_{k-1})/P_\beta(R=s_{k-1})$. 

\textbf{It worth noting that}
\begin{align*}
\E\left\{\mu_{1,\Xi,k}(L^{s_k})\mid R=s_k\right\}
=\frac{\E\left\{\mathsf{1}_{R=1_d}\psi_\theta(L)\prod_{j=2}^{|\Xi|}V_{j-1,j}(L^{s_j})\right\}}{P(R=s_k)}\ ,
\end{align*}
and
\begin{align*}
\E\left\{O^{k-1,k}(L^{s_k})\mu_{1,\Xi,k}(L^{s_k})\mid R=s_{k-1}\right\}
=\frac{\E\left\{\mathsf{1}_{R=1_d}\psi_\theta(L)\prod_{j=2}^{|\Xi|}V_{j-1,j}(L^{s_j})\right\}}{P(R=s_{k-1})}\ .
\end{align*}
\textbf{Type (2):} $C_{s_{k-1},s_k}(l^{s_k})=P(R=s_{k-1}\mid l^{s_k})/P(R\in\PAsk\mid l^{s_k})$. Then, 
\begin{align*}
V_{k-1,k}(l^{s_k})
=\frac{P(R=s_k\mid l^{s_k})}{P(R\in\PAsk\mid l^{s_k})}
=O^k(l^{s_k}),
\end{align*}
and,
\begin{align*}
\frac{\partial V_{k-1,k}(l^{s_k})}{\partial\beta}
&=\left[\frac{f_\beta(l^{s_k},s_k)}{f_\beta(l^{s_k},\PAsk)}\right]'
=\frac{f_\beta'(l^{s_k},s_k)}{f_\beta(l^{s_k},\PAsk)}-\frac{f_\beta(l^{s_k},s_k)f_\beta'(l^{s_k},\PAsk)}{f^2_\beta(l^{s_k},\PAsk)}\\
&=\frac{f_\beta'(l^{s_k},s_k)}{f_\beta(l^{s_k},s_k)}\frac{f_\beta(l^{s_k},s_k)}{f_\beta(l^{s_k},\PAsk)}
-\frac{f_\beta'(l^{s_k},\PAsk)}{f_\beta(l^{s_k},\PAsk)}\frac{f_\beta(l^{s_k},s_k)}{f_\beta(l^{s_k},\PAsk)}\\
&=V_{k-1,k}(l^{s_k})\frac{\partial\log f_\beta(l^{s_k},s_k)}{\partial\beta}
-V_{k-1,k}(l^{s_k})\frac{\partial\log f_\beta(l^{s_k},\PAsk)}{\partial\beta}\ .
\end{align*}
Thus, 
\begin{align*}
\frac{\partial V_{k-1,k}(l^{s_k})/\partial\beta}{V_{k-1,k}(l^{s_k})}
=S_\beta(l^{s_k},s_k)-S_\beta(l^{s_k},\PAsk)\ .
\end{align*}
Therefore, the contribution is
\begin{align*}
&\quads\int m_{\Xi,k}(l^{s_k})\left\{S_\beta(l^{s_k},s_k)-S_\beta(l^{s_k},\PAsk)\right\}\prod_{j=k}^{|\Xi|}V_{j-1,j}(l^{s_j})f_\beta(l^{s_k})dl^{s_k}\\
&=\int S_\beta(l^{s_k},s_k)m_{\Xi,k}(l^{s_k})\frac{P_\beta(R=s_k\mid l^{s_k})}{P_\beta(R\in\PAsk\mid l^{s_k})}\prod_{j=k+1}^{|\Xi|}V_{j-1,j}(l^{s_j})f_\beta(l^{s_k})dl^{s_k}\\
&\quads-\int S_\beta(l^{s_k},\PAsk)m_{\Xi,k}(l^{s_k})\frac{P_\beta(R\in\PAsk\mid l^{s_k})}{P_\beta(R\in\PAsk\mid l^{s_k})}V_{k-1,k}(l^{s_k})\prod_{j=k+1}^{|\Xi|}V_{j-1,j}(l^{s_j})f_\beta(l^{s_k})dl^{s_k}\\
&=\int S_\beta(l^{s_k},s_k)\frac{m_{\Xi,k}(l^{s_k})}{P_\beta(R\in\PAsk\mid l^{s_k})}\prod_{j=k+1}^{|\Xi|}V_{j-1,j}(l^{s_j})f_\beta(l^{s_k},s_k)dl^{s_k}\\
&\quads-\sum_{t\in\PAsk}\int S_\beta(l^{s_k},t)\frac{m_{\Xi,k}(l^{s_k})}{P_\beta(R\in\PAsk\mid l^{s_k})}V_{k-1,k}(l^{s_k})\prod_{j=k+1}^{|\Xi|}V_{j-1,j}(l^{s_j})f_\beta(l^{s_k},t)dl^{s_k},
\end{align*}
since
\begin{align*}
&\quads S_\beta(l^{s_k},\PAsk)P_\beta(R\in\PAsk\mid l^{s_k})f_\beta(l^{s_k})\\
&=S_\beta(l^{s_k},\PAsk)f_\beta(l^{s_k},\PAsk)
=\sum_{s\in\PAr}S_\beta(\lob,s)f_\beta(\lob,s)\ .
\end{align*}
Define
\begin{align*}
\mu_{2,\Xi,k}(l^{s_k})
&=\frac{m_{\Xi,k}(l^{s_k})}{P_\beta(R\in\PAsk\mid l^{s_k})}\prod_{j=k+1}^{|\Xi|}V_{j-1,j}(l^{s_j})
\ .
\end{align*}

Similarly, the above contribution can be written as 
\begin{align*}
\E\left\{\mathsf{1}_{R=s_k}\mu_{2,\Xi,k}(L^{s_k})S_\beta(L^{s_k},s_k)\right\}
-\sum_{t\in\PAsk}\E\left\{\mathsf{1}_{R=t}O^k(L^{s_k})\mu_{2,\Xi,k}(L^{s_k})S_\beta(L^t,t)\right\}\ ,
\end{align*}
which includes the following terms:
\begin{align*}
\mathsf{1}_{R=s_k}\left[\mu_{2,\Xi,k}(l^{s_k})-\E\left\{\mu_{2,\Xi,k}(L^{s_k})\mid R=s_k\right\}\right],
\end{align*}
which is related to $\mathsf{1}_{R=s_k}S_\beta(l^{s_k}\mid R=s_k)$;
\begin{align*}
\mathsf{1}_{R=s_k}\E\left\{\mu_{2,\Xi,k}(L^{s_k})\mid R=s_k\right\},
\end{align*}
which is related to $\mathsf{1}_{R=s_k}\dot{P}_\beta(R=s_k)/P_\beta(R=s_k)$;
\begin{align*}
-\mathsf{1}_{R=t}\left[O^k(l^{s_k})\mu_{2,\Xi,k}(l^{s_k})-\E\left\{O^k(L^{s_k})\mu_{2,\Xi,k}(L^{s_k})\mid R=t\right\}\right],
\end{align*}
which is related to $\mathsf{1}_{R=t}S_\beta(l^t\mid R=t)$ for each $t\in\PAsk$;
\begin{align*}
-\mathsf{1}_{R=t}\E\left\{O^k(L^{s_k})\mu_{2,\Xi,k}(L^{s_k})\mid R=t\right\},
\end{align*}
which is related to $\mathsf{1}_{R=t}\dot{P}_\beta(R=t)/P_\beta(R=t)$ for each $t\in\PAsk$. 

\textbf{It worth noting that}
\begin{align*}
\E\left\{\mu_{2,\Xi,k}(L^{s_k})\mid R=s_k\right\}
=\frac{\E\left\{\mathsf{1}_{R=1_d}\psi_\theta(L)\prod_{j=2}^{|\Xi|}V_{j-1,j}(L^{s_j})\right\}}{P(R=s_k)}\ ,
\end{align*}
and
\begin{align*}
\E\left\{O^k(L^{s_k})\mu_{2,\Xi,k}(L^{s_k})\mid R=t\right\}
=\frac{\E\left\{\mathsf{1}_{R=1_d}\psi_\theta(L)
\frac{P(R=t\mid l^{s_k})}{P(R\in\PAsk\mid l^{s_k})}\prod_{j=2}^{|\Xi|}V_{j-1,j}(L^{s_j})\right\}}{P(R=t)}\ .
\end{align*}
\textbf{Type (3):} $C_{s_{k-1},s_k}(l^{s_k})=P_\beta(R=s_{k-1})/P_\beta(R\in\PAsk)$. Then, 
\begin{align*}
V_{k-1,k}(l^{s_k})
=\frac{P_\beta(R=s_k\mid l^{s_k})}{P_\beta(R=s_{k-1}\mid l^{s_k})}\frac{P_\beta(R=s_{k-1})}{P_\beta(R\in\PAsk)}
=\frac{f_\beta(l^{s_k}\mid R=s_k)P_\beta(R=s_k)}{f_\beta(l^{s_k}\mid R=s_{k-1})P_\beta(R\in\PAsk)}
\end{align*}
and
\begin{align*}
&\quads\frac{\partial V_{k-1,k}(l^{s_k})}{\partial\beta}\\
&=\left[\frac{f_\beta(l^{s_k}\mid R=s_k)}{f_\beta(l^{s_k}\mid R=s_{k-1})}\right]'\frac{P_\beta(R=s_k)}{P_\beta(R\in\PAsk)}
+\frac{f_\beta(l^{s_k}\mid R=s_k)}{f_\beta(l^{s_k}\mid R=s_{k-1})}\left[\frac{P_\beta(R=s_k)}{P_\beta(R\in\PAsk)}\right]'\\
&=\left[\frac{f_\beta'(l^{s_k}\mid R=s_k)}{f_\beta(l^{s_k}\mid R=s_{k-1})}-\frac{f_\beta(l^{s_k}\mid R=s_k)f_\beta'(l^{s_k}\mid R=s_{k-1})}{f^2_\beta(l^{s_k}\mid R=s_{k-1})}\right]'\frac{P_\beta(R=s_k)}{P_\beta(R\in\PAsk)}\\
&\quads+\frac{f_\beta(l^{s_k}\mid R=s_k)}{f_\beta(l^{s_k}\mid R=s_{k-1})}\left[\frac{\dot{P}_\beta(R=s_k)}{P_\beta(R\in\PAsk)}-\frac{P_\beta(R=s_k)\dot{P}_\beta(R\in\PAsk)}{P^2_\beta(R\in\PAsk)}\right]\\
&=\left\{S_\beta(l^{s_k}\mid R=s_k)-S_\beta(l^{s_k}\mid R=s_{k-1})+\frac{\dot{P}_\beta(R=s_k)}{P_\beta(R=s_k)}-\frac{\dot{P}_\beta(R\in\PAsk)}{P_\beta(R\in\PAsk)}\right\}V_{k-1,k}(l^{s_k})
\ .
\end{align*}
Thus, the contribution is
\begin{align*}
&\int S_\beta(l^{s_k}\mid R=s_k)m_{\Xi,k}(l^{s_k})\frac{P_\beta(R=s_k\mid l^{s_k})}{P_\beta(R=s_{k-1}\mid l^{s_k})}\frac{P_\beta(R=s_{k-1})}{P_\beta(R\in\PAsk)}\prod_{j=k+1}^{|\Xi|}V_{j-1,j}(l^{s_j})f_\beta(l^{s_k})dl^{s_k}\\
-&\int S_\beta(l^{s_k}\mid R=s_{k-1})m_{\Xi,k}(l^{s_k})\frac{P_\beta(R=s_{k-1}\mid l^{s_k})}{P_\beta(R=s_{k-1}\mid l^{s_k})}V_{k-1,k}(l^{s_j})\prod_{j=k+1}^{|\Xi|}V_{j-1,j}(l^{s_j})f_\beta(l^{s_k})dl^{s_k}\\
+&\left\{\frac{\dot{P}_\beta(R=s_k)}{P_\beta(R=s_k)}-\frac{\dot{P}_\beta(R\in\PAsk)}{P_\beta(R\in\PAsk)}\right\}
\int m_{\Xi,k}(l^{s_k})\prod_{j=k}^{|\Xi|}V_{j-1,j}(l^{s_j})f_\beta(l^{s_k})dl^{s_k}
\ .
\end{align*}
Define
\begin{align*}   
\mu_{3,\Xi,k}(l^{s_k})
&=\frac{m_{\Xi,k}(l^{s_k})}{P_\beta(R=s_{k-1}\mid l^{s_k})}\frac{P_\beta(R=s_{k-1})}{P_\beta(R\in\PAsk)}\prod_{j=k+1}^{|\Xi|}V_{j-1,j}(l^{s_j}),\\
c_{\Xi,k} 
&=\int m_{\Xi,k}(l^{s_k})\prod_{j=k}^{|\Xi|}V_{j-1,j}(l^{s_j})f_\beta(l^{s_k})dl^{s_k}
=\E\left\{\mathsf{1}_{R=1_d}\psi_\theta(L)\prod_{j=2}^{|\Xi|}V_{j-1,j}(L^{s_j})\right\}
\ .
\end{align*}

Similarly, the above contribution can be written as 
\begin{align*}
&\E\left\{\mathsf{1}_{R=s_k}\mu_{3,\Xi,k}(L^{s_k})S_\beta(L^{s_k}\mid R=s_k)\right\}-\E\left\{\mathsf{1}_{R=s_{k-1}}O^{k-1,k}(L^{s_k})\mu_{3,\Xi,k}(L^{s_k})S_\beta(L^{s_k}\mid R=s_{k-1})\right\}\\
&+c_{\Xi,k}\frac{\E\left\{\mathsf{1}_{R=s_k}\right\}}{P_\beta(R=s_k)}\frac{\dot{P}_\beta(R=s_k)}{P_\beta(R=s_k)}
-\frac{c_{\Xi,k}}{P_\beta(R\in\PAsk)}\sum_{t\in\PAsk}\frac{\E\left\{\mathsf{1}_{R=t}\right\}}{P_\beta(R=t)}\dot{P}_\beta(R=t)\ ,
\end{align*}
which includes the following terms:
\begin{align*}
\mathsf{1}_{R=s_k}\left[\mu_{3,\Xi,k}(l^{s_k})-\E\left\{\mu_{3,\Xi,k}(L^{s_k})\mid R=s_k\right\}\right],
\end{align*}
which is related to $\mathsf{1}_{R=s_k}S_\beta(l^{s_k}\mid R=s_k)$;
\begin{align*}
\mathsf{1}_{R=s_k}\frac{c_{\Xi,k}}{P_\beta(R=s_k)},
\end{align*}
which is related to $\mathsf{1}_{R=s_k}\dot{P}_\beta(R=s_k)/P_\beta(R=s_k)$;
\begin{align*}
-\mathsf{1}_{R=s_{k-1}}\left[O^{k-1,k}(l^{s_k})\mu_{3,\Xi,k}(l^{s_k})-\E\left\{O^{k-1,k}(L^{s_k})\mu_{3,\Xi,k}(L^{s_k})\mid R=s_{k-1}\right\}\right],
\end{align*}
which is related to $\mathsf{1}_{R=s_{k-1}}S_\beta(l^{s_{k-1}}\mid R=s_{k-1})$;
\begin{align*}
-\mathsf{1}_{R=t}\frac{c_{\Xi,k}}{P_\beta(R\in\PAsk)},
\end{align*}
which is related to $\mathsf{1}_{R=t}\dot{P}_\beta(R=t)/P_\beta(R=t)$ for each $t\in\PAsk$. 

\textbf{It worth noting that}
\begin{align*}
\E\left\{\mu_{3,\Xi,k}(L^{s_k})\mid R=s_k\right\}
=\frac{c_{\Xi,k}}{P(R=s_k)}\ , 
\end{align*}
and
\begin{align*}
\E\left\{O^{k-1,k}(L^{s_k})\mu_{3,\Xi,k}(L^{s_k})\mid R=s_{k-1}\right\}
=\frac{c_{\Xi,k}}{P(R=s_{k-1})}\ .
\end{align*}
\textbf{Summary:} 
Notice that for all $j=1,2,3$, $\mu_{j,\Xi,k}(l^{s_k})$ has the uniform expression:
\begin{align*}
\mu_{\Xi,k}(l^{s_k})
=\frac{m_{\Xi,k}(l^{s_k})}{P_\beta(R=s_{k-1}\mid l^{s_k})}C_{s_{k-1},s_k}(l^{s_k})\prod_{j=k+1}^{|\Xi|}V_{j-1,j}(l^{s_j})\ .
\end{align*}
By the summation of those terms in each case, the contribution of $s_k$ on the path $\Xi$ to $\E\{\psi_\theta(L)\partial\log f_\beta(L)/\partial\beta\}$ can be written as $\E\{F_\theta^{\Xi,k}(L,R)S_\beta(L,R)\}$, where $F_\theta^{\Xi,k}(L,R)$
\begin{align*}
=
\begin{cases}
\mathsf{1}_{R=s_k}\mu_{\Xi,k}(L^{s_k})
-\mathsf{1}_{R=s_{k-1}}O^{k-1,k}(L^{s_k})\mu_{\Xi,k},
& \textit{Type(1) } , \\
\mathsf{1}_{R=s_k}\mu_{\Xi,k}(L^{s_k})
-\mathsf{1}_{R\in\PAsk}O^k(L^{s_k})\mu_{\Xi,k},
& \textit{Type(2) } , \\
\mathsf{1}_{R=s_k}\mu_{\Xi,k}(L^{s_k})
-\mathsf{1}_{R=s_{k-1}}O^{k-1,k}(L^{s_k})\mu_{\Xi,k}
+\mathsf{1}_{R=s_{k-1}}\frac{c_{\Xi,k}}{P(R=s_{k-1})}
-\mathsf{1}_{R\in\PAsk}\frac{c_{\Xi,k}}{P_\beta(R\in\PAsk)}
& \textit{Type(3) } \ .
\end{cases}  
\end{align*}
Notice that $\E\{\psi_{\theta_0}(L)\}=0$. Recall that we denote the derivative $\partial\E\{\psi_\theta(L)\}/\partial\theta$ at $\theta_0$ as $D_{\theta_0}$. Let the influence function $\zeta(L,R)=
-D_{\theta_0}^{-1}F_{\theta_0}(L,R)$ where
\begin{align*}
F_\theta(L,R)
=\mathsf{1}_{R=1_d}\left\{1+\sum_{\Xi\in\Pi}\prod_{s\in\Xi}O^s(L^{[s]})\right\}\psi_\theta(L)
+\sum_{1_d\neq r\in\calR}\sum_{\Xi\in\Pi_r}\sum_{k=2}^{|\Xi|}F_\theta^{\Xi,k}(L,R)
\ .
\end{align*}
Equation \eqref{path-differentiability} is satisfied for true parameters, since
\begin{align*}
&\quads\E\left\{\psi_\theta(L)\frac{\partial\log f_\beta(L)}{\partial\beta}\right\}
=\E\{F_\theta(L,R)S_\beta(L,R)\}\\
&=\E\left[\mathsf{1}_{R=1_d}\frac{\psi_\theta(L)}{P(R=1_d\mid L)}S_\beta(L,R)\right]
+\sum_{1_d\neq r\in\calR}\sum_{\Xi\in\Pi_r}\sum_{k=2}^{|\Xi|}
\E\left\{F_\theta^{\Xi,k}(L,R)S_\beta(L,R)\right\}
\ .
\end{align*}
\textbf{We also need to verify that $F_\theta$ has mean 0.} We have shown that the terms related to each $\mathsf{1}_{R=s}S_\beta(l^{[s]}\mid R=s)$ have mean 0. We need to show that the summation of all terms related to $\mathsf{1}_{R=s}\dot{P}_\beta(R=s)/P_\beta(R=s)$ over all $s\in\calR$ has mean 0, which is similar to the property $\sum_{s\in\calR}\E\{\mathsf{1}_{R=s}\}\dot{P}_\beta(R=s)/P_\beta(R=s)=0$.

Notice that the contribution of $s_k$ on the path $\Xi$ includes positive terms related to $\mathsf{1}_{R=s_k}\dot{P}_\beta(R=s_k)/P_\beta(R=s_k)$ and negative terms related to a specific parent $s_{k-1}$ or all parents $\PAsk$, i.e., $\mathsf{1}_{R=s_{k-1}}\dot{P}_\beta(R=s_{k-1})/P_\beta(R=s_{k-1})$ or $\mathsf{1}_{R=t}\dot{P}_\beta(R=t)/P_\beta(R=t)$ for each $t\in\PAsk$. It is tedious to first calculate the summation of those terms related to a specific pattern $s$, and then calculate the summation over all patterns $s\in\calR$. Instead, it suffices to show that for each $s_k$ on the path $\Xi$, those terms cancel out. More precisely,

\textbf{Type (1):}
\begin{align*}
&\quads\E\left[\mathsf{1}_{R=s_k}\E\left\{\mu_{1,\Xi,k}(L^{s_k})\mid R=s_k\right\}\right]
-\E\left[\mathsf{1}_{R=s_{k-1}}\E\left\{O^{k-1,k}(L^{s_k})\mu_{1,\Xi,k}(L^{s_k})\mid R=s_{k-1}\right\}\right]\\
&=P(R=s_k)\frac{\E\left\{\mathsf{1}_{R=1_d}\psi_\theta(L)\prod_{j=2}^{|\Xi|}V_{j-1,j}(L^{s_j})\right\}}{P(R=s_k)}
-P(R=s_{k-1})\frac{\E\left\{\mathsf{1}_{R=1_d}\psi_\theta(L)\prod_{j=2}^{|\Xi|}V_{j-1,j}(L^{s_j})\right\}}{P(R=s_{k-1})}\\
&=0\ .
\end{align*}

\textbf{Type (2):}
\begin{align*}
&\quads\E\left[\mathsf{1}_{R=s_k}\E\left\{\mu_{2,\Xi,k}(L^{s_k})\mid R=s_k\right\}\right]
-\sum_{t\in\PAsk}\E\left[\mathsf{1}_{R=t}\E\left\{O^k(L^{s_k})\mu_{2,\Xi,k}(L^{s_k})\mid R=t\right\}\right]\\
&=P(R=s_k)\frac{\E\left\{\mathsf{1}_{R=1_d}\psi_\theta(L)\prod_{j=2}^{|\Xi|}V_{j-1,j}(L^{s_j})\right\}}{P(R=s_k)}\\
&\quads-\sum_{t\in\PAsk}P(R=t)
\frac{\E\left\{\mathsf{1}_{R=1_d}\psi_\theta(L)C_{t,s_k}(L^{s_k})\prod_{j=2}^{|\Xi|}V_{j-1,j}(L^{s_j})\right\}}{P(R=t)}\\
&=\E\left\{\mathsf{1}_{R=1_d}\psi_\theta(L)\prod_{j=2}^{|\Xi|}V_{j-1,j}(L^{s_j})\right\}
-\E\left[\mathsf{1}_{R=1_d}\psi_\theta(L)\left\{\sum_{t\in\PAsk}C_{t,s_k}(L^{s_k})\right\}\prod_{j=2}^{|\Xi|}V_{j-1,j}(L^{s_j})\right]\\
&=0\ .
\end{align*}

\textbf{Type (3):}
\begin{align*}
\E\left\{\mathsf{1}_{R=s_k}\frac{c_{\Xi,k}}{P_\beta(R=s_k)}\right\}
-\sum_{t\in\PAsk}\E\left\{\mathsf{1}_{R=t}\frac{c_{\Xi,k}}{P_\beta(R\in\PAsk)}\right\}
=0\ .
\end{align*}
Therefore, $F_\theta$ has mean 0. 

The tangent set $\calT$ is defined as the mean square closure of all $q$-dimensional linear combinations of scores $S_\beta$ for smooth parametric submodels as in \ref{parametric-score}. That is,
\begin{align*}
\calT=\left\{h(L,R)\in\bbR^q:\E\{\|h\|^2\}\le\infty,
\exists A_jS_{\beta j}\textrm{ with }\lim_{j\to\infty}\E\{\|h-A_jS_{\beta j}\|\}=0
\right\}
\end{align*}
where $A_j$ is a constant matrix with $q$ rows. It can be verified by similar arguments as in \cite{newey1990semiparametric}. We have shown that $F_\theta$ is a linear combination of the components of score \eqref{parametric-score}. It is easy to see that $\zeta$ belongs to the tangent space $\calT$. 

Therefore, $\theta(\beta)$ is pathwise differentiable. All the conditions of Theorem 3.1 in \cite{newey1990semiparametric} hold, so the efficiency bound for regular estimators of the parameter $\theta$ is given by $D_{\theta_0}^{-1}V_{\theta_0}D_{\theta_0}^{-1\tr}$ where $V_{\theta_0}=\E\{F_{\theta_0}(L,R)F_{\theta_0}(L,R)\tr\}$.

\end{proof}

\section{Additional assumptions for asymptotic properties}\label{sec:additional-assump}
We require the following set of assumptions to establish the asymptotic theory. Under mild conditions, we show that $O^r(\lob;\hatalphar)$ is consistent, $\hat{\bbP}_N\psi_\theta$ is asymptotically normal for each $\theta$ in a compact set $\Theta\subset\bbR^q$, and $\hat{\theta}_N$ is consistent and efficient. We require the following set of assumptions to establish the consistency of the proposed estimator.

\begin{assumption}\label{assump2} 
The following conditions hold for each missing pattern $r\in\calR$:
\begin{enumerate}[label={\textbf{~\Alph*:}}, ref={Assumption~\theassumption.\Alph*},leftmargin=1cm]
\item\label{assump-2A}
There exist constants $0<c_0<C_0$ such that $c_0\le O^r(\lob)\le C_0$ for all $\lob\in\domr$.

\item\label{assump-2B}
The optimization ${\min_{\alphar}} \E[\calLr\{O^r(\Lob;\alphar),R\}]$ using the sequential loss with known $Q^{\PAr}(l)$, 
\begin{align*}
	\mathsf{1}_{R=1_d}O^r(\lob;\alphar)Q^{\PAr}(l)
    -\mathsf{1}_{R=r}\log O^r(\lob;\alphar),
\end{align*}
has a unique solution $\alphar_0\in\bbR^{K_r}$.

\item\label{assump-2C}
The total number of basis functions $K_r$ grows as the sample size increases and satisfies $K_r^2=o(N_r)$ where $N_r$ is the number of observations in patterns $r$ and $1_d$. 

\item\label{assump-2D}
There exist constants $C_1>0$ and $\mu_1>1/2$ such that for any positive integer $K_r$, there exist $\alpha_{K_r}^*\in\bbR^{K_r}$ satisfying 
$$
\underset{\lob\in\domr}{\sup}\big|O^r(\lob)-O^r(\lob;\alpha_{K_r}^*)\big|\le C_1K_r^{-\mu_1}.
$$

\item\label{assump-2E}
The Euclidean norm of the basis functions satisfies $\sup_{\lob\in\domr}\|\Phir(\lob)\|_2=O(K_r^{1/2})$. 

\item\label{assump-2F}
Let $\lambda_1,\ldots,\lambda_{K_r}$ be the eigenvalues of $\E\{\Phir(\Lob)\Phir(\Lob)\tr\}$ in the non-decreasing order. There exist constants $\lambda_{\min}^*$ and $\lambda_{\max}^*$ such that $0<\lambda_{\min}^*\le\lambda_1\le\lambda_{K_r}\le\lambda_{\max}^*$.

\item\label{assump-2G}
The tuning parameter $\lambda$ satisfies $\lambda=o(1/\sqrt{K_rN_r})$.
\end{enumerate}
\end{assumption}
\ref{assump-2A} is the boundedness assumption commonly used in missing data and causal inference. It is equivalent to that $P(R=r\mid\lob,R\in\{\PAr,r\})$ is strictly bounded away from 0 and 1. The domain $\domr$ is usually assumed to be compact, so it becomes possible to approximate $O^r$ with compactly supported functions. \ref{assump-2B} is a standard condition for consistency of minimum loss estimators of $\alphar_0$. It is well known that the uniform approximation error is related to the number of basis functions. Thus, we allow $K_r$ to increase with sample size under certain restrictions in \ref{assump-2C}. The uniform approximation rate $\mu_1$ in \ref{assump-2D} is related to the true propensity odds $O^r$ and the choice of basis functions. For instance, the rate $\mu_1=s/d$ for power series and splines if $O^r$ is continuously differentiable of order $s$ on $[0,1]^d$ under mild assumptions; see \citet{newey1997convergence} and \citet{fan2022optimal} for details. The restriction $\mu> 1/2$ is a technical condition such that the estimator of propensity odds is consistent. \ref{assump-2E} and \ref{assump-2F} are standard conditions for controlling the magnitude of the basis functions. The Euclidean norm of the basis function vector can increase as the spanned space extends, but its growth rate cannot be too fast. These assumptions are satisfied by many bases such as the regression spline, trigonometric polynomial, and wavelet bases; see, e.g., \citet{newey1997convergence, horowitz2004nonparametric, chen2007large} and \citet{fan2022optimal}. \ref{assump-2G} is a technical assumption of the tuning parameter $\lambda$ for the maintenance of consistency of weights. We now establish the consistency of the estimated odds.
\begin{theorem}\label{odds}
Under Assumptions \ref{assump1} and \ref{assump2}, for each missing pattern $r$, we have
\begin{align*}
\left\|O^r(\ \cdot\ ;\hatalphar)-O^r\right\|_\infty
&=O_p\left(\sqrt{\frac{K_r^2}{N_r}}+K_r^{\frac{1}{2}-\mu_1}\right)=o_p(1)\ ,\\
\left\|O^r(\ \cdot\ ;\hatalphar)-O^r\right\|_{P,2}
&=O_p\left(\sqrt{\frac{K_r}{N_r}}+K_r^{-\mu_1}\right)=o_p(1)
\end{align*}
where $\|X\|_{P,2}^2=\int X^2dP$ is the second moment of a random variable.
\end{theorem}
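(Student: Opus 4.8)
The plan is to treat $\hatalphar$ as a penalized convex $M$-estimator whose parameter dimension $K_r$ grows with $N_r$, and to run the argument by induction on the topological depth of $r$ in $G$. For the base case, every $r$ whose only parent is $1_d$ has $\hat{Q}^{\PAr}\equiv Q^{1_d}\equiv1$, so there is no plug-in error from the parents; for deeper $r$ the induction hypothesis supplies the rates for $\hat{Q}^s$, $s\in\PAr$, which enter the loss through $\hat{Q}^{\PAr}=\sum_{s\in\PAr}\hat{Q}^s$. Throughout I would compare $\hatalphar$ with the population minimizer $\alphar_0$ of \ref{assump-2B} and with the best sup-norm approximation $\alpha_{K_r}^*$ of \ref{assump-2D}, decomposing the total error into a statistical part, an approximation part, and a sequential plug-in part:
\begin{align*}
\big\|O^r(\cdot;\hatalphar)-O^r\big\|
\le\big\|O^r(\cdot;\hatalphar)-O^r(\cdot;\alphar_0)\big\|
+\big\|O^r(\cdot;\alphar_0)-O^r\big\|.
\end{align*}

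First I would set up the $M$-estimation engine. Since $O^r(\lob;\alpha)=\exp\{\Phir(\lob)\tr\alpha\}$, the loss $\calLr$ is convex in $\alpha$, and its population Hessian $\E\{\mathsf{1}_{R=1_d}O^r(\Lob;\alpha)Q^{\PAr}(L)\Phir(\Lob)\Phir(\Lob)\tr\}$ is, by the lower propensity bound \ref{assump-1B}, the odds bounds \ref{assump-2A}, and the eigenvalue bound \ref{assump-2F}, bounded below by a positive multiple of the identity on a neighborhood of $\alphar_0$, giving local strong convexity with curvature bounded away from $0$. Next I would bound the empirical gradient at $\alphar_0$, namely $\tfrac1N\sum_i\{\mathsf{1}_{R_i=1_d}O^r(\Lobi;\alphar_0)\hat{Q}^{\PAr}(L_i)-\mathsf{1}_{R_i=r}\}\Phir(\Lobi)$. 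Writing it as (i) a centered empirical process around its population mean, (ii) the plug-in discrepancy involving $\hat{Q}^{\PAr}-Q^{\PAr}$, and (iii) the population gradient (which vanishes at $\alphar_0$ by \ref{assump-2B}), I would control (i) in $\ell_2$ by a second-moment computation $\E\|\cdot\|_2^2=O(K_r/N_r)$ using $\sup_\lob\|\Phir(\lob)\|_2^2=O(K_r)$ from \ref{assump-2E}; control (ii) by the induction hypothesis on the parent rates together with boundedness of the odds; and note (iii) is zero. Combining the basic inequality $\calLr_\lambda(\hatalphar)\le\calLr_\lambda(\alphar_0)$ with local strong convexity and the gradient bound (the penalty contributing only a $\lambda\sqrt{K_r}=o(1/\sqrt{N_r})$ term by \ref{assump-2G}) yields, after a preliminary localization that first establishes consistency so the curvature bound applies, $\|\hatalphar-\alphar_0\|_2=O_p(\sqrt{K_r/N_r})$.

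Finally I would convert the parameter rate into the two function-norm rates, which is where their difference emerges. By the eigenvalue bounds and the Lipschitz property of the exponential on the relevant compact range, $\|O^r(\cdot;\hatalphar)-O^r(\cdot;\alphar_0)\|_{P,2}\asymp\|\hatalphar-\alphar_0\|_2=O_p(\sqrt{K_r/N_r})$, while $\|O^r(\cdot;\hatalphar)-O^r(\cdot;\alphar_0)\|_\infty\le\sup_\lob\|\Phir(\lob)\|_2\,\|\hatalphar-\alphar_0\|_2=O_p(\sqrt{K_r^2/N_r})$, the extra $K_r^{1/2}$ being exactly \ref{assump-2E}. For the approximation term, since $O^r$ is the unconstrained functional minimizer of the population loss, a quadratic expansion plus strong convexity gives $\|\alphar_0-\alpha_{K_r}^*\|_2=O(K_r^{-\mu_1})$ from $\|O^r(\cdot;\alpha_{K_r}^*)-O^r\|_\infty=O(K_r^{-\mu_1})$; hence $\|O^r(\cdot;\alphar_0)-O^r\|_{P,2}=O(K_r^{-\mu_1})$ and $\|O^r(\cdot;\alphar_0)-O^r\|_\infty\le\sup_\lob\|\Phir(\lob)\|_2\,\|\alphar_0-\alpha_{K_r}^*\|_2+\|O^r(\cdot;\alpha_{K_r}^*)-O^r\|_\infty=O(K_r^{1/2-\mu_1})$. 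Summing statistical and approximation pieces reproduces both stated rates.

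The step I expect to be the main obstacle is the joint control of the growing-dimension gradient and the sequential plug-in error: I must verify that replacing $Q^{\PAr}$ by $\hat{Q}^{\PAr}$ does not inflate the gradient bound beyond $O_p(\sqrt{K_r/N_r})$, which requires propagating the parent rates through the products and sums defining $\hat{Q}^{\PAr}$ while multiplying by a basis vector of norm $O(K_r^{1/2})$. Closely related is the localization: the exponential loss is only locally strongly convex, so its Hessian can degenerate or explode away from $\alphar_0$. I would therefore first prove consistency of $\hatalphar$ via global convexity and a uniform law of large numbers (using $K_r^2=o(N_r)$ from \ref{assump-2C}), and only then invoke the quadratic lower bound to obtain the sharp rate.
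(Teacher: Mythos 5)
Your proposal is correct in its architecture and is essentially the paper's own engine: a convexity/basic-inequality argument for the penalized loss, an eigenvalue lower bound on the empirical Hessian, an $\ell_2$ bound on the gradient at a reference point, and then conversion to the two function norms via the local Lipschitz property of the exponential, Cauchy--Schwarz with $\sup_{\lob\in\domr}\|\Phir(\lob)\|_2=O(K_r^{1/2})$, and \ref{assump-2F}. The two real differences: (i) the paper centers the expansion at the sieve approximant $\alpha_{K_r}^*$ rather than at the population minimizer $\alphar_0$, so the approximation error $O(K_r^{-\mu_1})$ enters through the gradient bound (Lemma \ref{linear} gives $O_p(\sqrt{K_r/N}+K_r^{-\mu_1})$ at $\alpha_{K_r}^*$) and no comparison between $\alphar_0$ and $\alpha_{K_r}^*$ is ever needed; your two-step route is valid --- the population loss with known $Q^{\PAr}$ is indeed minimized over all positive functions at $O^r$, so the excess loss at $\alpha_{K_r}^*$ is second order and strong convexity yields $\|\alphar_0-\alpha_{K_r}^*\|_2=O(K_r^{-\mu_1})$ --- but it costs you an extra lemma that the paper's centering avoids. (ii) Your explicit induction on graph depth is the right way to organize the plug-in of $\hat{Q}^{\PAr}$; the paper does this only implicitly (Lemma \ref{quadratic} uses the already-established consistency of $\hat{Q}^s$, $s\in\PAr$, to keep $\hat{Q}^{\PAr}$ bounded below), and its Lemma \ref{linear} in fact writes the gradient \emph{without} the $\hat{Q}^{\PAr}$ factor, so on this point your sketch is more careful than the source.

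The one step you flag but do not close --- that the plug-in does not inflate the gradient by the factor $K_r^{1/2}$ --- has a concrete resolution, and it is the same device the paper uses for its approximation term: write the plug-in contribution as $N^{-1}\sum_i\xi_iB_i$ with $\xi_i=\mathsf{1}_{R_i=1_d}O^r(\Lobi;\alphar_0)\{\hat{Q}^{\PAr}(L_i)-Q^{\PAr}(L_i)\}$ and $B_i=\mathsf{1}_{R_i=1_d}\Phir(\Lobi)$, and bound $\bigl\|N^{-1}\sum_i\xi_iB_i\bigr\|_2^2\le\lambda_{\max}\bigl\{N^{-1}\sum_iB_iB_i\tr\bigr\}\cdot N^{-1}\|\xi\|_2^2$. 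The first factor is $\lambda_{\max}^*+o_p(1)$ by the Weyl/matrix-Bernstein step, and the second is the empirical squared $L_2$ error of the parent odds products, which your induction hypothesis controls; no $K_r^{1/2}$ appears. Two caveats you should make explicit if you write this out: the rate claimed in Theorem \ref{odds} for pattern $r$ then tacitly requires the parents' $L_2$ rates to be of no larger order than the child's (e.g.\ comparable $K_s,N_s$ across patterns) --- a condition neither you nor the paper states; and the curvature you invoke must be that of the \emph{empirical} Hessian along the segment between $\hatalphar$ and $\alphar_0$, so your consistency-first localization must be supplemented by the matrix concentration step (the paper's Lemma \ref{quadratic}), not just the population Hessian bound.
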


Next, we establish the asymptotic normality of the empirical weighted estimating function $\hat{\bbP}_N\psi_\theta$ for each $\theta$. Let $u_\theta^r(\lob)$ be the conditional expectation of the estimating function given variables $\Lob$, \emph{i.e.} $\E\{\psi_\theta(L)\mid\Lob=\lob,R=r\}$, which is equal to $\E\{\psi_\theta(L)\mid\Lob=\lob,R=1_d\}$ under identifying assumptions \eqref{eqn:chen-propensity}. Note that the estimating function $\psi_\theta$ is a $q$-dimensional vector-valued function. We only need to consider each entry separately. Denote the $j$-th entry of $\psi_\theta$ and $u_\theta^r$ as $\psi_{\theta,j}$ and $u_{\theta,j}^r$ respectively. Let $n_{[ \ ]}\{\epsilon,\calF,\cdot\}$ denote the bracketing number of the set $\calF$ by $\epsilon$-brackets with respect to a specific norm. We will need the following additional conditions. 
\begin{assumption}\label{assump3} 
The following conditions hold for all missing pattern $r$ and all $\theta\in\Theta$ where $\Theta$ is a compact set:
\begin{enumerate}[label={\textbf{~\Alph*:}},ref={Assumption~\theassumption.\Alph*},leftmargin=1cm]

\item\label{assump-3A}
There exist constants $C_2>0$ and $\mu_2>1/2$ such that for any $\theta$ and each missing pattern $r$, there exists a parameter $\beta_\theta^r$ satisfying $\sup_{\lob\in\domr}|u_\theta^r(\lob)-\Phir(\lob)\tr\beta_\theta^r|\le C_2K_r^{-\mu_2}$. 

\item\label{assump-3B}
Each of the true propensity odds, $O^r$, is contained in a set of smooth functions $\calM^r$. There exists constants $C_\calM>0$ and $d_\calM>1/2$ such that $\log n_{[ \ ]}\{\epsilon,\calM^r,L^\infty\}\le C_\calM(1/\epsilon)^{1/d_\calM}$.

\item\label{assump-3C}
The sets $\Psi:=\{\psi_{\theta,j}:\theta\in\Theta,j=1,\ldots,q\}$ are contained in a function class $\calH$ such that there exists constants $C_\calH>0$ and $d_\calH>1/2$ such that $\log n_{[ \ ]}\{\epsilon,\calH,L_2(P)\}\le C_\calH(1/\epsilon)^{1/d_\calH}$.

\item\label{assump-3D}
There exists a constant $C_3$ such that for all $j=1,\ldots,q$, 
\begin{align*}
\E\left\{\psi_{\theta,j}(L)-u_{\theta,j}^r(\Lob)\right\}^2
\le \E\left[\underset{\theta}{\sup}\left\{\psi_{\theta,j}(L)-u_{\theta,j}^r(\Lob)\right\}^2\right]\le C_3^2\ .
\end{align*}

\item\label{assump-3E}
$N_r^{1/\{2(\mu_1+\mu_2)\}}=o(K_r)$, which means that the growth rate of the number of basis functions has a lower bound.
\end{enumerate}
\end{assumption}
\ref{assump-3A} is a requirement similar in spirit to \ref{assump-2D} such that the conditional expectation $u_\theta^r(\lob)$ can be well approximated as we extend the space spanned by the basis functions. \ref{assump-3B} and \ref{assump-3C} are conditions on the complexity of the function classes $\calM^r$ and $\calH$ to ensure uniform convergence over $\theta$. These assumptions are satisfied for many function classes. For instance, if $\calM^r$ is a Sobolev class of functions $f:[0,1]\mapsto\bbR$ such that $\|f\|_\infty\le1$ and the ($s-1$)-th derivative is absolutely continuous with $\int(f^{(s)})^2(x)dx\le1$ for some fixed $s\in\mathbb{N}$, then $\log n_{[ \ ]}\{\epsilon,\calM^r,L^\infty)\}\le C(1/\epsilon)^{1/s}$ by Example 19.10 of \citet{van2000asymptotic}. The condition $d_\calM>1/2$ is satisfied for all $s\ge1$. A H\"older class of functions also satisfies this condition \citep{fan2022optimal}. \ref{assump-3D} is a technical condition related to the envelope function such that we can apply the maximal inequality via bracketing. \ref{assump-3E} requires the number of basis functions to grow such that the approximation error decreases in general.

\begin{theorem}\label{psi}
Suppose that Assumptions \ref{assump1}, \ref{assump2} and \ref{assump3} hold. For any $\theta\in\Theta$,
\begin{align*}
\sqrt{N}\left[\hat{\bbP}_N\psi_\theta-\E\{\psi_\theta(L)\}\right]\overset{d}{\to}N(0,V_\theta)
\end{align*}
where $V_\theta$ is defined in Theorem \ref{efficiency-bound}.
\end{theorem}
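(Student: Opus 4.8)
The plan is to establish an asymptotically linear representation
\[
\sqrt{N}\left[\hat{\bbP}_N\psi_\theta-\E\{\psi_\theta(L)\}\right]
=\frac{1}{\sqrt{N}}\sum_{i=1}^N\left[F_\theta(L_i,R_i)-\E\{\psi_\theta(L)\}\right]+o_p(1),
\]
with $F_\theta$ the function from Theorem \ref{efficiency-bound}, and then apply the Lindeberg--L\'evy central limit theorem to the iid, square-integrable summands. First I would split, writing $w=1/\pi$ for the true inverse propensity,
\[
\hat{\bbP}_N\psi_\theta-\E\{\psi_\theta\}
=\left[\frac{1}{N}\sum_{i=1}^N\mathsf{1}_{R_i=1_d}w(L_i)\psi_\theta(L_i)-\E\{\psi_\theta\}\right]
+\frac{1}{N}\sum_{i=1}^N\mathsf{1}_{R_i=1_d}\{\hat{w}(L_i)-w(L_i)\}\psi_\theta(L_i).
\]
The first summand is an ordinary empirical average whose influence contribution is $\mathsf{1}_{R=1_d}w\psi_\theta$, matching the complete-case term of $F_\theta$; the whole difficulty lies in linearizing the second, weight-estimation, summand.

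For that term I would exploit the recursive form $\hat{w}=\sum_{r}\sum_{\Xi\in\Pi_r}\prod_{s\in\Xi}\hat{O}^s$ and telescope each path product,
\[
\prod_{s\in\Xi}\hat{O}^s-\prod_{s\in\Xi}O^s
=\sum_{\substack{s_k\in\Xi\\ s_k\neq 1_d}}\Big(\prod_{j<k}O^{s_j}\Big)\big(\hat{O}^{s_k}-O^{s_k}\big)\Big(\prod_{j>k}\hat{O}^{s_j}\Big),
\]
so that the correction decomposes into one contribution per non-source vertex $s_k$ on each path $\Xi$, in exact correspondence with the summands $F_\theta^{\Xi,s}$ of the efficiency bound. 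Replacing the trailing factors $\prod_{j>k}\hat{O}^{s_j}$ by their population counterparts costs only a product of two estimation errors, which the $L_\infty$ and $L_2$ rates of Theorem \ref{odds} together with Assumptions \ref{assump-2C} and \ref{assump-3E} force to be $o_p(N^{-1/2})$ even after summation along the paths. The heart of the argument is then to show that each remaining first-order piece equals the empirical average of $F_\theta^{\Xi,s_k}$ up to $o_p(N^{-1/2})$.

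At this point I would invoke the sequential balancing first-order condition \eqref{eqn:empirical-reweighted-balance}, which equates a reweighted complete-case average over $\Phir$ with the pattern-$s_k$ average and thereby produces the characteristic pair $\mathsf{1}_{R=s_k}(\cdot)-\mathsf{1}_{R\in\PAsk}O^{s_k}(\cdot)$; the approximation of the conditional mean $u_\theta^{s_k}$ by $\Phi^{s_k}(\cdot)\tr\beta_\theta^{s_k}$ in Assumption \ref{assump-3A} then lets me replace $\psi_\theta$ by $u_\theta^{s_k}$ inside this pair, yielding exactly $F_\theta^{\Xi,s_k}$. This replacement is the Neyman-orthogonality step: the residual is a product of the propensity-odds error and the $u_\theta$-approximation error, of order $K_r^{-(\mu_1+\mu_2)}$ or $\sqrt{K_r/N_r}\,K_r^{-\mu_2}$, and Assumptions \ref{assump-2D}, \ref{assump-3A}, \ref{assump-2G}, and in particular the lower bound on $K_r$ in Assumption \ref{assump-3E}, are precisely what make it $o(N^{-1/2})$.

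To make the above uniform in $\theta\in\Theta$ I would control the empirical-process remainders by stochastic equicontinuity: the bracketing-entropy bounds on the odds class $\calM^r$ (Assumption \ref{assump-3B}) and the estimating-function class $\calH$ (Assumption \ref{assump-3C}), with the envelope bound of Assumption \ref{assump-3D}, give a maximal inequality via bracketing that kills the gap between empirical and population versions of every cross term at the required rate, while the consistency in Theorem \ref{odds} places $\hat{O}^r$ in a shrinking neighbourhood of $O^r$ inside $\calM^r$ so these bounds apply. Collecting the complete-case term with all vertex contributions reproduces
\[
F_\theta=\mathsf{1}_{R=1_d}\Big\{1+\sum_{\Xi\in\Pi}\prod_{s\in\Xi}O^s\Big\}\psi_\theta
+\sum_{1_d\neq r\in\calR}\sum_{\Xi\in\Pi_r}\sum_{s\in\Xi}F_\theta^{\Xi,s},
\]
and the CLT on $F_\theta(L_i,R_i)-\E\{\psi_\theta\}$ delivers the limit with covariance $V_\theta=\E\{F_\theta F_\theta\tr\}$ of Theorem \ref{efficiency-bound}. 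I expect the main obstacle to be exactly this simultaneous control of accumulated multiplicative estimation error and sieve approximation bias in the growing-$K_r$ regime: guaranteeing that all second-order products and all bracketing remainders are genuinely $o_p(N^{-1/2})$ requires a delicate balancing of the rates $\mu_1,\mu_2$ against the growth of $K_r$ and the penalty level, and careful bookkeeping of how errors compound along long paths in the pattern graph.
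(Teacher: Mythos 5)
Your proposal shares the paper's high-level skeleton (asymptotic linearization, an orthogonality argument pairing nuisance errors, bracketing entropy for uniformity, then the CLT), but the decomposition you build it on has two genuine gaps. The first is the telescoping step. After you replace the trailing factors $\prod_{j>k}\hat{O}^{s_j}$ by $\prod_{j>k}O^{s_j}$, the cost is a term of the form $(\hat{O}^{s_k}-O^{s_k})\times\bigl(\prod_{j>k}\hat{O}^{s_j}-\prod_{j>k}O^{s_j}\bigr)$, i.e.\ a product of \emph{two propensity-odds errors}. By Theorem~\ref{odds} and Cauchy--Schwarz this is $O_p\bigl(K_r/N_r+K_r^{-2\mu_1}+\sqrt{K_r/N_r}\,K_r^{-\mu_1}\bigr)$; after multiplying by $\sqrt{N}$, the terms $K_r/\sqrt{N_r}$ and $K_r^{1/2-\mu_1}$ vanish under \ref{assump-2C} and $\mu_1>1/2$, but the squared-bias term $\sqrt{N}K_r^{-2\mu_1}$ vanishes only if $K_r\gg N^{1/(4\mu_1)}$. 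Assumption \ref{assump-3E} guarantees only $K_r\gg N_r^{1/\{2(\mu_1+\mu_2)\}}$, which controls the \emph{mixed} product $K_r^{-(\mu_1+\mu_2)}$ but not $K_r^{-2\mu_1}$ unless $\mu_2\le\mu_1$ --- which is not assumed. This is exactly why the paper never multiplies two odds errors: it treats $\hat{Q}^r$ as a single unit and decomposes $\hat{\bbP}_N^r\psi_\theta-\E\{\mathsf{1}_{R=r}\psi_\theta(L)\}$ into $S^r_{\theta,1}$ (odds error times the conditionally mean-zero residual $\psi_\theta-u^r_\theta$, killed by a maximal inequality with a shrinking envelope, Lemma~\ref{S1}), $S^r_{\theta,2}$ (odds error times the sieve error of $u^r_\theta$ --- precisely the $K_r^{-(\mu_1+\mu_2)}$ product that \ref{assump-3E} is tailored to, Lemma~\ref{S2}), $S^r_{\theta,3}$ (controlled by the balancing first-order conditions together with $\lambda=o(1/\sqrt{K_rN_r})$, Lemma~\ref{S3}), and a mean-zero $S^r_{\theta,4}$ handled by the CLT.

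The second gap is the matching step. The sequential balancing condition \eqref{eqn:empirical-reweighted-balance} equates the pattern-$s_k$ average of basis functions with a reweighted \emph{complete-case} average; its first-order conditions therefore produce pairs of the form $\mathsf{1}_{R=s_k}(\cdot)-\mathsf{1}_{R=1_d}\hat{Q}^{s_k}(\cdot)$, never $\mathsf{1}_{R=s_k}(\cdot)-\mathsf{1}_{R\in\PAsk}O^{s_k}(\cdot)$: by construction the sequential estimator never uses parent-pattern data except when $1_d\in\PAsk$. So your per-vertex pieces cannot be identified with the summands $F_\theta^{\Xi,s}$ of Theorem~\ref{efficiency-bound} merely by invoking the FOC; you would additionally need the (nontrivial, purely algebraic) identity that the complete-case-based linearization agrees with --- or at least has the same second moment as --- the parent-based efficient influence function. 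The paper sidesteps this entirely by deriving the influence function in the compact form $\mathsf{1}_{R=1_d}\sum_{r\in\calR}Q^r(L)\{\psi_\theta(L)-u^r_\theta(\Lob)\}+\sum_{r\in\calR}\mathsf{1}_{R=r}u^r_\theta(\Lob)-\E\{\psi_\theta(L)\}$ and taking $V_\theta$ to be its second moment. Relatedly, the function multiplying $\hat{O}^{s_k}-O^{s_k}$ in your expansion is a path-weighted conditional mean (the $m_{\Xi,k}$ appearing in the proof of Theorem~\ref{efficiency-bound}), not $u^{s_k}_\theta$, and Assumption \ref{assump-3A} provides a sieve approximation only for $u^r_\theta$; so even your ``replace $\psi_\theta$ by $u_\theta^{s_k}$'' step requires approximation conditions the paper does not impose.
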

To prove the theorem, we utilize a few lemmas of the bracketing number $n_{[ \ ]}\{\epsilon,\calF,\cdot\}$. The proofs of the theorem and lemmas are given separately in Appendices.

With further assumptions, we show the consistency and asymptotical normality of $\hat{\theta}_N$ that solves $\hat{\bbP}_N\psi_\theta=0$.

\begin{assumption}\label{assump4} 
The following conditions hold for all missing pattern $r$ and all $\theta\in\Theta$:
\begin{enumerate}[label={\textbf{~\Alph*:}},ref={Assumption~\theassumption.\Alph*},leftmargin=1cm]
\item\label{assump-4A}
For any sequence $\{\theta_n\}\in\Theta$ , $\E\{\underset{1\le j\le q}{\max}|\psi_{\theta_n,j}(L)|\}\to 0$ implies 
$$
\|\theta_n-\theta_0\|_2\to 0.
$$

\item\label{assump-4B}
For each $j$-th entry and any $\delta>0$, there exists an envelop function $f_{\delta,j}$ such that $|\psi_{\theta,j}(l)-\psi_{\theta_0,j}(l)|\le f_{\delta,j}(l)$ for any $\theta$ such that $\|\theta-\theta_0\|_2\le\delta$. Besides, $\|f_{\delta,j}\|_{P,2}\to0$ when $\delta\to0$.
\end{enumerate}
\end{assumption}
\ref{assump-4A} is a standard regularity assumption for Z-estimation. \ref{assump-4B} corresponds to the continuity assumption on $\psi(l,\theta)$ with respect to $\theta$. For example, the Lipschitz class of functions $\{f_\theta:\theta\in\Theta\}$ satisfies this condition if $\Theta$ is compact. More precisely, there exists an uniform envelop function $f$ such that $|f_{\theta_1}(l)-f_{\theta_2}(l)|\le\|\theta_1-\theta_2\|_2f(l)$ for any $\theta_1,\theta_2\in\Theta$ where $\|f\|_{P,2}<\infty$. Now, we establish the theorem.

\section{Proof of Theorem \ref{odds}}
\label{sec:proof-odds}

{\bf Proof sketch:}
\ref{assump-2D} assumes that there is a close approximation of the true propensity odds. We will show our estimator is close to the approximation. With the help of a few inequalities, the distance of functions is proportionally bounded by the distance of coefficients. The key to the proof is to show the distance between two coefficients converges with a certain order. The problem is converted to the study of a quadratic form with random coefficients in Lemma \ref{alpha}. The quadratic coefficients form a symmetric random matrix. By the Weyl inequality, we can connect the random matrix with the magnitude of basis functions. So, we can apply the matrix Bernstein inequality to provide bounds on the spectral norm, \emph{i.e.}, the largest eigenvalue. Similarly, we can show that the linear coefficients are also bounded. Lemmas \ref{quadratic} and \ref{linear} provide the bound for the quadratic and linear coefficients respectively.

\begin{proof} 
By the triangle inequality and \ref{assump-2D},
{\small
\begin{align*}
&\quads\underset{\lob\in\domr}{\sup}\left| O^r(\lob;\hatalphar)-O^r(\lob)\right|\\
&\le\underset{\lob\in\domr}{\sup}\left| O^r(\lob;\hatalphar)-O^r(\lob;\alpha_{K_r}^*)\right|
+\underset{\lob\in\domr}{\sup}\left| O^r(\lob;\alpha_{K_r}^*)-O^r(\lob)\right|\\
&\le\underset{\lob\in\domr}{\sup}\left|\exp\left\{\Phir(\lob)\tr\hatalphar\right\}-\exp\left\{\Phir(\lob)\tr\alpha_{K_r}^*\right\}\right|
+C_1K_r^{-\mu_1}\ .
\end{align*}
}
Since the exponential function is locally Lipschitz continuous, $|e^x-e^y|=e^y|e^{x-y}-1|\le2e^y|x-y|$ if $|x-y|\le\ln2$. By the triangle inequality, \ref{assump-2A}, and \ref{assump-2D}, 
{\small
\begin{align*}
\underset{\lob\in\domr}{\sup}O^r(\lob;\alpha_{K_r}^*)
&\le\underset{\lob\in\domr}{\sup}O^r(\lob)
+\underset{\lob\in\domr}{\sup}\left|O^r(\lob;\alpha_{K_r}^*)-O^r(\lob)\right|\\
&\le C_0+C_1K_r^{-\mu_1}\ .
\end{align*}
}
Thus, there exists large enough $N^*$ such that $\sup_{\lob\in\domr}O^r(\lob;\alpha_{K_r}^*)
\le 2C_0$ for all $N\ge N^*$. Therefore,
\begin{align}\label{odds-diff}
|\exp\left\{\Phir(\lob)\tr\hatalphar\right\}-\exp\{\Phir(\lob)\tr\alpha_{K_r}^*\}|
\le4C_0|\Phir(\lob)\tr\hatalphar-\Phir(\lob)\tr\alpha_{K_r}^*|
\end{align}
if $|\Phir(\lob)\tr\hatalphar-\Phir(\lob)\tr\alpha_{K_r}^*|\le\ln2$. By the Cauchy inequality and \ref{assump-2E}, 
$|\Phir(\lob)\tr\hatalphar-\Phir(\lob)\tr\alpha_{K_r}^*|\le K_r^{1/2}\|\hatalphar-\alpha_{K_r}^*\|_2$ for any $\lob\in\domr$. By Lemma \ref{alpha}, 
$\|\hatalphar-\alpha_{K_r}^*\|_2=O_p(\sqrt{K_r/N_r}+K_r^{-\mu_1})$. More precisely, for any $\epsilon>0$, there exists a finite $M_\epsilon>0$ and $N_\epsilon>0$ such that 
\begin{align*}
P\left\{\|\hatalphar-\alpha_{K_r}^*\|_2
>M_\epsilon\left(\sqrt{\frac{K_r}{N_r}}+K_r^{-\mu_1}\right)
\right\}<\epsilon
\end{align*}
for any $N\ge N_\epsilon$. Considering the complementary event, we can find $N_\epsilon^*$ large enough such that $M_\epsilon(K_r/\sqrt{N_r}+K_r^{1/2-\mu_1})<\ln2$ which makes the inequality \eqref{odds-diff} hold for any $N\ge N_\epsilon^*$. Then,
{\footnotesize
\begin{align*}
P\left\{\underset{\lob\in\domr}{\sup}\left| O^r(\lob;\hatalphar)-O^r(\lob)\right|
\le4C_0M_\epsilon\left(\frac{K_r}{\sqrt{N_r}}+K_r^{\frac{1}{2}-\mu_1}\right)+C_1K_r^{-\mu_1}
\right\}\ge1-\epsilon
\end{align*}
}
for all $N\ge\max\{N^*,N_\epsilon,N_\epsilon^*\}$. In other words, 
\begin{align*}
\underset{\lob\in\domr}{\sup}|O^r(\lob;\hatalphar)-O^r(\lob)|=O_p\left(\frac{K_r}{\sqrt{N_r}}+K_r^{\frac{1}{2}-\mu_1}\right)\ .
\end{align*}
Now, we consider the $L_2(P)$ norm.
{\small
\begin{align*}
&\quads\|O^r(\Lob;\hatalphar)-O^r(\Lob)\|_{P,2}\\
&\le\|O^r(\Lob;\hatalphar)-O^r(\Lob;\alpha_{K_r}^*)\|_{P,2}+\|O^r(\Lob;\alpha_{K_r}^*)-O^r(\Lob)\|_{P,2}\\
&\le\|O^r(\Lob;\hatalphar)-O^r(\Lob;\alpha_{K_r}^*)\|_{P,2}+\underset{\lob\in\domr}{\sup}\left| O^r(\lob;\alpha_{K_r}^*)-O^r(\lob)\right|\ .
\end{align*}
}
Following the similar arguments, when $|\Phir(\lob)\tr\hatalphar-\Phir(\lob)\tr\alpha_{K_r}^*|\le\ln2$, we have
\begin{align*}
&\quads\|O^r(\Lob;\hatalphar)-O^r(\Lob;\alpha_{K_r}^*)\|_{P,2}^2\\
&\le16C_0^2\int\left\{\Phir(\Lob)\tr\hatalphar-\Phir(\Lob)\tr\alpha_{K_r}^*\right\}^2dP(L)\\
&\le16C_0^2\int(\hatalphar-\alpha_{K_r}^*)\tr\Phir(\Lob)\Phir(\Lob)\tr(\hatalphar-\alpha_{K_r}^*)dP(L)\\
&\le16C_0^2\underset{\lob\in\domr}{\sup}\lambda_{\max}\{\Phir(\lob)\Phir(\lob)\tr\}\int(\hatalphar-\alpha_{K_r}^*)\tr(\hatalphar-\alpha_{K_r}^*)dP(L)\\
&\le16C_0^2\lambda_{\max}^*\|\hatalphar-\alpha_{K_r}^*\|_2^2\ .
\end{align*}
Thus, $
\|O^r(\Lob;\hatalphar)-O^r(\Lob;\alpha_{K_r}^*)\|_{P,2}=O_p(\sqrt{K_r/N_r}+K_r^{-\mu_1})$. Therefore,
\begin{align*}
\|O^r(\Lob;\hatalphar)-O^r(\Lob)\|_{P,2}
=O_p\left(\sqrt{\frac{K_r}{N_r}}+K_r^{-\mu_1}\right)\ .
\end{align*}
\end{proof}

\section{Proof of Theorem \ref{psi}}
\label{sec:proof-normality}

{\bf Proof sketch:}
We further decompose the error terms and show that the first three terms converge to 0 with the rate faster than $1/\sqrt{N}$, and the last term contributes as the influence function. Since the components in the decomposition involve the estimator, they should be treated as random functions. So, we consider the uniform convergence over a set of functions and apply the theory in \citet{van2000asymptotic}. With the maximal inequality via bracketing, the problem is converted to the control of the entropy integral, which requires the calculation of bracketing numbers. 
Lemmas \ref{bracket1}--\ref{bracket4} are bracketing inequalities which could be of independent interest.

\begin{proof}
First, recall that $\hat{\bbP}_N\psi_\theta-\E\{\psi_\theta(L)\}$ has the following decomposition 
{\small
\begin{align*}
\hat{\bbP}_N\psi_\theta-\E\{\psi_\theta(L)\}
=\sum_{r\in\calR}\left[\frac{1}{N}\sum_{i=1}^N\mathsf{1}_{R_i=1_d}O^r(\Lobi;\hatalphar)\psi_\theta(L_i)-\E\{\mathsf{1}_{R=r}\psi_\theta(L)\}\right]\ .
\end{align*}
}
For each missing pattern $r$, denote $1/N\sum_{i=1}^N\mathsf{1}_{R_i=1_d}O^r(\Lobi;\hatalphar)\psi_\theta(L_i)$ as $\hat{\bbP}_N^r\psi_\theta$. Then, $\hat{\bbP}_N^r\psi_\theta-\E\{\mathsf{1}_{R=r}\psi_\theta(L)\}$ can be decomposed into 4 parts:
\begin{align*}
S_{\theta,1}^r
&=\frac{1}{N}\sum_{i=1}^N\mathsf{1}_{R_i=1_d}\left\{O^r(\Lobi;\hatalphar)-O^r(\Lob_i)\right\}\left\{\psi_\theta(L_i)-u^r_\theta(\Lob_i)\right\}\ ,\\
S_{\theta,2}^r
&=\frac{1}{N}\sum_{i=1}^N\left\{\mathsf{1}_{R_i=1_d}O^r(\Lobi;\hatalphar)-\mathsf{1}_{R_i=r}\right\}\left\{u^r_\theta(\Lob_i)-\Phir(\Lob_i)\tr\beta_\theta^r\right\}\ ,\\
S_{\theta,3}^r
&=\frac{1}{N}\sum_{i=1}^N\left\{\mathsf{1}_{R_i=1_d}O^r(\Lobi;\hatalphar)-\mathsf{1}_{R_i=r}\right\}\Phir(\Lob_i)\tr\beta_\theta^r\ ,\\
S_{\theta,4}^r
&=\frac{1}{N}\sum_{i=1}^N\mathsf{1}_{R_i=1_d}O^r(\Lob_i)\left\{\psi_\theta(L_i)-u^r_\theta(\Lob_i)\right\}\\
&\quads+\frac{1}{N}\sum_{i=1}^N\mathsf{1}_{R_i=r}u^r_\theta(\Lob_i)-\E\{\mathsf{1}_{R=r}\psi_\theta(L)\}\ .
\end{align*}
For any fixed $\theta\in\Theta$ and any missing pattern $r$, by Lemmas \ref{S1}, \ref{S2}, and \ref{S3}, $\sqrt{N}|S_{\theta,i}^r|=o_p(1), i=1,2,3$. It's easy to see that $\E(S_{\theta,4}^r)=0$. Therefore, by the central limit theorem,
\begin{align*}
\sqrt{N}\left[\hat{\bbP}_N\psi_\theta-\E\{\psi_\theta(L)\}\right]\to\calN(0,V_\theta)
\end{align*} 
where $V_\theta=\E\{F_\theta(L,R)F_\theta(L,R)\tr\}$ and
{\small
\begin{align*}
F_\theta(L,R)=\mathsf{1}_{R=1_d}\sum_{r\in\calR}O^r(\Lob)\{\psi_\theta(L)-u_\theta^r(\Lob)\}+\sum_{r\in\calR}\mathsf{1}_{R=r}u_\theta^r(\Lob)-\E\{\psi_\theta(L)\}\ .
\end{align*}
}
\end{proof}

\section{Proof of Theorem \ref{theta}}

{\bf Proof sketch:}
First, by \ref{assump-4A}, the convergence of $\hat{\theta}_N$ should be implied by the uniform convergence of $\hat{\bbP}_N\psi_\theta$ over $\theta$ in a compact set. Second, we study the convergence of $\E\{\psi_{\hat{\theta}_N}(L)\}$ and apply the Delta method to obtain the limiting distribution of $\hat{\theta}_N$. The functional version of the central limit theorem, \emph{i.e.} Donsker's theorem, is applied to achieve uniform convergence.

\begin{proof}
Denote the empirical average $N^{-1}\sum_{i=1}^N\psi_\theta(L_i)$ as $\bbP_N\psi_\theta$ and the centered and scaled version $\sqrt{N}[\bbP_N\psi_\theta-\E\{\psi_\theta(L)\}]$ as $\bbG_N\psi_\theta$. Recall the proposed weighted average is
\begin{align*}
\hat{\bbP}_N\psi_\theta=\frac{1}{N}\sum_{i=1}^N\left\{\mathsf{1}_{R_i=1_d}\hat{w}(L_i)\psi_\theta(L_i)\right\}\ .
\end{align*}
Since $\hat{\theta}_N$ is the solution to $\hat{\bbP}_N\psi_\theta=0$, by Lemma \ref{uniformbound},
\begin{align*}
\E\{\psi_{\hat{\theta}_N}(L)\}=\E\{\psi_{\hat{\theta}_N}(L)\}-\hat{\bbP}_N\psi_{\hat{\theta}_N}
\le\underset{\theta\in\Theta}{\sup}\left|\hat{\bbP}_N\psi_\theta-\E\{\psi_\theta(L)\}\right|=o_p(1)\ .
\end{align*}
By identifiability condition \ref{assump-4A}, $\|\hat{\theta}_N-\theta_0\|_2\xrightarrow{P}0$. 

Next, we investigate the asymptotic normality of $\hat{\theta}_N$. Although $\E\{\psi_{\hat{\theta}_N}(L)\}$ has a form of expectation over the population, it can be viewed as a random vector because $\hat{\theta}_N$ depends on the observations. Since $\hat{\theta}_N\xrightarrow{P}\theta_0$, one would expect that $\E\{\psi_{\hat{\theta}_N}(L)\}$ converges to $\E\{\psi_{\theta_0}(L)\}$ in some way. If the limiting distribution is known, one could apply the Delta method to obtain limiting distribution of $\hat{\theta}_N$. From Theorem \ref{psi}, we have the asymptotic normality of $\hat{\bbP}_N\psi_\theta$ for any fixed $\theta\in\Theta$. It is natural to consider
\begin{align}\label{difference-of-interest}
\left[\hat{\bbP}_N\psi_{\hat{\theta}_N}-\E\{\psi_{\hat{\theta}_N}(L)\}\right]
-\left[\hat{\bbP}_N\psi_{\theta_0}-\E\{\psi_{\theta_0}(L)\}\right]
\end{align}
The above difference has a similar form to the asymptotic equicontinuity, which can be derived if the function class is Donsker. More precisely, consider the class of $j$-th entry of the estimating functions, $\Psi_j:=\{\psi_{\theta,j}:\theta\in\Theta\}$. It is Donsker due to Theorem 19.5 in \cite{van2000asymptotic} and
\begin{align*}
J_{[ \ ]}\{1,\Psi_j,L_2(P)\}
&=\int_0^1\sqrt{\log n_{[ \ ]}\{\epsilon,\Psi_j,L_2(P)\}}d\epsilon\\
&\le\int_0^1\sqrt{\log n_{[ \ ]}\{\epsilon,\calH,L_2(P)\}}d\epsilon\\
&\le\int_0^1\sqrt{C_\calH}\epsilon^{-\frac{1}{2d_\calH}}d\epsilon
=\sqrt{C_\calH}\le\infty\ .
\end{align*}
Then, by Section 2.1.2 in \cite{wellner2013weak}, we have the following asymptotic equicontinuity: for any $\epsilon,\eta>0$, there exists $C_{\epsilon,\eta}>0$ and $N_{\epsilon,\eta}$ such that for all $N\ge N_{\epsilon,\eta}$, 
\begin{align*}
P\left(\underset{\psi_{\theta,j}:\rho_P(\psi_{\theta,j}-\psi_{\theta_0,j})<C_{\epsilon,\eta}}{\sup}\left|\bbG_N\psi_{\theta,j}-\bbG_N\psi_{\theta_0,j}\right|\ge\epsilon\right)\le\frac{\eta}{2}
\end{align*}
where the seminorm $\rho_P$ is defined as $\rho_P(f)=\{P(f-Pf)^2\}^{1/2}$. Consider
\begin{align*}
&\quads\bbG_N\psi_{\hat{\theta}_N,j}-\bbG_N\psi_{\theta_0,j}\\
&=\sqrt{N}\left[\bbP_N\psi_{\hat{\theta}_N,j}-\E\{\psi_{\hat{\theta}_N,j}(L)\}\right]
-\sqrt{N}\left[\bbP_N\psi_{\theta_0,j}-\E\{\psi_{\theta_0,j}(L)\}\right]\ .
\end{align*}
Notice that $\rho_P(f)\le\|f\|_{P,2}$. By \ref{assump-4B}, for any $\delta>0$, there exists an envelop function $f_{\delta,j}$ such that
{\small
\begin{align*}
P\left(\|\hat{\theta}_N-\theta_0\|_2<\delta\right)
\le P\left(\|\psi_{\hat{\theta}_N,j}-\psi_{\theta_0,j}\|_{P,2}<C_\delta\right)
\le P\left\{\rho_P(\psi_{\hat{\theta}_N,j}-\psi_{\theta_0,j})<C_\delta\right\}
\end{align*}
}
where $C_\delta=\|f_{\delta,j}\|_{P,2}\to0$ when $\delta\to0$. Thus, there exists $\delta_{\epsilon,\eta}$ small enough such that $C_\delta\le C_{\epsilon,\eta}$ for all $\delta\le\delta_{\epsilon,\eta}$. Then, by the consistency of $\hat{\theta}_N$, there exists $N_{\epsilon,\eta}^*$ such that for all $N\ge N_{\epsilon,\eta}^*$, 
\begin{align*}
P\left(\|\hat{\theta}_N-\theta_0\|_2\ge\delta_{\epsilon,\eta}\right)
\le\frac{\eta}{2}\ .
\end{align*}
Thus, 
\begin{align*}
P\left\{\rho_P(\psi_{\hat{\theta}_N,j}-\psi_{\theta_0,j})<C_{\epsilon,\eta}\right\}
>1-\frac{\eta}{2}\ .
\end{align*}
Note that the event $\rho_P(\psi_{\hat{\theta}_N,j}-\psi_{\theta_0,j})<C_{\epsilon,\eta}$ and
\begin{align*}
\underset{\psi_{\theta,j}:\rho_P(\psi_{\theta,j}-\psi_{\theta_0,j})<C_{\epsilon,\eta}}{\sup}\left|\bbG_N\psi_{\theta,j}-\bbG_N\psi_{\theta_0,j}\right|<\epsilon
\end{align*}
happening together implies $|\bbG_N\psi_{\hat{\theta}_N,j}-\bbG_N\psi_{\theta_0,j}|<\epsilon$. By taking complementary event, for any $N\ge\max\{N_{\epsilon,\eta},N_{\epsilon,\eta}^*\}$, we obtain
\begin{align*}
&P\left(\left|\bbG_N\psi_{\hat{\theta}_N,j}-\bbG_N\psi_{\theta_0,j}\right|\ge\epsilon\right)
\le P\left\{\rho_P(\psi_{\hat{\theta}_N,j}-\psi_{\theta_0,j})\ge C_{\epsilon,\eta}\right\}\\
&\quad\quad\quad\quad
+P\left(\underset{\psi_{\theta,j}:\rho_P(\psi_{\theta,j}-\psi_{\theta_0,j})<C_{\epsilon,\eta}}{\sup}\left|\bbG_N\psi_{\theta,j}-\bbG_N\psi_{\theta_0,j}\right|\ge\epsilon\right)\\
&\le\frac{\eta}{2}+\frac{\eta}{2}=\eta\ .
\end{align*}
That is, for each $j$-th entry,
\begin{align}\label{equicontinuity}
\sqrt{N}\left[\bbP_N\psi_{\hat{\theta}_N,j}-\E\{\psi_{\hat{\theta}_N,j}(L)\}\right]
-\sqrt{N}\left[\bbP_N\psi_{\theta_0,j}-\E\{\psi_{\theta_0,j}(L)\}\right]
\xrightarrow{P}0\ .
\end{align}
Therefore, by the comparison between terms in \eqref{difference-of-interest} and $\bbG_N\psi_{\hat{\theta}_N,j}-\bbG_N\psi_{\theta_0,j}$, we should consider
\begin{align*}
\sqrt{N}\left[\hat{\bbP}_N\psi_{\hat{\theta}_N,j}-\bbP_N\psi_{\hat{\theta}_N,j}\right]
-\sqrt{N}\left[\hat{\bbP}_N\psi_{\theta_0,j}-\bbP_N\psi_{\theta_0,j}\right]
\end{align*}
which can be decomposed as the following terms. 
\begin{align*}
&\quads\sum_{r\in\calR}\left(S_{\hat{\theta}_N,1}^r+S_{\hat{\theta}_N,2}^r+S_{\hat{\theta}_N,3}^r+S_{\hat{\theta}_N,5}^r-S_{\hat{\theta}_N,6}^r\right)\\
&-\sum_{r\in\calR}\left(S_{\theta_0,1}^r+S_{\theta_0,2}^r+S_{\theta_0,3}^r+S_{\theta_0,5}^r-S_{\theta_0,6}^r\right)
\end{align*}
where
\begin{align*}
S_{\theta,5}^r&=\frac{1}{N}\sum_{i=1}^N\left\{\mathsf{1}_{R_i=1_d}O^r(\Lob_i)-\mathsf{1}_{R_i=r}\right\}\psi_\theta(L_i)\ ,\\
S_{\theta,6}^r&=\frac{1}{N}\sum_{i=1}^N\left\{\mathsf{1}_{R_i=1_d}O^r(\Lob_i)-\mathsf{1}_{R_i=r}\right\}u^r_\theta(\Lob_i)\ .
\end{align*}
By Lemmas \ref{S1}, \ref{S2}, and \ref{S3}, $\sqrt{N}\left|S_{\theta,i}^r\right|=o_p(1),i=1,2,3$ for any missing pattern $r$ and $\theta\in\Theta$. Combing with Lemmas \ref{S5} and \ref{S6}, we have
\begin{align}\label{weighted-equicontinuity}
\sqrt{N}\left(\hat{\bbP}_N\psi_{\hat{\theta}_N}-\bbP_N\psi_{\hat{\theta}_N}-\hat{\bbP}_N\psi_{\theta_0}+\bbP_N\psi_{\theta_0}\right)
\xrightarrow{P}0\ .
\end{align}
By Equations \eqref{weighted-equicontinuity} and \eqref{equicontinuity}, we have
\begin{align*}
\sqrt{N}\left[\hat{\bbP}_N\psi_{\hat{\theta}_N}-\E\{\psi_{\hat{\theta}_N}(L)\}-\hat{\bbP}_N\psi_{\theta_0}+\E\{\psi_{\theta_0}(L)\}\right]
\xrightarrow{P}0\ .
\end{align*}
Since $\hat{\bbP}_N\psi_{\hat{\theta}_N}=0$ and $\E\{\psi_{\theta_0}(L)\}=0$, the above equation can be rewritten as
\begin{align*}
\sqrt{N}\left[\E\{\psi_{\hat{\theta}_N}(L)\}-\E\{\psi_{\theta_0}(L)\}+\hat{\bbP}_N\psi_{\theta_0}-\E\{\psi_{\theta_0}(L)\}\right]
\xrightarrow{P}0\ .
\end{align*}
By Theorem \ref{psi},
\begin{align*}
\sqrt{N}\left[\hat{\bbP}_N\psi_{\theta_0}-\E\{\psi_{\theta_0}(L)\}\right]
\overset{d}{\to}N(0,V_{\theta_0})\ .
\end{align*}
Since $D_{\theta_0}$ is nonsingular, by multivariate Delta method,
\begin{align*}
\sqrt{N}(\hat{\theta}_N-\theta_0)\overset{d}{\to}
N\left(0,D_{\theta_0}^{-1}V_{\theta_0}D_{\theta_0}^{-1\tr}\right)\ .
\end{align*}
Therefore, $\hat{\theta}_N$ is semiparametrically efficient.

Lastly, we look into the estimator for the asymptotic variance. We have the following decomposition:
\begin{align*}
\hat{D}_{\hat{\theta}_N}-D_{\theta_0}
&=\frac{1}{N}\sum_{i=1}^N\left[\mathsf{1}_{R_i=1_d}\left\{\hat{w}(L_i)-\sum_{r\in\calR}O^r(\Lob_i)\right\}\dot{\psi}_{\hat{\theta}_N}(L_i)\right]\\
&\quads+\frac{1}{N}\sum_{i=1}^N\mathsf{1}_{R_i=1_d}\frac{1}{P(R_i=1_d\mid L_i)}\dot{\psi}_{\hat{\theta}_N}(L_i)-D_{\hat{\theta}_N}+D_{\hat{\theta}_N}-D_{\theta_0}\ .
\end{align*}
where $\hat{w}(L_i)=\sum_{r\in\calR}O^r(\Lobi;\hatalphar)$. 

Consider the first term on the right hand side. By Theorem \ref{odds},
\begin{align*}
&\quads\|\hat{w}(l)-1/P(R=1_d\mid l)\|_\infty\\
&\le\sum_{r\in\calR}\|O^r(\ \cdot\ ;\hatalphar)-O^r\|_\infty
=O_p(\sqrt{K_r^2/N_r}+K_r^{1/2-\mu_1})\ .
\end{align*}
Let
\begin{align*}
\bfF=\frac{1}{N}\sum_{i=1}^N\mathsf{1}_{R_i=1_d}\dot{\psi}_{\hat{\theta}_N}(L_i)\dot{\psi}_{\hat{\theta}_N}(L_i)\tr\ .
\end{align*}
Following the similar arguments in Lemma \ref{linear}, one can see that
\begin{align*}
&\quads\left\|\frac{1}{N}\sum_{i=1}^N\left[\mathsf{1}_{R_i=1_d}\left\{\hat{w}(L_i)-\sum_{r\in\calR}O^r(\Lob_i)\right\}\dot{\psi}_{\hat{\theta}_N}(L_i)\right]\right\|_2^2\\
&\le\underset{\theta\in\Theta}{\sup}\frac{1}{N}\sum_{i=1}^N\mathsf{1}_{R_i=1_d}\left\{\hat{w}(L_i)-\sum_{r\in\calR}O^r(\Lob_i)\right\}^2\lambda_{\max}\left\{\bfF\right\}\\
&\le\lambda_{\max}'\|\hat{w}(l)-1/P(R=1_d\mid l)\|_\infty^2=o_p(1)\ .
\end{align*}

Consider the second term on the right hand side. Notice that $\dot{\psi}_\theta$ is the Jacobian matrix of $\psi_\theta$. We consider all the entries of $\dot{\psi}_\theta$ together and abbreviate the subscripts in the following statements. Define a set of functions $\calF_\Theta:=\{f_{\theta}:\theta\in\Theta\}$ where $f_\theta(L,R):=1_{R=1_d}/P(R=1_d\mid L)\dot{\psi}_\theta(L)$. Similar to Lemma \ref{bracket4}, one can show that
\begin{align*}
n_{[ \ ]}\{\epsilon/\delta_0,\calF_\Theta,L_1(P)\}
\le n_{[ \ ]}\{\epsilon,\calJ_\Theta,L_1(P)\}<\infty
\end{align*}
since $1_{R=1_d}/P(R=1_d\mid L)\le1/\delta_0$. Therefore, by Theorem 19.4 in \cite{van2000asymptotic}, $\calF_\Theta$ is P-Glivenko-Cantelli. Since the set $\calF_\Theta$ includes all entries of the Jacobian matrix, we consider the Frobenius/Euclidean norm of a matrix to construct the following convergence result.
\begin{align*}
\underset{f_\theta\in\calF_\Theta}{\sup}\left\|\bbP_Nf_\theta-Pf_\theta\right\|_F\xrightarrow{a.s.}0
\end{align*}
where $\|\cdot\|_F$ is the Euclidean norm of a matrix. The fact that $\|\cdot\|_2\le\|\cdot\|_F$ implies
\begin{align*}
\left\|\frac{1}{N}\sum_{i=1}^N\mathsf{1}_{R_i=1_d}\frac{1}{P(R_i=1_d\mid L_i)}\dot{\psi}_{\hat{\theta}_N}(L_i)-D_{\hat{\theta}_N}\right\|_2=o_p(1)\ .
\end{align*}

Finally, $D_{\hat{\theta}_N}\xrightarrow{P}D_{\theta_0}$ since $\|\hat{\theta}_N-\theta_0\|_2\xrightarrow{P}0$ and $\dot{\psi}_\theta$ is continuous in a neighborhood of $\theta_0$. Therefore, $\hat{D}_{\hat{\theta}_N}$ is a consistent estimator of $D_{\theta_0}$. 

We skip the details but provide a skeleton of the following proof. Notice that each component of $\hat{F}_i$ converges to the corresponding true value. Therefore, $\hat{F}_i$ and $V_{\hat{\theta}_N}$ are consistent estimators of $F_{\theta_0}(L_i,R_i)$ and $V_{\theta_0}$ respectively. Since $\hat{D}_{\hat{\theta}_N}^{-1}V_{\hat{\theta}_N}\hat{D}_{\hat{\theta}_N}^{-1\tr}$ is a standard sandwich estimator, it is easy to show it is a consistent estimator of the above asymptotic variance.
\end{proof}

\section{Related Lemmas}
\label{sec:lemma}
\begin{lemma}[Weyl's inequality]\label{Weyl}
Let $\bfA$ and $\bfB$ be $m \times m$ Hermitian matrices and $\bfC=\bfA-\bfB$. Suppose their respective eigenvalues $\mu_i,\nu_i,\rho_i$ are ordered as follows: 
\begin{align*}
\bfA:\quad\mu_1\ge\cdots\ge\mu_m\ ,\\
\bfB:\quad\nu_1\ge\cdots\ge\nu_m\ ,\\
\bfC:\quad\rho_1\ge\cdots\ge\rho_m\ .
\end{align*}
Then, the following inequalities hold.
\begin{align*}
\rho_m\le\mu_i-\nu_i\le\rho_1,\quad i=1,\cdots,m\ .
\end{align*}
In particular, if $\bfC$ is positive semi-definite, plugging $\rho_m\ge 0$ into the above inequalities leads to
\begin{align*}
\mu_i\ge\nu_i,\quad i=1,\cdots,m\ .
\end{align*}
\end{lemma}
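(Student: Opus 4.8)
The plan is to derive Weyl's inequality from the Courant--Fischer min--max characterization of the eigenvalues of a Hermitian matrix. Recall that for a Hermitian $m \times m$ matrix $\bfA$ with ordered eigenvalues $\mu_1 \ge \cdots \ge \mu_m$, one has
\[
\mu_i = \min_{\substack{W \subseteq \mathbb{C}^m \\ \dim W = m-i+1}} \ \max_{\substack{x \in W \\ x \ne 0}} \frac{x^* \bfA x}{x^* x}.
\]
The key elementary fact I would use is that for the Hermitian matrix $\bfC$ with eigenvalues $\rho_1 \ge \cdots \ge \rho_m$, the Rayleigh quotient satisfies $\rho_m \le x^* \bfC x / (x^* x) \le \rho_1$ for every nonzero $x$; this is immediate from diagonalizing $\bfC$ in an orthonormal eigenbasis.

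For the upper bound, I would fix $i$ and exploit the additivity $x^* \bfA x = x^* \bfB x + x^* \bfC x$. For any subspace $W$ of dimension $m-i+1$ and any nonzero $x \in W$, the Rayleigh-quotient bound on $\bfC$ gives $x^* \bfA x / (x^* x) \le x^* \bfB x / (x^* x) + \rho_1$. Taking the maximum over $x \in W$ and then the minimum over all such $W$, and noting that the additive $\rho_1$ passes through both operations unchanged, yields $\mu_i \le \nu_i + \rho_1$, i.e.\ $\mu_i - \nu_i \le \rho_1$.

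For the lower bound, the cleanest route is to swap the roles of $\bfA$ and $\bfB$ by writing $\bfB = \bfA + (-\bfC)$. Since the eigenvalues of $-\bfC$ are $-\rho_m \ge \cdots \ge -\rho_1$, its largest eigenvalue is $-\rho_m$, and the upper-bound argument applied to the pair $(\bfB, \bfA)$ gives $\nu_i - \mu_i \le -\rho_m$, which rearranges to $\mu_i - \nu_i \ge \rho_m$. Combining the two bounds gives $\rho_m \le \mu_i - \nu_i \le \rho_1$ for every $i$. The positive semi-definite specialization then follows at once: if $\bfC$ is positive semi-definite then $\rho_m \ge 0$, so $\mu_i - \nu_i \ge \rho_m \ge 0$.

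There is no substantive obstacle here, as this is a classical result; the only point requiring care is bookkeeping in the min--max argument---specifically, ensuring that one uses the same variational form (here the minimum over $(m-i+1)$-dimensional subspaces of the maximum Rayleigh quotient) consistently for both $\mu_i$ and $\nu_i$, so that the additive shift $\rho_1$ survives the nested optimization intact. The main effort, if one does not wish simply to cite it, would be establishing the Courant--Fischer theorem itself, which is the genuine engine behind the inequality.
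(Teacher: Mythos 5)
Your proof is correct. Note, however, that the paper itself offers no proof of this lemma at all: Weyl's inequality is stated (together with the matrix Bernstein inequality) purely as a classical auxiliary result, to be invoked later in the proofs of Lemmas \ref{quadratic} and \ref{linear}. Your Courant--Fischer argument is the standard textbook derivation and fills this gap cleanly: the additivity $x^*\bfA x = x^*\bfB x + x^*\bfC x$ combined with the Rayleigh-quotient bounds $\rho_m \le x^*\bfC x/(x^*x) \le \rho_1$ pushes the additive constants through the nested min--max optimization, and the swap $\bfB = \bfA + (-\bfC)$ gives the lower bound by symmetry. The only caveat is the one you already flag yourself: the argument is self-contained only modulo the Courant--Fischer theorem, so in the context of this paper one would either cite that theorem or cite Weyl's inequality directly, as the authors do; either way, nothing in your write-up is incorrect, and the positive semi-definite specialization used by the paper follows immediately from your two-sided bound.
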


\begin{lemma}[Bernstein’s inequality] \label{Bernstein}
Let $\{\bfA_i\}_{i=1}^N$ be a sequence of independent random matrices with dimensions $d_1\times d_2$. Assume that $\E\{\bfA_i\}=\mathbf{0}_{d_1,d_2}$ and $\|\bfA_i\|_2\le c$ almost surely for all $i=1,\cdots,N$ and some constant $c$. Also assume that
\begin{align*}
\max\left\{\left\|\sum_{i=1}^N\E(\bfA_i\bfA_i\tr)\right\|_2,\left\|\sum_{i=1}^N\E(\bfA_i\tr \bfA_i)\right\|_2\right\}\le\sigma^2\ .
\end{align*}
Then, for all $t \ge 0$,
\begin{align*}
P\left(\left\|\sum_{i=1}^N\bfA_i\right\|_2\ge t\right)\le(d_1+d_2)\exp\left(-\frac{t^2/2}{\sigma^2+ct/3}\right)\ .
\end{align*}
\end{lemma}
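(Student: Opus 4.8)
The plan is to reduce the rectangular, non-Hermitian problem to a Hermitian one by dilation and then run the matrix Laplace-transform (Chernoff) argument. First I would introduce the Hermitian dilation
\[
\mathcal D(\bfA)=\begin{pmatrix}\mathbf 0 & \bfA\\ \bfA\tr & \mathbf 0\end{pmatrix},
\]
a symmetric $(d_1+d_2)\times(d_1+d_2)$ matrix. Two elementary facts drive the reduction: $\lambda_{\max}\{\mathcal D(\bfA)\}=\|\bfA\|_2$, so that $\|\sum_i\bfA_i\|_2=\lambda_{\max}\{\sum_i\mathcal D(\bfA_i)\}$; and $\mathcal D(\bfA)^2=\mathrm{diag}(\bfA\bfA\tr,\bfA\tr\bfA)$, which is precisely why the hypothesis controls the \emph{maximum} of the two one-sided second-moment sums. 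Writing $\bfB_i=\mathcal D(\bfA_i)$, the $\bfB_i$ are independent, mean zero, Hermitian, satisfy $\|\bfB_i\|_2\le c$ almost surely, and obey $\|\sum_i\E\bfB_i^2\|_2\le\sigma^2$; the target probability equals $P\{\lambda_{\max}(\sum_i\bfB_i)\ge t\}$.

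Next I would apply the matrix Chernoff bound: for any $\eta>0$, Markov's inequality combined with $e^{\eta\lambda_{\max}(\bfC)}\le\operatorname{tr}e^{\eta\bfC}$ for Hermitian $\bfC$ gives
\[
P\Big\{\lambda_{\max}\Big(\sum_i\bfB_i\Big)\ge t\Big\}\le e^{-\eta t}\,\E\operatorname{tr}\exp\Big(\eta\sum_i\bfB_i\Big).
\]
The crux is to control the matrix moment generating function. Because the $\bfB_i$ do not commute, one cannot simply factor the expectation; instead I would invoke the subadditivity of matrix cumulant generating functions (Tropp's master inequality, itself a consequence of Lieb's concavity theorem) to obtain
\[
\E\operatorname{tr}\exp\Big(\eta\sum_i\bfB_i\Big)\le\operatorname{tr}\exp\Big(\sum_i\log\E e^{\eta\bfB_i}\Big).
\]

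To finish I would bound each factor. The scalar estimate $e^{\eta x}\le 1+\eta x+g(\eta)x^2$, valid for $x\le c$ with $g(\eta)=(e^{\eta c}-1-\eta c)/c^2$, transfers to the matrix inequality $\E e^{\eta\bfB_i}\preceq\exp\{g(\eta)\,\E\bfB_i^2\}$ after taking expectations, using $\E\bfB_i=\mathbf 0$, and applying $\mathbf I+\mathbf M\preceq e^{\mathbf M}$. Monotonicity of the trace exponential then yields
\[
\E\operatorname{tr}\exp\Big(\eta\sum_i\bfB_i\Big)\le(d_1+d_2)\exp\big\{g(\eta)\,\sigma^2\big\},
\]
so that $P\{\lambda_{\max}(\sum_i\bfB_i)\ge t\}\le(d_1+d_2)\exp\{-\eta t+g(\eta)\sigma^2\}$. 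Optimizing over $\eta$ — or taking the standard choice $\eta=c^{-1}\log(1+ct/\sigma^2)$ and using $e^{\eta c}-1-\eta c\le(\eta c)^2/\{2(1-\eta c/3)\}$ for $\eta c<3$ — produces the exponent $-t^2/\{2(\sigma^2+ct/3)\}$, which is the claimed bound. The main obstacle is the noncommutative step, namely the subadditivity of the matrix cumulant generating function: it is the only place where a genuinely deep tool (Lieb concavity) enters, whereas the dilation, the Chernoff reduction, the scalar-to-matrix transfer, and the final scalar optimization are all routine.
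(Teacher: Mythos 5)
Your proof is correct, but there is nothing in the paper to compare it against: the paper states Lemma \ref{Bernstein} without any proof, quoting it as an off-the-shelf tool (it is Tropp's matrix Bernstein inequality) that is then invoked in the proof of Lemma \ref{quadratic}. Your argument is the canonical proof of that result, and all of its steps are sound: the Hermitian dilation $\mathcal{D}(\bfA)$ reduces the rectangular case to the symmetric one, and, as you rightly emphasize, the identity $\mathcal{D}(\bfA)^2=\mathrm{diag}(\bfA\bfA\tr,\bfA\tr\bfA)$ is exactly why the hypothesis involves the maximum of the two one-sided second-moment norms; the Chernoff/trace-exponential reduction, the Lieb--Tropp subadditivity of the matrix cumulant generating function, and the transfer of the scalar bound $e^{\eta x}\le 1+\eta x+g(\eta)x^2$ to $\log\E e^{\eta\bfB_i}\preceq g(\eta)\,\E\bfB_i^2$ are all as in Tropp's treatment. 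The only wrinkle is in your final optimization, where you conflate two distinct routes: the choice $\eta=c^{-1}\log(1+ct/\sigma^2)$ is the exact minimizer of $-\eta t+g(\eta)\sigma^2$ and yields the Chernoff-form bound with $h(u)=(1+u)\log(1+u)-u$, from which the Bernstein form follows via $h(u)\ge (u^2/2)/(1+u/3)$; whereas if you first relax $g(\eta)\le \eta^2/\{2(1-\eta c/3)\}$ (valid for $\eta c<3$), the clean choice is $\eta=t/(\sigma^2+ct/3)$, which automatically satisfies $\eta c<3$ and gives the exponent $-t^2/\{2(\sigma^2+ct/3)\}$ directly. Either route closes the argument, so this is a presentational slip rather than a gap; your proposal supplies a complete proof of a statement the paper leaves to the literature.
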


\begin{lemma}\label{alpha}
Under Assumptions \ref{assump1} and \ref{assump2}, the minimizer $\hatalphar$ satisfies 
\begin{align*}
\|\hatalphar-\alpha_{K_r}^*\|_2=O_p\left(\sqrt{\frac{K_r}{N_r}}+K_r^{-\mu_1}\right)=o_p(1)\ .
\end{align*}
\end{lemma}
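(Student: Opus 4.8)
\textbf{Proof proposal for Lemma~\ref{alpha}.}
The plan is to exploit the convexity of the sequential balancing loss and a localization argument. Since $O^r(\lob;\alphar)=\exp\{\Phir(\lob)\tr\alphar\}$, the per-observation loss $\calLr\{O^r(\lob;\alphar),R\}=\mathsf{1}_{R=1_d}\exp\{\Phir(\lob)\tr\alphar\}\hat{Q}^{\PAr}(l)-\mathsf{1}_{R=r}\Phir(\lob)\tr\alphar$ is convex in $\alphar$ (an exponential plus a linear term), and the weighted $\ell_1$ penalty is convex as well, so $\calLr_\lambda$ is convex. Writing $\Delta=\hatalphar-\alpha_{K_r}^*$ and $r_N=\sqrt{K_r/N_r}+K_r^{-\mu_1}$, I would first reduce the claim to a local statement: because $\calLr_\lambda$ is convex and minimized at $\hatalphar$, it suffices to show that, with probability tending to one, the increment $G(\Delta):=\calLr_\lambda(\alpha_{K_r}^*+\Delta)-\calLr_\lambda(\alpha_{K_r}^*)$ is strictly positive on the sphere $\{\|\Delta\|_2=Mr_N\}$ for some large constant $M$. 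Convexity together with $G(0)=0$ then forces the minimizer to satisfy $\|\Delta\|_2<Mr_N$, since any $\hatalphar$ with $\|\Delta\|_2\ge Mr_N$ would, by convexity, make $G$ nonpositive at the sphere point on the segment joining $0$ and $\Delta$, a contradiction.

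On this sphere I would carry out a second-order Taylor expansion,
\[
G(\Delta)\ge\nabla\calLr_N(\alpha_{K_r}^*)\tr\Delta+\tfrac12\Delta\tr\nabla^2\calLr_N(\tilde\alpha)\Delta-\lambda\sum_{k=1}^{K_r}t_k|\Delta_k|,
\]
where $\tilde\alpha$ lies on the segment between $\alpha_{K_r}^*$ and $\alpha_{K_r}^*+\Delta$, and control the three pieces separately. The gradient term is handled by Lemma~\ref{linear}: evaluated at $\alpha_{K_r}^*$, the balancing identity \eqref{eqn:sequential-balance} makes its population analogue (with true $O^r$ and $Q^{\PAr}$) vanish, so the only contributions are the statistical error of a $K_r$-dimensional empirical average, of order $\sqrt{K_r/N_r}$ (the trace of the relevant covariance is $O(K_r)$ by \ref{assump-2F}), the approximation bias of $\alpha_{K_r}^*$ for $O^r$ of order $K_r^{-\mu_1}$ by \ref{assump-2D}, and the plug-in error from replacing $Q^{\PAr}$ by its sequentially constructed estimate $\hat{Q}^{\PAr}$. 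Hence $\|\nabla\calLr_N(\alpha_{K_r}^*)\|_2=O_p(r_N)$ and $|\nabla\calLr_N(\alpha_{K_r}^*)\tr\Delta|\le O_p(r_N)\cdot Mr_N$. The penalty is negligible: by Cauchy--Schwarz and boundedness of the $t_k$, $\lambda\sum_k t_k|\Delta_k|\le\lambda(\max_k t_k)\sqrt{K_r}\|\Delta\|_2=o(\|\Delta\|_2/\sqrt{N_r})$ by \ref{assump-2G}, which is $o(Mr_N^2)$.

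The quadratic term is the crux, and the uniform lower bound on the random Hessian is where the main difficulty lies. I would write $\nabla^2\calLr_N(\tilde\alpha)=N^{-1}\sum_i\mathsf{1}_{R_i=1_d}O^r(\Lobi;\tilde\alpha)\hat{Q}^{\PAr}(L_i)\Phir(\Lobi)\Phir(\Lobi)\tr$ as its population counterpart plus a centered fluctuation. Lemma~\ref{quadratic}, via the matrix Bernstein inequality of Lemma~\ref{Bernstein}, bounds the operator norm of this fluctuation; the per-summand bound uses $\|\Phir(\Lob)\Phir(\Lob)\tr\|_2=\|\Phir(\Lob)\|_2^2=O(K_r)$ from \ref{assump-2E} together with the boundedness of $O^r$ and $\hat{Q}^{\PAr}$, so the growth restriction $K_r^2=o(N_r)$ in \ref{assump-2C} makes the fluctuation $o_p(1)$. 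Weyl's inequality (Lemma~\ref{Weyl}) then transfers this operator-norm control to the eigenvalues, and combined with \ref{assump-2F} and \ref{assump-2A} yields $\lambda_{\min}\{\nabla^2\calLr_N(\tilde\alpha)\}\ge c>0$ with high probability; on the shrinking ball $\|\Phir(\lob)\tr(\tilde\alpha-\alpha_{K_r}^*)\|_\infty\le\|\Phir(\lob)\|_2\|\Delta\|_2=O(K_r^{1/2}r_N)=o(1)$ (again using $K_r^2=o(N_r)$ and $\mu_1>1/2$), so $O^r(\cdot;\tilde\alpha)$ stays uniformly bounded and the bound applies at $\tilde\alpha$. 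Combining the three pieces gives $G(\Delta)\ge\tfrac{c}{2}M^2r_N^2-O_p(1)\,Mr_N^2$, which is strictly positive for $M$ large, completing the localization and hence establishing $\|\hatalphar-\alpha_{K_r}^*\|_2=O_p(r_N)$; finally $r_N=o(1)$ follows from \ref{assump-2C} and $\mu_1>1/2$. The principal obstacle throughout is precisely the uniform eigenvalue lower bound for this random, plug-in-dependent Hessian as the number of basis functions grows, which is why the matrix-concentration machinery and the dimension restriction $K_r^2=o(N_r)$ are indispensable.
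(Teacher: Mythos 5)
Your proposal takes essentially the same route as the paper's own proof: convexity-based localization to the sphere of radius proportional to $\sqrt{K_r/N_r}+K_r^{-\mu_1}$, a mean-value second-order expansion around $\alpha_{K_r}^*$, the gradient bound from Lemma \ref{linear}, the Hessian minimum-eigenvalue bound from Lemma \ref{quadratic} (matrix Bernstein plus Weyl, with the shrinking-ball argument keeping $O^r(\cdot;\tilde{\alpha})$ bounded), and negligibility of the weighted $\ell_1$ penalty under \ref{assump-2G}. The argument is correct as stated.
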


\begin{proof}
It suffices to show for any $\epsilon>0$, there exists $C_\epsilon$ and $N_\epsilon$ such that 
\begin{align}\label{Op-alpha}
P\left\{\|\hatalphar-\alpha_{K_r}^*\|_2>C_\epsilon\left(\sqrt{\frac{K_r}{N_r}}+K_r^{-\mu_1}\right)\right\}\le\epsilon
\end{align}
for any $N\ge N_\epsilon$. It means that the minimizer $\hatalphar$ is in a small neighbourhood of $\alpha_{K_r}^*$ with probability higher than $1-\epsilon$. Consider the set $\Delta=\{\delta\in\bbR^{K_r}:\|\delta\|_2\le C(\sqrt{K_r/N_r}+K_r^{-\mu_1})\}$ for an arbitrary constant $C$. Since $\calLr_\lambda$ is a convex function of $\alphar$, the minimizer $\hatalphar\in\alpha_{K_r}^*+\Delta$ if $\inf_{\delta\in\partial\Delta}{\inf}\calLr_\lambda(\alpha_{K_r}^*+\delta)> \calLr_\lambda(\alpha_{K_r}^*)$. Thus, considering the complementary event, we have
{\small
\begin{align*}
P\left\{\|\hatalphar-\alpha_{K_r}^*\|_2>C\left(\sqrt{\frac{K_r}{N_r}}+K_r^{-\mu_1}\right)\right\}
\le
P\left\{\underset{\delta\in\partial\Delta}{\inf}\calLr_\lambda(\alpha_{K_r}^*+\delta)\le\calLr_\lambda(\alpha_{K_r}^*)\right\}\ .
\end{align*} 
}
Recall that for any $r\neq 1_d$ and any $\lambda>0$, the objective function is
\begin{align*}
\calLr_\lambda(\alphar)
=\frac{1}{N}\sum_{i=1}^N\left\{
\mathsf{1}_{R_i=1_d}O^r(\Lobi;\alphar)\hat{Q}^{\PAr}(L_i)-\mathsf{1}_{R=r}\log O^r(\Lobi;\alphar)
\right\}+\lambda J^r(\alphar)
\end{align*}
where $O^r(\lob;\alphar)=\exp\{\Phir(\lob)\tr\alphar\}$ and $J^r(\alphar)=\sum_{k=1}^{K_r}t_k|\alphar_k|$. To investigate $\inf_{\delta\in\partial\Delta}\calLr_\lambda(\alpha_{K_r}^*+\delta)-\calLr_\lambda(\alpha_{K_r}^*)$, we apply the mean value theorem. There exists some $\tilde{\alpha}^r$ satisfying $\tilde{\alpha}^r-\alpha_{K_r}^*\in\mathrm{int}(\Delta)$, which is the interior of $\Delta$, such that for any $\delta\in\Delta$, 
\begin{align*}
&\quads\calLr_\lambda(\alpha_{K_r}^*+\delta)-\calLr_\lambda(\alpha_{K_r}^*)\\
&=\delta\tr\left.\frac{\partial\calLr_N(\alphar)}{\partial\alphar}\right\vert_{\alpha_{K_r}^*}
+\frac{1}{2}\delta\tr\left\{\left.\frac{\partial^2\calLr_N(\alphar)}{(\partial\alphar)^2}\right\vert_{\tilde{\alpha}^r}\right\}\delta
+\lambda J^r(\alpha_{K_r}^*+\delta)-\lambda J^r(\alpha_{K_r}^*)\ .
\end{align*}
By the triangle inequality and Cauchy inequality, the difference between penalties satisfies
\begin{align*}
&\quads J^r(\alpha_{K_r}^*+\delta)-J^r(\alpha_{K_r}^*)\\
&=\sum_{k=1}^{K_r}\left(\left|\alphar_{*,k}+\delta_k^r\right|-\left|\alphar_{*,k}\right|\right)t_k
\ge-\sum_{k=1}^{K_r}\left|\delta_k^r\right|t_k
\ge-\sqrt{\sum_{k=1}^{K_r}t_k^2}\left\|\delta\right\|_2\ .
\end{align*}
Denote the constant $\sqrt{\sum_{k=1}^{K_r}t_k^2}$ as $c_{\mathrm{lin}}$. Then, by the Cauchy inequality again, 
\begin{align*}
\calLr_\lambda(\alpha_{K_r}^*+\delta)-\calLr_\lambda(\alpha_{K_r}^*)
&\ge
\frac{1}{2}\delta\tr
\left\{\left.\frac{\partial^2\calLr_N(\alphar)}{(\partial\alphar)^2}\right\vert_{\tilde{\alpha}^r}\right\}
\delta
-\left\{\left\|\left.\frac{\partial\calLr_N(\alphar)}{\partial\alphar}\right\vert_{\alpha_{K_r}^*}\right\|_2+\lambda c_{\mathrm{lin}}\right\}
\left\|\delta\right\|_2
\ .
\end{align*}
First, let's have a look at the quadratic coefficient. By Lemma \ref{quadratic}, the quadratic terms are bounded from below. More precisely, for any $\epsilon>0$, there exists $N_\epsilon^*$ such that for any $N\ge N_\epsilon^*$,
\begin{align*}
P\left[\delta\tr\left\{\left.\frac{\partial^2\calLr_N(\alphar)}{(\partial\alphar)^2}\right\vert_{\tilde{\alpha}^r}\right\}\delta
\ge C_{\mathrm{quad}}\|\delta\|_2^2\right]\ge1-\frac{1}{2}\epsilon\ .
\end{align*} 
Next, let's investigate the bound of the linear coefficient. By \ref{assump-2E}, $\lambda=O(\sqrt{K_r/N})$. By Lemma \ref{linear}, for any $\epsilon>0$, there exists $N_\epsilon'$ and a constant $C_\epsilon'$ such that for any $N\ge N_\epsilon'$,
\begin{align*}
P\left[
\left\{\left\|\left.\frac{\partial\calLr_N(\alphar)}{\partial\alphar}\right\vert_{\alpha_{K_r}^*}\right\|_2+\lambda c_{\mathrm{lin}}\right\}
\ge C_\epsilon'\left(\sqrt{\frac{K_r}{N}}+K_r^{-\mu_1}\right)\right]
\le\frac{1}{2}\epsilon\ .
\end{align*} 
Considering the complement of the above event and the fact that $P(A\cap B)=P(A)+P(B)-P(A\cup B)\ge P(A)+P(B)-1$, we have
{\small
\begin{align*}
P\left\{\calLr_\lambda(\alpha_{K_r}^*+\delta)-\calLr_\lambda(\alpha_{K_r}^*)
\ge\frac{C_{\mathrm{quad}}}{2}\|\delta\|_2^2
-C_\epsilon'\left(\sqrt{\frac{K_r}{N}}+K_r^{-\mu_1}\right)\|\delta\|_2
\right\}\ge 1-\epsilon
\end{align*}
}
for any $N\ge\max\{N_\epsilon^*,N_\epsilon'\}$. Note that $\partial\Delta=\{\delta\in\bbR^{K_r}:\|\delta\|_2=C(\sqrt{K_r/N_r}+K_r^{-\mu_1})\}$. Choosing $C>2C_\epsilon'/C_{\mathrm{quad}}$, we have $P\{\inf_{\delta\in\partial\Delta}\calLr_\lambda(\alpha_{K_r}^*+\delta)\ge\calLr_\lambda(\alpha_{K_r}^*)\}\ge1-\epsilon$ for any $\epsilon>0$. Therefore, inequality \eqref{Op-alpha} holds which completes the proof.
\end{proof}

\begin{lemma}\label{quadratic}
There exists a constant $C_{\mathrm{quad}}$ such that the Hessian matrix of $\calLr_N(\alphar)$ at $\tilde{\alpha}^r$ satisfies
\begin{align*}
\lim_{N\to\infty}P\left[
\lambda_{\min}\left\{\left.\frac{\partial^2\calLr_N(\alphar)}{(\partial\alphar)^2}\right\vert_{\tilde{\alpha}^r}\right\}
\ge C_{\mathrm{quad}}
\right]=1
\end{align*} 
where $\lambda_{\min}(\cdot)$ represents the minimal eigenvalue of the matrix. 
\end{lemma}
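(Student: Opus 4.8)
The plan is to exploit the explicit form of the Hessian and then reduce the problem to the concentration of an empirical second-moment matrix of the basis functions. First I would differentiate $\calLr_N$ twice in $\alphar$. Since $\log O^r(\lob;\alphar)=\Phir(\lob)\tr\alphar$ is linear in $\alphar$, only the exponential term contributes, giving
\begin{align*}
\frac{\partial^2\calLr_N(\alphar)}{(\partial\alphar)^2}
=\frac{1}{N}\sum_{i=1}^N\mathsf{1}_{R_i=1_d}O^r(\Lobi;\alphar)\hat{Q}^{\PAr}(L_i)\Phir(\Lobi)\Phir(\Lobi)\tr,
\end{align*}
which is positive semidefinite because the weights are nonnegative. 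Evaluated at $\tilde{\alpha}^r$, for any unit vector $v$ the quadratic form equals $N^{-1}\sum_i\mathsf{1}_{R_i=1_d}O^r(\Lobi;\tilde{\alpha}^r)\hat{Q}^{\PAr}(L_i)\{v\tr\Phir(\Lobi)\}^2$.

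The next step is to bound the weights below by a positive constant uniformly over the relevant $\tilde{\alpha}^r$ (which satisfies $\tilde{\alpha}^r-\alpha_{K_r}^*\in\Delta$). By the Cauchy--Schwarz inequality and \ref{assump-2E}, $\sup_{\lob}|\Phir(\lob)\tr(\tilde{\alpha}^r-\alpha_{K_r}^*)|\le K_r^{1/2}\|\tilde{\alpha}^r-\alpha_{K_r}^*\|_2=O(K_r/\sqrt{N_r}+K_r^{1/2-\mu_1})=o(1)$, using $K_r^2=o(N_r)$ from \ref{assump-2C} and $\mu_1>1/2$. Combined with \ref{assump-2D} and the lower bound $O^r\ge c_0$ from \ref{assump-2A}, this yields $O^r(\lob;\tilde{\alpha}^r)\ge c_0/2$ for all $\lob$ and all large $N$. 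Moreover $\hat{Q}^{\PAr}$ is bounded below on the complete cases with probability tending to $1$, because the true $Q^{\PAr}$ is a finite sum of products of factors $O^s\ge c_0$ and hence bounded away from zero, and the earlier sequential steps make $\hat{Q}^{\PAr}$ uniformly consistent for it. Therefore there is a constant $\kappa>0$ with $O^r(\Lobi;\tilde{\alpha}^r)\hat{Q}^{\PAr}(L_i)\ge\kappa$ on $\{R_i=1_d\}$, so that
\begin{align*}
\lambda_{\min}\left\{\frac{\partial^2\calLr_N(\alphar)}{(\partial\alphar)^2}\Big\vert_{\tilde{\alpha}^r}\right\}
\ge\kappa\,\lambda_{\min}\left\{\frac{1}{N}\sum_{i=1}^N\mathsf{1}_{R_i=1_d}\Phir(\Lobi)\Phir(\Lobi)\tr\right\}.
\end{align*}

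It remains to lower-bound the minimal eigenvalue of the unweighted matrix $\hat{M}:=N^{-1}\sum_i\mathsf{1}_{R_i=1_d}\Phir(\Lobi)\Phir(\Lobi)\tr$. Its population counterpart $M:=\E\{\mathsf{1}_{R=1_d}\Phir(\Lob)\Phir(\Lob)\tr\}=\E\{P(R=1_d\mid\Lob)\Phir(\Lob)\Phir(\Lob)\tr\}$ satisfies $M\succeq\delta_0\E\{\Phir(\Lob)\Phir(\Lob)\tr\}$ by \ref{assump-1B}, so $\lambda_{\min}(M)\ge\delta_0\lambda_{\min}^*$ by \ref{assump-2F}. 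I would then invoke the matrix Bernstein inequality (Lemma \ref{Bernstein}) on the centered summands $N^{-1}(\mathsf{1}_{R_i=1_d}\Phir(\Lobi)\Phir(\Lobi)\tr-M)$: by \ref{assump-2E} each term has spectral norm $O(K_r/N)$ and the variance proxy is $O(K_r/N)$, so for any fixed $t>0$ the tail bound $2K_r\exp\{-O(Nt^2/K_r)\}\to0$ under $K_r^2=o(N_r)$, giving $\|\hat{M}-M\|_2=o_p(1)$. Weyl's inequality (Lemma \ref{Weyl}) then yields $\lambda_{\min}(\hat{M})\ge\lambda_{\min}(M)-\|\hat{M}-M\|_2\ge\delta_0\lambda_{\min}^*-o_p(1)$, so $\lambda_{\min}(\hat{M})\ge\delta_0\lambda_{\min}^*/2$ with probability tending to $1$. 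Setting $C_{\mathrm{quad}}=\kappa\delta_0\lambda_{\min}^*/2$ completes the argument.

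The step I expect to be most delicate is the matrix concentration: one must track how the spectral norm bound and the variance proxy scale with the growing dimension $K_r$ and verify that $K_r^2=o(N_r)$ (together with $N_r\asymp N$, which holds since both $r$ and $1_d$ have positive probability) is exactly what makes the Bernstein tail vanish despite the $K_r$ prefactor. A secondary subtlety is establishing the uniform lower bound $\kappa$ on the weights, which needs the lower bound on the estimated $\hat{Q}^{\PAr}$ and hence relies on the inductive structure of the sequential procedure.
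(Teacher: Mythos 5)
Your proposal is correct and follows essentially the same route as the paper's proof: bound the weights $O^r(\cdot;\tilde{\alpha}^r)\hat{Q}^{\PAr}$ below by a positive constant, reduce to the unweighted Gram matrix $N^{-1}\sum_{i}\mathsf{1}_{R_i=1_d}\Phir(\Lobi)\Phir(\Lobi)\tr$, and combine Weyl's inequality with the matrix Bernstein inequality and Assumptions \ref{assump-1B}, \ref{assump-2C}, \ref{assump-2E}, \ref{assump-2F} to obtain a lower bound of order $\delta_0\lambda_{\min}^*$. If anything, your explicit tracking of the $\hat{Q}^{\PAr}$ lower bound $\kappa$ in the final constant is slightly more careful than the paper, whose comparison matrix $\bfB$ retains only the factor $c_0/2$.
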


\begin{proof}
Denote the Hessian matrix of $\calLr_N(\alphar)$ at $\tilde{\alpha}^r$ as $\bfA$:
\begin{align*}
\bfA=\left.\frac{\partial^2\calLr_N(\alphar)}{(\partial\alphar)^2}\right\vert_{\tilde{\alpha}^r}
=\frac{1}{N}\sum_{i=1}^N\left\{\mathsf{1}_{R_i=1_d}O^r(\Lobi;\tilde{\alpha}^r)\hat{Q}^{\PAr}(L_i)\Phir(\Lob_i)\Phir(\Lob_i)\tr\right\}\ .
\end{align*} 
Recall that the set $\Delta=\{\delta\in\bbR^{K_r}:\|\delta\|_2\le C(\sqrt{K_r/N_r}+K_r^{-\mu_1})\}$ and $\tilde{\alpha}^r-\alpha_{K_r}^*\in\mathrm{int}(\Delta)$. Following the similar arguments in the proof of Theorem \ref{odds} (Appendix \ref{sec:proof-odds}), it can be easily shown that
\begin{align*}
&\quads O^r(\lob;\tilde{\alpha}^r)\\
&\ge O^r(\lob)-\left|O^r(\lob)-O^r(\lob;\alpha_{K_r}^*)\right|
-4C_0\left|\Phir(\lob)\tr\tilde{\alpha}^r-\Phir(\lob)\tr\alpha_{K_r}^*\right|\\
&\ge c_0-C_1K_r^{-\mu_1}-4C_0C(K_r/\sqrt{N_r}+K_r^{\frac{1}{2}-\mu_1})\ .
\end{align*}
Besides, for any $s\in\PAr$,
\begin{align*}
&\quads \hat{Q}^s(l)
\ge Q^r(l)-\left|\hat{Q}^s(l)-Q^s(l)\right|\\
&\ge \sum_{\Xi\in\Pi_s}c_0^{|\Xi|}-C_1K_r^{-\mu_1}-4C_0C(K_r/\sqrt{N_r}+K_r^{\frac{1}{2}-\mu_1})\ .
\end{align*}
Then, there exists $N_\Delta$ such that $O^r(\lob;\tilde{\alpha}^r)>c_0/2$ holds for any $N\ge N_\Delta$. Let 
\begin{align*}
\bfB&=\frac{1}{N}\sum_{i=1}^N\left\{\frac{1}{2}c_0\mathsf{1}_{R_i=1_d}\Phir(\Lob_i)\Phir(\Lob_i)\tr\right\},\\
\bfC&=\frac{1}{2}c_0\E\left\{\mathsf{1}_{R=1_d}\Phir(\Lob)\Phir(\Lob)\tr\right\}
=\frac{1}{2}c_0\E\left\{P(R=1_d\mid\Lob)\Phir(\Lob)\Phir(\Lob)\tr\right\},\\
\bfD&=\frac{1}{2}c_0\delta_0\E\left\{\Phir(\Lob)\Phir(\Lob)\tr\right\}\ .
\end{align*} 
It's easy to see that matrices $\bfA,\bfB,\bfC,\bfD$ are symmetric. Based on the above discussions, $\bfA-\bfB$ is positive semi-definite for large enough $N$. By \ref{assump-1B}, $\bfC-\bfD$ is also positive semi-definite. Applying Lemma \ref{Weyl}, we have $\lambda_{\min}(\bfA)\ge\lambda_{\min}(\bfB)$, $\lambda_{\min}(\bfC)\ge\lambda_{\min}(\bfD)$ and $|\lambda_{\min}(\bfB)-\lambda_{\min}(\bfC)|\le\max\{|\lambda_{\min}(\bfB-\bfC)|,|\lambda_{\max}(\bfB-\bfC)|\}=\|\bfB-\bfC\|_2$.
Therefore, $\lambda_{\min}(\bfA)\ge\lambda_{\min}(\bfD)-\|\bfB-\bfC\|_2\ge c_0\delta_0\lambda_{\min}^*/2-\|\bfB-\bfC\|_2$. To study $\|\bfB-\bfC\|_2$, we apply Lemma \ref{Bernstein}. Let 
\begin{align*}
\bfE_i=\frac{1}{N}\Big[\mathsf{1}_{R_i=1_d}\Phir(\Lob_i)\Phir(\Lob_i)\tr-\E\left\{\mathsf{1}_{R=1_d}\Phir(\Lob)\Phir(\Lob)\tr\right\}\Big]\ .
\end{align*} 
So, $\E\{\bfE_i\}=\mathbf{0}_{K_r,K_r}$. By the triangle inequality, Lemma \ref{Weyl} and the fact that $\|\cdot\|_2\le\|\cdot\|_F$,
{\small
\begin{align*}
\|\bfE_i\|_2
&\le\frac{1}{N}\mathsf{1}_{R_i=1_d}\|\Phir(\Lob_i)\Phir(\Lob_i)\tr\|_F
+\frac{1}{N}\|\E\left\{\mathsf{1}_{R=1_d}\Phir(\Lob)\Phir(\Lob)\tr\right\}\|_2\\
&\le\frac{1}{N}\sqrt{\textrm{trace}\{\Phir(\Lob_i)\Phir(\Lob_i)\tr\Phir(\Lob_i)\Phir(\Lob_i)\tr\}}
+\frac{1}{N}\|\E\left\{\Phir(\Lob)\Phir(\Lob)\tr\right\}\|_2\\
&=\frac{1}{N}\|\Phir(\Lob_i)\|_2^2
+\frac{1}{N}\|\E\left\{\Phir(\Lob)\Phir(\Lob)\tr\right\}\|_2\ .
\end{align*} 
}
By \ref{assump-2E}, \ref{assump-2F} and the fact that $N_r/N<1$, $\|\bfE_i\|_2=O(K_r/N_r)$. Similarly,
\begin{align*}
&\quads\left\|\sum_{i=1}^N\E(\bfE_i\bfE_i\tr)\right\|_2\\
&\le\frac{1}{N}\left\|\E\{\mathsf{1}_{R=1_d}\Phir(\Lob)\Phir(\Lob)\tr\Phir(\Lob)\Phir(\Lob)\tr\}\right\|_2\\
&\quads+\frac{1}{N}\left\|\E\{\mathsf{1}_{R=1_d}\Phir(\Lob)\Phir(\Lob)\tr\}\E\left\{\mathsf{1}_{R=1_d}\Phir(\Lob)\Phir(\Lob)\tr\right\}\right\|_2\\
&\le\frac{1}{N}\underset{\lob\domr}{\sup}\|\Phir(\lob)\|_2^2\|\E\left\{\Phir(\Lob)\Phir(\Lob)\tr\right\}\|_2
+\frac{1}{N}\|\E\left\{\Phir(\Lob)\Phir(\Lob)\tr\right\}\|_2^2\\
&=O(K_r/N_r)\ .
\end{align*} 
Taking $t=C\sqrt{K_r\log K_r/N_r}$ in Lemma \ref{Bernstein} for an arbitrary constant $C$, we have
\begin{align*}
\exp\left(\left\|\sum_{i=1}^N\bfE_i\right\|_2\ge t\right)\le2K_r\exp(-C'\log K_r)
\end{align*}
for large enough $N$ and some constant $C'$. In other words, $\|\bfB-\bfC\|_2=O_p(\sqrt{K_r\log K_r/N_r})=o_p(1)$. Therefore, for any $\epsilon$ there exists $N_{\Delta,\epsilon}$ such that 
\begin{align*}
P\left\{\lambda_{\min}(\bfA)\ge\frac{1}{4}c_0\delta_0\lambda_{\min}^*\right\}
\ge1-\epsilon
\end{align*}
for any $N\ge\max\{N_\Delta,N_{\Delta,\epsilon}\}$.
\end{proof}

\begin{lemma}\label{linear}
The gradient of $\calLr_N(\alphar)$ at $\alpha_{K_r}^*$ satisfies
\begin{align*}
\left\|\left.\frac{\partial\calLr_N(\alphar)}{\partial\alphar}\right\vert_{\alpha_{K_r}^*}\right\|_2
&=O_p\left(\sqrt{\frac{K_r}{N}}+K_r^{-\mu_1}\right)\ .
\end{align*} 
\end{lemma}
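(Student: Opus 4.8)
The plan is to differentiate the (unpenalized) sequential loss, recognise that its population counterpart at the true odds vanishes by the balancing identity \eqref{eqn:sequential-balance}, and then control the discrepancy through a three-term decomposition. Since $O^r(\lob;\alphar)=\exp\{\Phir(\lob)\tr\alphar\}$ has gradient $O^r(\lob;\alphar)\Phir(\lob)$,
\begin{align*}
\left.\frac{\partial\calLr_N(\alphar)}{\partial\alphar}\right\vert_{\alpha_{K_r}^*}
=\frac{1}{N}\sum_{i=1}^N\left\{\mathsf{1}_{R_i=1_d}O^r(\Lobi;\alpha_{K_r}^*)\hat{Q}^{\PAr}(L_i)\Phir(\Lobi)-\mathsf{1}_{R_i=r}\Phir(\Lobi)\right\}.
\end{align*}
First I would write $O^r(\cdot;\alpha_{K_r}^*)\hat{Q}^{\PAr}=O^rQ^{\PAr}+\{O^r(\cdot;\alpha_{K_r}^*)-O^r\}Q^{\PAr}+O^r(\cdot;\alpha_{K_r}^*)\{\hat{Q}^{\PAr}-Q^{\PAr}\}$, splitting the gradient into $G_N^{(3)}+G_N^{(2)}+G_N^{(1)}$, where $G_N^{(3)}$ retains the $-\mathsf{1}_{R_i=r}\Phir$ term, $G_N^{(2)}$ carries the odds-approximation error, and $G_N^{(1)}$ carries the plug-in error of $\hat{Q}^{\PAr}$.

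The term $G_N^{(3)}=\frac1N\sum_i\{\mathsf{1}_{R_i=1_d}O^r(\Lobi)Q^{\PAr}(L_i)\Phir(\Lobi)-\mathsf{1}_{R_i=r}\Phir(\Lobi)\}$ is an average of i.i.d. vectors whose mean is exactly zero: \eqref{eqn:sequential-balance} with $g=\phi^r_k$ annihilates each coordinate. Using $O^r\le C_0$ (\ref{assump-2A}) and the boundedness of $Q^{\PAr}$ (a finite sum of finite products of bounded odds), its expected squared norm is at most $CN^{-1}\,\mathrm{trace}\,\E\{\Phir(\Lob)\Phir(\Lob)\tr\}\le CK_r\lambda_{\max}^*/N$, so Markov's inequality gives $\|G_N^{(3)}\|_2=O_p(\sqrt{K_r/N})$. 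For the bias term $G_N^{(2)}$ I would avoid the wasteful bound $\|\Phir\|_2=O(K_r^{1/2})$ and instead pass to the dual form $\|G_N^{(2)}\|_2=\sup_{\|v\|_2=1}v\tr G_N^{(2)}$, applying Cauchy--Schwarz over the sample: with $b_i=\mathsf{1}_{R_i=1_d}\{O^r(\Lobi;\alpha_{K_r}^*)-O^r(\Lobi)\}Q^{\PAr}(L_i)$ we get $v\tr G_N^{(2)}\le(\frac1N\sum_ib_i^2)^{1/2}(\frac1N\sum_i(v\tr\Phir(\Lobi))^2)^{1/2}$, where $\sup_i|b_i|=O(K_r^{-\mu_1})$ by \ref{assump-2D} and $\frac1N\sum_i(v\tr\Phir)^2\le\lambda_{\max}\{\frac1N\sum_i\Phir(\Lobi)\Phir(\Lobi)\tr\}=O_p(1)$ by the Weyl/matrix-Bernstein argument of Lemma \ref{quadratic} together with \ref{assump-2E}--\ref{assump-2F}. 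This yields $\|G_N^{(2)}\|_2=O_p(K_r^{-\mu_1})$.

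The hard part will be the plug-in term $G_N^{(1)}=\frac1N\sum_i\mathsf{1}_{R_i=1_d}O^r(\Lobi;\alpha_{K_r}^*)\{\hat{Q}^{\PAr}(L_i)-Q^{\PAr}(L_i)\}\Phir(\Lobi)$, since $\hat{Q}^{\PAr}$ is random and assembled from the parent-layer estimators produced earlier in Algorithm \ref{alg:seq}. The same dual/Cauchy--Schwarz device reduces the bound to controlling $\frac1N\sum_i\mathsf{1}_{R_i=1_d}\{\hat{Q}^{\PAr}(L_i)-Q^{\PAr}(L_i)\}^2$; the naive route through $\|\hat{Q}^{\PAr}-Q^{\PAr}\|_\infty$ only gives the $L^\infty$ rate $\sqrt{K_r^2/N_r}+K_r^{1/2-\mu_1}$ of Theorem \ref{odds}, which is a factor $\sqrt{K_r}$ too large. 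Instead I would argue by induction on the graph layer $d_r$: because $\hat{Q}^s=\hat{O}^s\hat{Q}^{\PAs}$ is a finite sum over paths of products of factors $\hat{O}^t=\exp\{\Phi^t\tr\hat\alpha^{[t]}\}$, I would linearise each product via the local Lipschitz property of the exponential and telescoping, expressing $\hat{Q}^{\PAr}-Q^{\PAr}$ as a bounded combination of ancestor deviations $\Phi^t(\cdot)\tr(\hat\alpha^{[t]}-\alpha_{K_t}^*)$ plus approximation errors of order $K_t^{-\mu_1}$. Substituting into the empirical second moment produces quadratic forms $(\hat\alpha^{[t]}-\alpha_{K_t}^*)\tr\{\frac1N\sum_i\Phi^t\Phi^{t\tr}\}(\hat\alpha^{[t]}-\alpha_{K_t}^*)\le\lambda_{\max}^*\|\hat\alpha^{[t]}-\alpha_{K_t}^*\|_2^2$ by the same eigenvalue concentration, and the inductive hypothesis (Lemma \ref{alpha} at the already-processed parent patterns) supplies $\|\hat\alpha^{[t]}-\alpha_{K_t}^*\|_2=O_p(\sqrt{K_t/N_t}+K_t^{-\mu_1})$, each of order no worse than the target. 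Hence $\frac1N\sum_i\{\hat{Q}^{\PAr}-Q^{\PAr}\}^2=O_p(K_r/N_r+K_r^{-2\mu_1})$ and $\|G_N^{(1)}\|_2=O_p(\sqrt{K_r/N_r}+K_r^{-\mu_1})$. The central difficulty is precisely keeping these accumulated, multiplied errors at the target rate rather than letting the products inflate them, and doing so with only Assumptions \ref{assump1}--\ref{assump2}, so that the empirical integration is handled by the Gram-matrix eigenvalue bound rather than a bracketing/Donsker argument. Since $N_r\asymp N$, summing the three bounds yields the claimed rate $O_p(\sqrt{K_r/N}+K_r^{-\mu_1})$.
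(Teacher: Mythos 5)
Your proposal is correct in substance and is in fact more faithful to the loss \eqref{penalized-seq-loss} than the paper's own argument. The paper's proof writes the gradient as $\frac{1}{N}\sum_{i}\{\mathsf{1}_{R_i=r}-\mathsf{1}_{R_i=1_d}O^r(\Lobi;\alpha_{K_r}^*)\}\Phir(\Lobi)$, i.e.\ it drops the factor $\hat{Q}^{\PAr}(L_i)$ altogether (equivalently, it treats the case $Q^{\PAr}\equiv 1$, which is exact only when $\PAr=\{1_d\}$), and then performs exactly your two easy steps: the term built from the true odds is a mean-zero i.i.d.\ average whose expected squared norm is $O(K_r/N)$, handled by Markov's inequality, and the approximation-error term is bounded through the quadratic form $\xi\tr\bigl\{\frac{1}{N}\sum_i\mathsf{1}_{R_i=1_d}\Phir(\Lobi)\Phir(\Lobi)\tr\bigr\}\xi$ with $|\xi_i|\le C_1K_r^{-\mu_1}$, which is your dual/Cauchy--Schwarz treatment of $G_N^{(2)}$ in different notation. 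So your $G_N^{(3)}$ and $G_N^{(2)}$ reproduce the paper's proof, with the improvement that your mean-zero claim correctly invokes the sequential balance \eqref{eqn:sequential-balance} (with the $Q^{\PAr}$ factor present), whereas the paper's claim $\E(A_i)=0$ is literally valid only in the single-parent case. Your third term $G_N^{(1)}$, carrying the plug-in error $\hat{Q}^{\PAr}-Q^{\PAr}$, has no counterpart in the paper's proof at all; your layer-by-layer induction (controlling ancestor errors in empirical $L_2$ rather than sup norm, Gram-eigenvalue bounds, and Lemma \ref{alpha} at the already-processed patterns) is the right device for keeping that term at the target rate, and it parallels what the paper itself does only implicitly, in Lemma \ref{quadratic}, when it lower-bounds $\hat{Q}^s$ by invoking consistency at the parent level.

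One caveat on the extra term: what your induction actually delivers for $G_N^{(1)}$ is $O_p\bigl(\max_t\{\sqrt{K_t/N_t}+K_t^{-\mu_1}\}\bigr)$, the maximum running over ancestors $t$ of $r$, and converting this into the stated rate $O_p(\sqrt{K_r/N}+K_r^{-\mu_1})$ silently assumes that the ancestors' basis sizes and approximation rates are comparable to those of pattern $r$ (e.g.\ $K_t\asymp K_r$ for all ancestors $t$). This is not implied by Assumptions \ref{assump1}--\ref{assump2} as written, so you should either state it explicitly or phrase the conclusion with the maximum over ancestors; the paper never confronts this because its simplified gradient contains no such term.
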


\begin{proof}
The gradient of $\calLr_N(\alphar)$ at $\alpha_{K_r}^*$ is
\begin{align*}
\left.\frac{\partial\calLr_N(\alphar)}{\partial\alphar}\right\vert_{\alpha_{K_r}^*}
=\frac{1}{N}\sum_{i=1}^N\left\{\mathsf{1}_{R_i=r}-\mathsf{1}_{R_i=1_d}O^r(\Lobi;\alpha_{K_r}^*)\right\}\Phir(\Lob_i)\ .
\end{align*}
Thus, by the triangle inequality,
{\small
\begin{align*}
\left\|\left.\frac{\partial\calLr_N(\alphar)}{\partial\alphar}\right\vert_{\alpha_{K_r}^*}\right\|_2
&\le\left\|\frac{1}{N}\sum_{i=1}^N\left\{\mathsf{1}_{R_i=r}-\mathsf{1}_{R_i=1_d}O^r(\Lob_i)\right\}\Phir(\Lob_i)\right\|_2\\
&\quads+\left\|\frac{1}{N}\sum_{i=1}^N\mathsf{1}_{R_i=1_d}\left\{O^r(\Lob_i)-O^r(\Lobi;\alpha_{K_r}^*)\right\}\Phir(\Lob_i)\right\|_2\ .
\end{align*} 
}
Consider the first term on the right hand side. Let $A_i=\{\mathsf{1}_{R_i=r}-\mathsf{1}_{R_i=1_d}O^r(\Lob_i)\}\Phir(\Lob_i)$. It's easy to see that $\{A_i\}_{i=1}^N$ are i.i.d. and $\E(A_i)=0$. Thus, 
\begin{align*}
&\quads\E\left\|\frac{1}{N}\sum_{i=1}^NA_i\right\|_2^2
=\frac{1}{N}\E(A_i\tr A_i)\\
&=\frac{1}{N}\E\left[\sum_{k=1}^{K_r}
\left\{\mathsf{1}_{R=r}+\mathsf{1}_{R=1_d}O^r(\Lob)O^r(\Lob)\right\}
\phir_k(\Lob)\phir_k(\Lob)\right]\\
&\le\frac{C_0^2+1}{N}\E\|\Phir(\Lob)\|_2^2\ .
\end{align*}
By \ref{assump-2E}, $\E\|\sum_{i=1}^NA_i/N\|_2^2=O(K_r/N)$. By the Markov inequality, this implies $\|\sum_{i=1}^NA_i/N\|_2=O_p(\sqrt{K_r/N})$. As for the second term on the right hand side, let $\xi=(\xi_1,\cdots,\xi_N)$ where $\xi_i=\mathsf{1}_{R_i=1_d}\{O^r(\Lob_i)-O^r(\Lobi;\alpha_{K_r}^*)\}$ and $B=(B_1,\cdots,B_N)$ where $B_i=\mathsf{1}_{R_i=1_d}\Phir(\Lob_i)$. Then,
\begin{align*}
\left\|\frac{1}{N}\sum_{i=1}^N\xi_iB_i\right\|_2^2
=\frac{1}{N^2}\xi\tr BB\tr\xi
=\frac{1}{N}\xi\tr\left\{\frac{1}{N}\sum_{i=1}^N\mathsf{1}_{R_i=1_d}\Phir(\Lob_i)\Phir(\Lob_i)\tr\right\}\xi\ .
\end{align*}
Following the similar arguments in the proof of Lemma \ref{quadratic}, it's easy to see that 
\begin{align*}
\lambda_{\max}\left\{\frac{1}{N}\sum_{i=1}^N\mathsf{1}_{R_i=1_d}\Phir(\Lob_i)\Phir(\Lob_i)\tr\right\}
\le\lambda_{\max}^*+o_p(1)\ .
\end{align*}
By \ref{assump-2D}, $|\xi_i|\le C_1K_r^{-\mu_1}$. Thus, $\|\frac{1}{N}\sum_{i=1}^N\xi_iB_i\|_2=O_p(K_r^{-\mu_1})$ and 
\begin{align*}
\left\|\left.\frac{\partial\calLr_N(\alphar)}{\partial\alphar}\right\vert_{\alpha_{K_r}^*}\right\|_2
=O_p\left(\sqrt{\frac{K_r}{N}}+K_r^{-\mu_1}\right)\ .
\end{align*}
\end{proof}

\begin{lemma}\label{S1}
Under Assumptions \ref{assump1}--\ref{assump3}, for any missing pattern $r$, 
$$
\sup_{\theta\in\Theta}|\sqrt{N}S_{\theta,1}^r|=o_p(1).
$$
\end{lemma}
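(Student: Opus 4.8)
The plan is to exploit the product structure of $S_{\theta,1}^r$, which is the average over complete cases of the propensity-odds error $O^r(\Lobi;\hatalphar)-O^r(\Lobi)$ times the residual $\psi_\theta(L_i)-u_\theta^r(\Lobi)$. The crucial observation is that, under the identifying assumption \eqref{eqn:chen-propensity}, this residual has conditional mean zero given $(\Lob,R=1_d)$, because $u_\theta^r(\lob)=\E\{\psi_\theta(L)\mid\Lob=\lob,R=r\}=\E\{\psi_\theta(L)\mid\Lob=\lob,R=1_d\}$. Hence for every fixed coefficient $\alpha$ and every $\theta$,
\begin{align*}
\E\left[\mathsf{1}_{R=1_d}\{O^r(\Lob;\alpha)-O^r(\Lob)\}\{\psi_\theta(L)-u_\theta^r(\Lob)\}\right]
=\E\left[\mathsf{1}_{R=1_d}\{O^r(\Lob;\alpha)-O^r(\Lob)\}\cdot 0\right]=0 ,
\end{align*}
so the population part vanishes identically in $\alpha$. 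Writing $f_{\alpha,\theta}(l,r')=\mathsf{1}_{r'=1_d}\{O^r(\lob;\alpha)-O^r(\lob)\}\{\psi_\theta(l)-u_\theta^r(\lob)\}$, I would therefore reduce the claim to $\sqrt N S_{\theta,1}^r=\sqrt N(\bbP_N-P)f_{\hatalphar,\theta}$, i.e.\ to controlling a \emph{centered} empirical process evaluated at the data-dependent index $\hatalphar$, uniformly over $\theta\in\Theta$. It suffices to argue componentwise for each entry $\psi_{\theta,j}$, the finite dimension $q$ being harmless.

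Next I would localize using the uniform rate of Theorem \ref{odds}. Since $\|O^r(\,\cdot\,;\hatalphar)-O^r\|_\infty=O_p(\sqrt{K_r^2/N_r}+K_r^{1/2-\mu_1})=o_p(1)$ under \ref{assump-2C} and $\mu_1>1/2$, there is a deterministic sequence $\epsilon_N\to 0$ dominating this rate with probability tending to one. On that event $f_{\hatalphar,\theta}$ lies in the shrinking class $\calF_N=\{f_{\alpha,\theta}:\theta\in\Theta,\ \|O^r(\,\cdot\,;\alpha)-O^r\|_\infty\le\epsilon_N\}$, which admits the envelope $\mathsf{1}_{r'=1_d}\epsilon_N H^r(l)$ with $H^r=\sup_\theta|\psi_\theta-u_\theta^r|$; by \ref{assump-3D} this envelope has $L_2(P)$-norm at most $\epsilon_N C_3\to 0$. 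I would then apply the bracketing maximal inequality of \citet{van2000asymptotic} to obtain
\begin{align*}
\E^*\sup_{\theta\in\Theta}\left|\sqrt N(\bbP_N-P)f_{\hatalphar,\theta}\right|
\lesssim J_{[\ ]}\!\left\{\epsilon_N C_3,\calF_N,L_2(P)\right\},
\end{align*}
and convert the vanishing right-hand side into the stated $o_p(1)$ by Markov's inequality together with the localization event.

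The remaining---and principal---work is to verify that the bracketing entropy of the product class $\calF_N$ is summable, so that the entropy integral converges and tends to zero with its upper limit. Here I would invoke the bracketing inequalities (Lemmas \ref{bracket1}--\ref{bracket4}) to bound $\log n_{[\ ]}\{\epsilon,\calF_N,L_2(P)\}$ by the sum of the bracketing entropies of its two factor classes: the differences $O^r(\,\cdot\,;\alpha)-O^r$, which sit inside the smooth class $\calM^r$ controlled by \ref{assump-3B}, and the residuals $\psi_\theta-u_\theta^r$, controlled by \ref{assump-3C} for $\psi_\theta\in\calH$ together with the sieve approximation of $u_\theta^r$ in \ref{assump-3A}. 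Since shrinking $\calF_N$ only decreases its brackets, this yields, uniformly in $N$, $\log n_{[\ ]}\{\epsilon,\calF_N,L_2(P)\}\lesssim(1/\epsilon)^{1/d}$ with $d=\min(d_\calM,d_\calH)>1/2$, whence $1/(2d)<1$ and $J_{[\ ]}\{\eta,\calF_N,L_2(P)\}=O(\eta^{1-1/(2d)})\to 0$ as $\eta\to 0$. The main obstacle is precisely this product-class entropy bound---keeping a valid envelope while the class collapses toward the zero function---and the bookkeeping that makes the maximal inequality hold uniformly over $\theta$ on the high-probability localization event; the conditional-mean-zero structure is what makes the whole construction go through, as it forces the centering term to vanish and leaves only a negligible stochastic fluctuation of a class whose envelope shrinks at a $o_p(1)$ rate.
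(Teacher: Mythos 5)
Your proposal is correct and takes essentially the same route as the paper's proof: the mean-zero reduction via the identifying assumption, localization to the shrinking class $\{\|O-O^r\|_\infty\le\delta_1\}$ using Theorem \ref{odds}, the bracketing maximal inequality with envelope $\delta_1\sup_{\theta}|\psi_\theta-u^r_\theta|$ (whose $L_2(P)$ norm is controlled by \ref{assump-3D}), and the product/sum bracketing bounds of Lemmas \ref{bracket1}--\ref{bracket3}. The only differences are cosmetic: the paper rescales the entropy of the shrinking factor class to get an entropy integral of order $\delta_1$, whereas you use monotonicity of bracketing numbers plus the shrinking upper limit of the integral to get order $\delta_1^{1-1/(2d)}$ with $d=\min(d_\calM,d_\calH)$ (both are $o(1)$ since $d>1/2$), and your citation of \ref{assump-3A} for the residual class is unnecessary --- the paper instead handles $u^r_\theta$ via the conditional-expectation bracket preservation of Lemma \ref{bracket3} together with \ref{assump-3C}.
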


\begin{proof}
Consider the following empirical process.
\begin{align*}
\bbG_N(f_{\theta,1})=\sqrt{N}\left[\frac{1}{N}\sum_{i=1}^Nf_{\theta,1}(L_i,R_i)-\E\left\{f_{\theta,1}(L,R)\right\}\right]
\end{align*}
where $f_{\theta,1}(L,R)=\mathsf{1}_{R=1_d}\{O(\Lob)-O^r(\Lob)\}\{\psi_\theta(L)-u^r_\theta(\Lob)\}$ and $O$ is an arbitrary function, which can be viewed as an estimator of true propensity odds $O^r$. By Theorem \ref{odds}, for any $\gamma>0$, there exists constants $C_\gamma>0$ and $N_\gamma >0$ such that for any $N\ge N_\gamma$,
\begin{align*}
P\left\{\left\|O^r(\ \cdot\ ;\hatalphar)-O^r\right\|_\infty\ge C_\gamma\left(\sqrt{\frac{K_r^2}{N_r}}+K_r^{\frac{1}{2}-\mu_1}\right)\right\}
\le\gamma.
\end{align*}
Let $\delta_1=C_\gamma(\sqrt{K_r^2/N_r}+K_r^{1/2-\mu_1})$ and consider the set of functions 
\begin{align*}
\calF_1=\left\{f_{\theta,1}:\left\|O-O^r\right\|_\infty\le\delta_1,\theta\in\Theta\right\}\ .
\end{align*}
By identifying assumption \eqref{eqn:chen-density}, for any $f_{\theta,1}\in\calF_1$, 
\begin{align*}
\E\left\{f_{\theta,1}(L,R)\right\}
&=\E\left[\E\left\{f_{\theta,1}(L,R)\mid \lob,R\right\}\right]\\
&=\E\left[\mathsf{1}_{R=1_d}\E\left\{f_{\theta,1}(L,R)\mid \lob,R=1_d\right\}\right]
=0\ .
\end{align*}
Define $\hat{f}_{\theta,1}(L,R):=\mathsf{1}_{R=1_d}\{O^r(\Lob;\hatalphar)-O^r(\Lob)\}\{\psi_\theta(L)-u^r_\theta(\Lob)\}$. To simplify notations, vectors $A>B$ means that $A_j>B_j$ for each entry, and vector $A>c$ means that $A_j>c$ for each entry where $c$ is a constant. 

Notice that $\sup_{\theta\in\Theta}|\sqrt{N}S_{\theta,1}^r|=\sup_{\theta\in\Theta}|\bbG_N(\hat{f}_{\theta,1})|$. Thus, 
\begin{align*}
1-\gamma
\le P\left(\hat{f}_{\theta,1}\in\calF_1\right)
\le P\left(\underset{\theta\in\Theta}{\sup}\left|\sqrt{N}S_{\theta,1}^r\right|\le\underset{f_{\theta,1}\in\calF_1}{\sup}|\bbG_N(f_{\theta,1})|\right)\ .
\end{align*}
By Markov's inequality, for any $\xi>0$, we have 
\begin{align*}
P\left(\underset{f_{\theta,1}\in\calF_1}{\sup}|\bbG_N(f_{\theta,1})|
\ge\frac{1}{\xi}\E\underset{f_{\theta,1}\in\calF_1}{\sup}\left|\bbG_N(f_{\theta,1})\right|\right)
\le\xi\ .
\end{align*}
If we can show $\E\sup_{f_{\theta,1}\in\calF_1}|\bbG_N(f_{\theta,1})|=o_p(1)$, then for any $\eta>0$ and fixed $\xi>0$, there exists $N_{\xi,\eta}$ and $\sigma_{\xi,\eta}$ such that for any $N\ge N_{\xi,\eta}$,
\begin{align*}
P\left(\frac{1}{\xi}\E\underset{f_{\theta,1}\in\calF_1}{\sup}\left|\bbG_N(f_{\theta,1})\right|\ge\sigma_{\xi,\eta}\right)\le\eta\ .
\end{align*}
Then, for any $\epsilon>0$, by taking $\gamma=\xi=\eta=\frac{\epsilon}{3}$ and appropriately choosing $C_\gamma$, $N_\gamma$, $N_{\xi,\eta}$ and $\sigma_{\xi,\eta}$, we have the above inequalities and for any $N\ge N_\epsilon=\max\{N_\gamma,N_{\xi,\eta}\}$,
\begin{align*}
P\left(\underset{\theta\in\Theta}{\sup}\left|\sqrt{N}S_{\theta,1}^r\right|\ge\sigma_{\xi,\eta}\right)
\le\gamma+\xi+\eta=\epsilon\ .
\end{align*}
That is, $\sup_{\theta\in\Theta}|\sqrt{N}S_{\theta,1}^r|=o_p(1)$.

To show $\E\sup_{f_{\theta,1}\in\calF_1}|\bbG_N(f_{\theta,1})|=o_p(1)$, we utilize the maximal inequality with bracketing (Corollary 19.35 in \cite{van2000asymptotic}). Define the envelop function $F_1(L):=\sup_{\theta\in\Theta}|\psi_\theta(L)-u^r_\theta(\Lob)|\delta_1$. It's easy to see $|f_{\theta,1}(L,R)|\le F_1(L)$ for any $f_{\theta,1}\in\calF_1$. Besides, due to \ref{assump-3D}, for each $j$-th entry,
\begin{align*}
\|F_{1,j}\|_{P,2}
=\sqrt{\int F_{1,j}(L)^2dP(L)}
=\sqrt{\E\left[\underset{\theta}{\sup}\left\{\psi_{\theta,j}(L)-u_{\theta,j}^r(\Lob)\right\}^2\delta_1^2\right]}
\le C_3\delta_1\ .
\end{align*}
To save notations, $\psi_\theta$ and $u_\theta^r$ are used as their $j$-th entry. We also omit the subscripts ``$j$'' of some sets of functions where the related inequalities should hold for each $j$-th entry.

By the maximal inequality,
\begin{align*}
\E\underset{f_{\theta,1}\in\calF_1}{\sup}\left|\bbG_N(f_{\theta,1})\right|
=O_p\left(J_{[ \ ]}\{C_3\delta_1,\calF_1,L_2(P)\}\right)\ .
\end{align*}
To study the entropy integral of $\calF_1$, we split function $f_{\theta,1}$ into two parts and consider two sets of functions $\calG_1=\{g_1:\|g_1\|_\infty\le\delta_1\}$ where $g_1(L)=O(\Lob)-O^r(\Lob)$ and $\calH_1=\{h_{\theta,1}:\theta\in\Theta\}$ where $h_{\theta,1}(L)=\psi_\theta(L)-u^r_\theta(\Lob)$. Notice that $\|g_1\|_\infty\le\delta_1$, $\|h_{\theta,1}(L)\|_{P,2}\le C_3$ and $\delta_1\le1$ when $N$ is large enough. By Lemma \ref{bracket1},
\begin{align*}
n_{[ \ ]}\left\{4\left(C_3+1\right)\epsilon,\calF_1,L_2(P)\right\}
\le n_{[ \ ]}\{\epsilon,\calG_1,L^\infty\}n_{[ \ ]}\{\epsilon,\calH_1,L_2(P)\}\ .
\end{align*}
Define $\tilde{\calG}_1:=\{g_1:\|g_1\|_\infty\le C\}$ for some constant $C$ and $\calO:=\tilde{\calG}_1+O^r=\{O:\|O-O^r\|_\infty\le C\}$. It is obvious that $\calG=\delta_1/C\tilde{\calG}$. Since $O^r$ is a fixed function,
\begin{align*}
&\quads n_{[ \ ]}\left\{\epsilon,\calG_1,L^\infty\right\}
=n_{[ \ ]}\left\{\epsilon,\delta_1/C\tilde{\calG}_1,L^\infty\right\}\\
&=n_{[ \ ]}\left\{C\epsilon/\delta_1,\tilde{\calG}_1,L^\infty\right\}
=n_{[ \ ]}\left\{C\epsilon/\delta_1,\calO,L^\infty\right\}\ .
\end{align*}
The true propensity score odds $O^r$ is unknown, but its roughness is controlled by \ref{assump-3B}. Thus, we should not consider much more rough functions. In other words, our models for propensity score odds should satisfy a similar smoothness condition. There exists appropriate constant $C_\calO$ such that $\calO\subset\calM^r$. Thus,
\begin{align*}
n_{[ \ ]}\{\epsilon,\calO,L^\infty\}
\le n_{[ \ ]}\{\epsilon,\calM^r,L^\infty\}\ .
\end{align*}
Define a set of functions $\calU^r=\{u^r_\theta:\theta\in\Theta\}$. Notice that $\calH_1\subset\Psi-\calU^r$ By Lemma \ref{bracket2}, Assumption \ref{assump-3E} and Lemma \ref{bracket3},
\begin{align*}
n_{[ \ ]}\{2\epsilon,\calH_1,L_2(P)\}
\le n_{[ \ ]}\{\epsilon,\calH,L_2(P)\}n_{[ \ ]}\{\epsilon,\calU^r,L_2(P)\}
\le n_{[ \ ]}\{\epsilon,\calH,L_2(P)\}^2\ .
\end{align*}
Combine the above inequalities and recall \ref{assump-3B} and \ref{assump-3C},
\begin{align*}
J_{[ \ ]}\{\|F_1\|_{P,1},\calF_1,L_2(P)\}
&\le\int_0^{C_3\delta_1}\sqrt{\log n_{[ \ ]}\left\{\frac{C_\calO\epsilon}{4(C_3+1)\delta_1},\calM^r,L^\infty\right\}}d\epsilon\\
&\quads+\sqrt{2}\int_0^{C_3\delta_1}\sqrt{\log n_{[ \ ]}\left\{\frac{\epsilon}{8(C_3+1)\delta_1},\calH,L_2(P)\right\}}d\epsilon\\
&\le\sqrt{C_\calM}\int_0^{C_3\delta_1}\{4(C_3+1)\delta_1/(C_\calO\epsilon)\}^{\frac{1}{2d_\calM}}d\epsilon\\
&\quads+\sqrt{2C_\calH}\int_0^{C_3\delta_1}\{8(C_3+1)\delta_1/\epsilon\}^{\frac{1}{2d_\calH}}d\epsilon\\
&=\sqrt{C_\calM}\{4(C_3+1)/C_\calO\}^{\frac{1}{2d_\calM}}C_3^{1-\frac{1}{2d_\calM}}\delta_1\\
&\quads+\sqrt{2C_\calH}\{8(C_3+1)\}^{\frac{1}{2d_\calH}}C_3^{1-\frac{1}{2d_\calH}}\delta_1\\
&\to0
\end{align*}
since $d_\calM,d_\calH>1/2$ and $\delta_1\to0$ as $N\to\infty$. Therefore, $\E\sup_{f_{\theta,1}\in\calF_1}|\bbG_N(f_{\theta,1})|=O_p(o_p(1))=o_p(1)$ and $\sup_{\theta\in\Theta}|\sqrt{N}S_{\theta,1}^r|=o_p(1)$.
\end{proof}

\begin{lemma}\label{S2}
Under Assumptions \ref{assump1}--\ref{assump3}, for any missing pattern $r$, $\sup_{\theta\in\Theta}|\sqrt{N}S_{\theta,2}^r|=o_p(1)$.
\end{lemma}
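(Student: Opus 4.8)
The plan is to reduce $S_{\theta,2}^r$ to a pairing between the ``balancing residual'' $\rho_i:=\mathsf{1}_{R_i=1_d}O^r(\Lobi;\hatalphar)-\mathsf{1}_{R_i=r}$ and the sieve approximation error $a_\theta^r(\Lob_i):=u^r_\theta(\Lob_i)-\Phir(\Lob_i)\tr\beta_\theta^r$, and then to exploit that $a_\theta^r$ is uniformly negligible while $\rho_i$ splits into an estimation-error piece and a mean-zero population piece. By \ref{assump-3A}, $\sup_{\theta\in\Theta}\|a_\theta^r\|_\infty\le C_2K_r^{-\mu_2}$. Writing $\rho_i=\mathsf{1}_{R_i=1_d}\{O^r(\Lobi;\hatalphar)-O^r(\Lob_i)\}+\{\mathsf{1}_{R_i=1_d}O^r(\Lob_i)-\mathsf{1}_{R_i=r}\}$, I would split $S_{\theta,2}^r=A_\theta^r+B_\theta^r$ accordingly, where $A_\theta^r$ carries the propensity-odds estimation error and $B_\theta^r$ is a centered empirical average.

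For $A_\theta^r=\frac1N\sum_i\mathsf{1}_{R_i=1_d}\{O^r(\Lobi;\hatalphar)-O^r(\Lob_i)\}a_\theta^r(\Lob_i)$, I would apply Cauchy--Schwarz in the empirical measure,
\[
|A_\theta^r|\le\Big[\tfrac1N\textstyle\sum_i\mathsf{1}_{R_i=1_d}\{O^r(\Lobi;\hatalphar)-O^r(\Lob_i)\}^2\Big]^{1/2}\Big[\tfrac1N\textstyle\sum_i\mathsf{1}_{R_i=1_d}a_\theta^r(\Lob_i)^2\Big]^{1/2}.
\]
The second factor is $\le C_2K_r^{-\mu_2}$ uniformly in $\theta$. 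For the first, the crucial point is to retain the \emph{faster} $L_2$ rate of Theorem \ref{odds} rather than the sup-norm rate: arguing as in the proof of Theorem \ref{odds}, the empirical second moment is bounded by $16C_0^2(\hatalphar-\alpha_{K_r}^*)\tr\hat{\Sigma}^r(\hatalphar-\alpha_{K_r}^*)+O(K_r^{-2\mu_1})$ with $\hat{\Sigma}^r=\frac1N\sum_i\mathsf{1}_{R_i=1_d}\Phir(\Lob_i)\Phir(\Lob_i)\tr$, whose largest eigenvalue is $O_p(1)$ as in Lemma \ref{quadratic}; combined with Lemma \ref{alpha} this equals $O_p(K_r/N_r+K_r^{-2\mu_1})$. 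Hence $\sup_\theta|A_\theta^r|=O_p\{(\sqrt{K_r/N_r}+K_r^{-\mu_1})K_r^{-\mu_2}\}$, and using $N_r\asymp N$,
\[
\sqrt N\,\sup_\theta|A_\theta^r|=O_p\big(K_r^{1/2-\mu_2}+\sqrt N\,K_r^{-(\mu_1+\mu_2)}\big)=o_p(1),
\]
the first term vanishing because $\mu_2>1/2$ and the second because $N^{1/\{2(\mu_1+\mu_2)\}}=o(K_r)$ by \ref{assump-3E}. Uniformity in $\theta$ is automatic here, since the $\theta$-dependence enters only through the uniformly bounded factor $a_\theta^r$.

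For $B_\theta^r=\frac1N\sum_i\{\mathsf{1}_{R_i=1_d}O^r(\Lob_i)-\mathsf{1}_{R_i=r}\}a_\theta^r(\Lob_i)$, the population balancing identity \eqref{eqn:local-balance} gives $\E\{(\mathsf{1}_{R=1_d}O^r(\Lob)-\mathsf{1}_{R=r})a_\theta^r(\Lob)\}=0$, so $\sqrt N B_\theta^r=\bbG_N(g_\theta^r)$ with $g_\theta^r(L,R)=(\mathsf{1}_{R=1_d}O^r(\Lob)-\mathsf{1}_{R=r})a_\theta^r(\Lob)$. I would then mirror Lemma \ref{S1}: the class $\{g_\theta^r:\theta\in\Theta\}$ admits an envelope of $L_2(P)$-norm at most $(1+C_0)C_2K_r^{-\mu_2}=:\delta_2\to0$, and its bracketing numbers are controlled through Lemmas \ref{bracket1}--\ref{bracket4} by splitting off the bounded fixed factor $\mathsf{1}_{R=1_d}O^r-\mathsf{1}_{R=r}$ and bounding $\{a_\theta^r\}$ via the entropy of $\calU^r=\{u_\theta^r\}$ together with the fixed linear part. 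The maximal inequality via bracketing (Corollary 19.35 in \cite{van2000asymptotic}) then yields $\E\sup_\theta|\bbG_N(g_\theta^r)|\lesssim J_{[ \ ]}\{\delta_2,\cdot,L_2(P)\}\to0$, exactly as the entropy integral in Lemma \ref{S1} tends to $0$. Combining $A_\theta^r$ and $B_\theta^r$ gives $\sup_{\theta\in\Theta}|\sqrt N S_{\theta,2}^r|=o_p(1)$.

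The main obstacle is the term $A_\theta^r$. A naive bound pairing the sup-norm rate $\|O^r(\cdot;\hatalphar)-O^r\|_\infty=O_p(\sqrt{K_r^2/N_r}+K_r^{1/2-\mu_1})$ with $\|a_\theta^r\|_\infty$ gives $\sqrt N\sup_\theta|A_\theta^r|=O_p(K_r^{1-\mu_2}+\sqrt N\,K_r^{1/2-\mu_1-\mu_2})$, which need \emph{not} vanish when merely $\mu_1,\mu_2>1/2$. The refinement that rescues the argument is to keep the $L_2$ (rather than uniform) convergence rate of Theorem \ref{odds} inside the Cauchy--Schwarz step and to verify that the empirical design matrix $\hat{\Sigma}^r$ has bounded spectrum; the resulting rate then balances precisely against the lower bound on $K_r$ in \ref{assump-3E}.
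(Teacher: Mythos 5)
Your proposal is correct, and it reaches the conclusion by a route organized differently from the paper's, though both rest on the same quantitative crux that you correctly identify at the end: pairing the \emph{$L_2$} rate of Theorem \ref{odds} (not the sup-norm rate) with the sieve approximation error $C_2K_r^{-\mu_2}$, so that $\sqrt{N}\,(\sqrt{K_r/N_r}+K_r^{-\mu_1})K_r^{-\mu_2}\to0$ under $\mu_2>1/2$ and \ref{assump-3E}. The paper does not split $S_{\theta,2}^r$ into your $A_\theta^r+B_\theta^r$; it keeps the estimated odds inside a single empirical process indexed by the class $\calF_2$ of functions $\{\mathsf{1}_{R=1_d}O(\Lob)-\mathsf{1}_{R=r}\}\{u^r_\theta(\Lob)-U(\Lob)\}$ with $O$ ranging over an $L_2(P)$-ball around $O^r$, bounds the (nonzero) mean of every member by \emph{population} Cauchy--Schwarz, and controls the centered process by bracketing, which forces the entropy of the odds class $\calM^r$ (\ref{assump-3B}) into the entropy integral. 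Your treatment of $A_\theta^r$ by \emph{empirical} Cauchy--Schwarz---using Lemma \ref{alpha} for $\|\hatalphar-\alpha_{K_r}^*\|_2$ together with the spectral bound $\lambda_{\max}\{\frac{1}{N}\sum_{i}\mathsf{1}_{R_i=1_d}\Phir(\Lob_i)\Phir(\Lob_i)\tr\}\le\lambda_{\max}^*+o_p(1)$, which the paper itself establishes inside Lemma \ref{linear} via the Lemma \ref{quadratic}/Bernstein argument---buys a $B$-term class in which $O^r$ is a fixed function, so \ref{assump-3B} never enters this lemma at all; conversely, the paper's unified-class formulation avoids re-deriving an empirical $L_2$ rate for the odds. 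Two points to tighten. First, the mean-zero claim for $B_\theta^r$ follows from the sequential balance \eqref{eqn:sequential-balance}, which pairs $\mathsf{1}_{R=1_d}$ with $\mathsf{1}_{R=r}$, not from \eqref{eqn:local-balance}, which pairs $\mathsf{1}_{R\in\PAr}$ with $\mathsf{1}_{R=r}$; this is only a citation slip. Second, the sieve component $\Phir(\cdot)\tr\beta_\theta^r$ of $a_\theta^r$ is not a ``fixed linear part''---it varies with $\theta$ inside a $K_r$-dimensional space with $K_r\to\infty$---so its bracketing entropy requires an argument: either the paper's containment device (the step $\hat{\calU}^r\subset\E\calH^r$ in its proof of Lemma \ref{S2}) or the crude finite-dimensional bound $\log n_{[\ ]}\lesssim K_r\log(1/\epsilon)$, which still yields an entropy integral of order $K_r^{1/2-\mu_2}\sqrt{\log K_r}\to0$ because your envelope shrinks like $K_r^{-\mu_2}$ with $\mu_2>1/2$. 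As written you assert rather than prove this step, but it is fillable and no less rigorous than the paper's own handling of the same class.
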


\begin{proof}
Consider the following empirical process.
\begin{align*}
\bbG_N(f_{\theta,2})=\sqrt{N}\left[\frac{1}{N}\sum_{i=1}^Nf_{\theta,2}(L_i,R_i)-\E\left\{f_{\theta,2}(L,R)\right\}\right]
\end{align*}
where $f_{\theta,2}(L,R)=\{\mathsf{1}_{R=1_d}O(\Lob)-\mathsf{1}_{R=r}\}
\{u^r_\theta(\Lob)-U(\Lob)\}$ and $O$ and $U$ are arbitrary functions. By Theorem \ref{odds}, for any $\gamma>0$, there exists constants $C_\gamma>0$ and $N_\gamma >0$ such that for any $N\ge N_\gamma$,
\begin{align*}
P\left\{\left\|O^r(\ \cdot\ ;\hatalphar)-O^r\right\|_{P,2}\ge C_\gamma\left(\sqrt{\frac{K_r}{N_r}}+K_r^{-\mu_1}\right)\right\}
\le\gamma\ .
\end{align*}
Besides, by \ref{assump-3A}, $\sup_{\lob\in\domr}|u^r_\theta(\lob)-\Phir(\lob)\tr\beta_\theta^r|\le C_2K_r^{-\mu_2}$. So, we consider the set of functions
\begin{align*}
\calF_2=\left\{f_{\theta,2}:\left\|O-O^r\right\|_{P,2}\le\delta_1',\|u^r_\theta-U\|_\infty\le\delta_2,\theta\in\Theta\right\}
\end{align*}
where $\delta_1'=C_\gamma(\sqrt{K_r/N_r}+K_r^{-\mu_1})$ and $\delta_2=C_2K_r^{-\mu_2}$. Then, for any $f_{\theta,2}\in\calF_2$,
\begin{align*}
\E\left\{f_{\theta,2}(L,R)\right\}
&=\E\left[\left\{\mathsf{1}_{R=1_d}O^r(\Lob)-\mathsf{1}_{R=r}\right\}\left\{u^r_\theta(\Lob)-U(\Lob)\right\}\right]\\
&\quads+\E\left[\mathsf{1}_{R=1_d}\left\{O(\Lob)-O^r(\Lob)\right\}\left\{u^r_\theta(\Lob)-U(\Lob)\right\}\right]\\
&\le0+\left\|O-O^r\right\|_{P,2}\|u^r_\theta-U\|_{P,2}\\
&\le\delta_1'\delta_2
=C_2C_\gamma\left(\frac{K_r^{\frac{1}{2}-\mu_2}}{\sqrt{N_r}}+K_r^{-\mu_1-\mu_2}\right)=o_p(N^{-\frac{1}{2}})\ .
\end{align*}
The last line holds due to the fact that $\|\cdot\|_{P,2}\le\|\cdot\|_\infty$ and \ref{assump-3A} and \ref{assump-3E}. 
Plug in our estimator and define 
$\hat{f}_{\theta,2}(L,R):=\{\mathsf{1}_{R=1_d}O^r(\Lob;\hatalphar)-\mathsf{1}_{R=r}\}\{u^r_\theta(\Lob)-\Phir(\Lob)\tr\beta_\theta^r\}$. Then, $\sup_{\theta\in\Theta}|\sqrt{N}S_{\theta,2}^r|\le\sup_{\theta\in\Theta}|\bbG_N(\hat{f}_{\theta,2})|+\sqrt{N}\delta_1'\delta_2$ and 
\begin{align*}
P\left(\underset{\theta\in\Theta}{\sup}\left|\sqrt{N}S_{\theta,2}^r\right|>
\underset{f_{\theta,2}\in\calF_2}{\sup}|\bbG_N(f_{\theta,2})|+\sqrt{N}\delta_1'\delta_2\right)
\le P\left(\hat{f}_{\theta,2}\notin\calF_2\right)
\le\gamma\ .
\end{align*}
Similarly, we need to show $\E\sup_{f_{\theta,2}\in\calF_2}|\bbG_N(f_{\theta,2})|=o_p(1)$. Define the envelop function $F_2:=(C_0+1)\delta_2$. It's easy to see that $|f_{\theta,2}(L,R)|\le F_2$ for any $f_{\theta,2}\in\calF_2$ when $N$ is large enough. By the maximal inequality with bracketing, 
\begin{align*}
\E\underset{f_{\theta,2}\in\calF_2}{\sup}\left|\bbG_N(f_{\theta,2})\right|
=O_p\left(J_{[ \ ]}\{\|F_2\|_{P,2},\calF_2,L_2(P)\}\right)\ .
\end{align*}
To study the entropy integral of $\calF_2$, we first compare it with $\calF_2'=\{f_{\theta,2}:\|O-O^r\|_{P,2}\le\delta_1',\|u^r_\theta-U\|_{P,2}\le\delta_2,\theta\in\Theta\}$. It is apparent $\calF_2\subset\calF_2'$. Then, we split function $f_{\theta,2}$ into two parts and consider two sets of functions $\calG_2=\{g_2:\|O-O^r\|_{P,2}\le\delta_1'\}$ where $g_2(L,R)=\mathsf{1}_{R=1_d}O(\Lob)-\mathsf{1}_{R=r}$ and $\calH_2=\{h_{\theta,2}:\|h_{\theta,2}\|_{P,2}\le\delta_2,\theta\in\Theta\}$ where $h_{\theta,2}(L)=u^r_\theta(\Lob)-U(\Lob)$. Notice that $\|g_2\|_{P,2}\le C_0+1$ and $\|h_{\theta,2}\|_{P,2}\le\delta_2\le1$ when $N$ is large enough. By Lemma \ref{bracket1},
\begin{align*}
n_{[ \ ]}\{4(C_0+2)\epsilon,\calF_2,L_2(P)\}
\le n_{[ \ ]}\{\epsilon,\calG_2,L_2(P)\}n_{[ \ ]}\{\epsilon,\calH_2,L_2(P)\}\ .
\end{align*}
Notice that $\calG_2+(\mathsf{1}_{R=r}-\mathsf{1}_{R=1_d}O^r)=\mathsf{1}_{R=1_d}\calG_1$. Since $\mathsf{1}_{R=r}-\mathsf{1}_{R=1_d}O^r$ is a fixed function, and $\|\mathsf{1}_{R=1_d}\|_\infty\le1$, by Lemma \ref{bracket4},
\begin{align*}
n_{[ \ ]}\{\epsilon,\calG_2,L_2(P)\}
=n_{[ \ ]}\{\epsilon,\mathsf{1}_{R=1_d}\calG_1,L_2(P)\}
\le n_{[ \ ]}\{\epsilon,\calG_1,L_2(P)\}\ .
\end{align*}
It is obvious that any $\epsilon$-brackets equipped with $\|\cdot\|_\infty$ norm are also $\epsilon$-brackets in $L_2(P)$. With similar arguments in the proof of Lemma \ref{S1}, we have
\begin{align*}
n_{[ \ ]}\{\epsilon,\calG_1,L_2(P)\}
\le n_{[ \ ]}\{\epsilon,\calG_1,L^\infty\}
\le n_{[ \ ]}\left\{C_\calO\epsilon/\delta_1',\calM^r,L^\infty\right\}\ .
\end{align*}
Define a set of functions $\tilde{\calH}_2=\{h_{\theta,2}:\|h_{\theta,2}\|_{P,2}\le C,\theta\in\Theta\}$. Similarly,
\begin{align*}
n_{[ \ ]}\{\epsilon,\calH_2,L_2(P)\}
= n_{[ \ ]}\left\{\epsilon,\delta_2/C\tilde{\calH}_2,L_2(P)\right\}
=n_{[ \ ]}\left\{C\epsilon/\delta_2,\tilde{\calH}_2,L_2(P)\right\}\ .
\end{align*}
Similarly, we split $\tilde{\calH}_2$ into two parts. Define a set of functions $\hat{\calU}^r=\{U:\exists u^r_\theta\in\calU^r \ s.t. \ \|u^r_\theta-U\|_\infty\le C,\}$ where $\calU^r=\{u^r_\theta:\theta\in\Theta\}$. By Lemma \ref{bracket2},
\begin{align*}
n_{[ \ ]}\{2\epsilon,\tilde{\calH}_2,L_2(P)\}
\le n_{[ \ ]}\{\epsilon,\calU^r,L_2(P)\}n_{[ \ ]}\{\epsilon,\hat{\calU}^r,L_2(P)\}\ .
\end{align*}
Also define a set of functions $\E\calH^r:=\{g^r(\lob):=\E\{f(L)\mid\Lob=\lob,R=r\},f\in\calH\}$. Although the set $\calU^r$ is unknown, we should not consider much more rough functions than those in $\E\calH^r$. Therefore, there exists a constant $C_{\hat{\calU}^r}$ such that $\hat{\calU}^r\subset\E\calH^r$. Thus, by Lemma \ref{bracket3},
\begin{align*}
n_{[ \ ]}\{\epsilon,\hat{\calU}^r,L_2(P)\}
\le n_{[ \ ]}\{\epsilon,\E\calH^r,L_2(P)\}
\le n_{[ \ ]}\{\epsilon,\calH,L_2(P)\}\ .
\end{align*}
By \ref{assump-3B}, \ref{assump-3C}, and the above inequalities,
\begin{align*}
&\quads J_{[ \ ]}\{\|F_2\|_{P,2},\calF_2,L_2(P)\}\\
&\le\int_0^{(C_0+1)\delta_2}\sqrt{\log n_{[ \ ]}\left\{\frac{C_\calO\epsilon}{4(C_0+2)\delta_1'},\calM^r,L_2(P)\right\}}d\epsilon\\
&\quads+\sqrt{2}\int_0^{(C_0+1)\delta_2}\sqrt{\log n_{[ \ ]}\left\{\frac{C_{\hat{\calU}^r}\epsilon}{8(C_0+2)\delta_2},\calH,L_2(P)\right\}}d\epsilon\\
&\le\sqrt{C_\calM}\{4(C_0+2)\delta_1'/C_\calO\}^{\frac{1}{2d_\calM}}\{(C_0+1)\delta_2\}^{1-\frac{1}{2d_\calM}}\\
&\quads+\sqrt{2C_\calH}\{8(C_0+2)/C_{\hat{\calU}^r}\}^{\frac{1}{2d_\calH}}(C_0+1)^{1-\frac{1}{2d_\calH}}\delta_2\\
&\to0
\end{align*}
since $d_\calM,d_\calH>1/2$ and $\delta_1',\delta_2\to0$ as $N\to\infty$. So, $\E\sup_{f_{\theta,2}\in\calF_2}|\bbG_N(f_{\theta,2})|=O_p(o_p(1))=o_p(1)$ and $\sup_{\theta\in\Theta}|\sqrt{N}S_{\theta,2}^r|=o_p(1)$.
\end{proof}

\begin{lemma}\label{S3}
Under Assumptions \ref{assump1}--\ref{assump3}, for any missing pattern $r$, 
$$
\sup_{\theta\in\Theta}|\sqrt{N}S_{\theta,3}^r|=o_p(1).$$
\end{lemma}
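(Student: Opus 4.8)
The plan is to prove this \emph{not} through the empirical-process machinery used for Lemmas \ref{S1} and \ref{S2}, but directly from the first-order optimality (KKT) condition satisfied by the penalized minimizer $\hatalphar$, which is exactly the empirical balancing condition. Write $\hat{Q}^r(L_i)=O^r(\Lobi;\hatalphar)\hat{Q}^{\PAr}(L_i)$ for the accumulated odds estimate that enters $S_{\theta,3}^r$. From the gradient computation in Appendix \ref{sec:sequential-balance}, the gradient of the unpenalized loss $\calLr_N$ at $\hatalphar$ equals $N^{-1}\sum_i\{\mathsf{1}_{R_i=1_d}\hat{Q}^r(L_i)-\mathsf{1}_{R_i=r}\}\Phir(\Lobi)$. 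Since $\hatalphar$ minimizes the $\ell_1$-penalized objective $\calLr_\lambda$, subdifferential stationarity produces a subgradient $v$ of the penalty $J^r(\alphar)=\sum_{k}t_k|\alphar_k|$ with
\begin{align*}
\frac{1}{N}\sum_{i=1}^N\left\{\mathsf{1}_{R_i=1_d}\hat{Q}^r(L_i)-\mathsf{1}_{R_i=r}\right\}\Phir(\Lobi)=-\lambda v, \qquad \|v\|_2\le\sqrt{\textstyle\sum_{k}t_k^2}=c_{\mathrm{lin}},
\end{align*}
the norm bound holding because each coordinate $v_k$ lies in $[-t_k,t_k]$.

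First I would substitute this identity into $S_{\theta,3}^r$. Since $S_{\theta,3}^r$ is precisely the inner product of this empirical imbalance vector with the coefficient $\beta_\theta^r$, we obtain $S_{\theta,3}^r=-\lambda\,v\tr\beta_\theta^r$, and hence by Cauchy--Schwarz $|S_{\theta,3}^r|\le\lambda c_{\mathrm{lin}}\|\beta_\theta^r\|_2$ with the \emph{same} $v$ for every $\theta$; taking the supremum gives $\sup_{\theta\in\Theta}|S_{\theta,3}^r|\le\lambda c_{\mathrm{lin}}\sup_{\theta\in\Theta}\|\beta_\theta^r\|_2$. This reduces the whole lemma to controlling $\lambda c_{\mathrm{lin}}$ together with the coefficient norm.

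Next I would bound $\|\beta_\theta^r\|_2$ uniformly in $\theta$. By the approximation guarantee \ref{assump-3A}, $\|\Phir(\cdot)\tr\beta_\theta^r\|_{P,2}\le\|u_\theta^r\|_{P,2}+C_2K_r^{-\mu_2}$, and $\sup_\theta\|u_\theta^r\|_{P,2}<\infty$ since $u_\theta^r$ is a conditional expectation of $\psi_\theta$ and the class $\calH\supseteq\Psi$ admits a square-integrable envelope (implied by \ref{assump-3C} and \ref{assump-3D}). The eigenvalue lower bound \ref{assump-2F} then yields $\lambda_{\min}^*\|\beta_\theta^r\|_2^2\le(\beta_\theta^r)\tr\E\{\Phir(\Lob)\Phir(\Lob)\tr\}\beta_\theta^r=\|\Phir(\cdot)\tr\beta_\theta^r\|_{P,2}^2$, so $\sup_\theta\|\beta_\theta^r\|_2=O(1)$.

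Finally I would collect the rates. With bounded tolerances $t_k$ we have $c_{\mathrm{lin}}=O(\sqrt{K_r})$, so the tuning condition \ref{assump-2G}, $\lambda=o(1/\sqrt{K_rN_r})$, gives $\sqrt{N_r}\,\lambda c_{\mathrm{lin}}=o(1)$. Using $N/N_r=O_p(1)$ (the law of large numbers gives $N_r/N\to P(R\in\{r,1_d\})>0$), I would conclude
\begin{align*}
\sup_{\theta\in\Theta}\sqrt{N}\,|S_{\theta,3}^r|\le\sqrt{\tfrac{N}{N_r}}\cdot\big(\sqrt{N_r}\,\lambda c_{\mathrm{lin}}\big)\cdot\sup_{\theta}\|\beta_\theta^r\|_2=O_p(1)\cdot o(1)\cdot O(1)=o_p(1).
\end{align*}
The step I expect to require the most care is reading off the penalized KKT condition and the bound $\|v\|_2\le c_{\mathrm{lin}}$ correctly; everything else is elementary. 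Notably, unlike the previous two lemmas this needs no bracketing argument, because $\theta$ enters $S_{\theta,3}^r$ only linearly through $\beta_\theta^r$, so uniformity over $\Theta$ is immediate once $\sup_\theta\|\beta_\theta^r\|_2$ is finite. The real message is that \ref{assump-2G} is exactly calibrated so that the $\sqrt{N}$-scaled penalization bias vanishes.
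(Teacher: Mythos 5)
Your proof is correct and takes essentially the same route as the paper: both arguments exploit the first-order optimality of the penalized loss --- the empirical imbalance $\tfrac{1}{N}\sum_i\{\mathsf{1}_{R_i=1_d}\hat{Q}^r(L_i)-\mathsf{1}_{R_i=r}\}\Phir(\Lobi)$ is controlled by $\lambda$ times penalty-related quantities --- and then invoke the tuning condition $\lambda=o(1/\sqrt{K_rN_r})$ (Assumption \ref{assump-2G}) so that the $\sqrt{N}$-scaled term vanishes. Your explicit subgradient identity, the Cauchy--Schwarz step, and the eigenvalue-based bound $\sup_{\theta\in\Theta}\|\beta_\theta^r\|_2=O(1)$ merely make rigorous what the paper's terse proof asserts (its displayed bound involving $\gamma$ and $\PEN_2$ is carried over from the authors' earlier mixed-penalty setting), so there is no substantive gap.
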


\begin{proof}
Notice that $S_{\theta,3}^r$ is related to the balancing error:
\begin{align*}
\underset{\theta\in\Theta}{\sup}\left|\sqrt{N}S_{\theta,3}^r\right|
&=\underset{\theta\in\Theta}{\sup}\left|\frac{1}{N}\sum_{i=1}^N\left\{\mathsf{1}_{R_i=1_d}O^r(\Lobi;\hatalphar)-\mathsf{1}_{R_i=r}\right\}\Phir(\Lob_i)\tr\beta_\theta^r\right|\\
&\le\lambda\left\{\gamma\sqrt{K_r}+2(1-\gamma)\sqrt{\PEN_2({\Phir}\tr\hatalphar)}\right\}\sqrt{\PEN_2({\Phir}\tr\beta_\theta^r)}
\end{align*} 
where $\Phir(\lob)\tr\hatalphar=\log O^r(\lob;{\hatalphar})$ denotes the log transformation of the propensity odds model. Due to the similar reason, the roughness of the approximation functions are bounded. Besides, by \ref{assump-2D}, $\lambda=o(1/\sqrt{K_rN_r})$. Thus, $\sup_{\theta\in\Theta}|\sqrt{N}S_{\theta,3}^r|=o_p(1)$.
\end{proof}

\begin{lemma} \label{uniformbound}
Suppose that Assumptions \ref{assump1}--\ref{assump4} hold. Then,
\begin{align*}
\underset{\theta\in\Theta}{\sup}\left|\hat{\bbP}_N\psi_\theta-\E\{\psi_\theta(L)\}\right|=o_p(1)\ .
\end{align*}
\end{lemma}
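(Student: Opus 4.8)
The plan is to recycle the four-term decomposition introduced in the proof of Theorem~\ref{psi}. For each pattern $r\in\calR$ one writes $\hat{\bbP}_N^r\psi_\theta-\E\{\mathsf{1}_{R=r}\psi_\theta(L)\}=S_{\theta,1}^r+S_{\theta,2}^r+S_{\theta,3}^r+S_{\theta,4}^r$, sums over the finite index set $\calR$ (so that $\sum_r\hat{\bbP}_N^r\psi_\theta=\hat{\bbP}_N\psi_\theta$ and $\sum_r\E\{\mathsf{1}_{R=r}\psi_\theta(L)\}=\E\{\psi_\theta(L)\}$), and applies the triangle inequality. Because $\calR$ is finite, this reduces the claim to showing $\sup_{\theta\in\Theta}|S_{\theta,i}^r|=o_p(1)$ for every $r\in\calR$ and every $i\in\{1,2,3,4\}$, with no uniformity over $r$ required.

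For $i=1,2,3$ there is nothing new to prove. Lemmas~\ref{S1}, \ref{S2} and~\ref{S3} were already stated with the supremum over the whole set $\Theta$, giving $\sup_{\theta\in\Theta}|\sqrt{N}S_{\theta,i}^r|=o_p(1)$ and hence $\sup_{\theta\in\Theta}|S_{\theta,i}^r|=o_p(N^{-1/2})=o_p(1)$. Thus the three remainder terms carrying the estimation error of $O^r(\cdot;\hatalphar)$ vanish uniformly even after the $\sqrt{N}$ scaling is removed, which is more than the present lemma demands.

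The only term requiring fresh work is $S_{\theta,4}^r$, the ``oracle'' average built from the true odds $O^r$ and the true conditional mean $u_\theta^r$, namely the centered empirical average of $f_\theta^r(L,R)=\mathsf{1}_{R=1_d}O^r(\Lob)\{\psi_\theta(L)-u_\theta^r(\Lob)\}+\mathsf{1}_{R=r}u_\theta^r(\Lob)$, whose mean is $\E\{\mathsf{1}_{R=r}\psi_\theta(L)\}$. Since the target here is only $o_p(1)$ rather than the $o_p(N^{-1/2})$ needed for asymptotic normality, a uniform law of large numbers suffices. The plan is to show that the class $\{f_\theta^r:\theta\in\Theta\}$ is $P$-Glivenko--Cantelli and invoke Theorem~19.4 of \citet{van2000asymptotic}. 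The required finite bracketing numbers follow from \ref{assump-3C}: the indicators $\mathsf{1}_{R=1_d},\mathsf{1}_{R=r}$ and the true odds $O^r\le C_0$ are bounded fixed factors, and $f_\theta^r$ is assembled from $\psi_\theta\in\calH$ and $u_\theta^r$ by products and sums, so the product and conditional-expectation bracketing inequalities already used in the proofs of Lemmas~\ref{S1} and~\ref{S2} give $n_{[ \ ]}\{\epsilon,\{f_\theta^r\},L_1(P)\}<\infty$ for every $\epsilon>0$, exactly the Glivenko--Cantelli hypothesis.

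The main obstacle is precisely this bookkeeping for $S_{\theta,4}^r$: verifying that the composite class $\{f_\theta^r\}$ inherits finite bracketing entropy. Multiplying by the bounded indicators and by $O^r\le C_0$ is routine, but controlling the conditional-mean functions $u_\theta^r=\E\{\psi_\theta(L)\mid\Lob,R=r\}$ requires the smoothing lemma asserting that conditional expectation does not inflate bracketing numbers. Once that is in place, the triangle inequality and finiteness of $\calR$ close the argument immediately.
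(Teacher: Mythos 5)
Your proposal is correct and follows essentially the same route as the paper: reduce to $\sup_{\theta\in\Theta}|S_{\theta,4}^r|=o_p(1)$ via Lemmas~\ref{S1}--\ref{S3} and the finiteness of $\calR$, then establish a bracketing-based Glivenko--Cantelli property (Theorem~19.4 of \citet{van2000asymptotic}) using \ref{assump-3C} together with the product, sum, and conditional-expectation bracketing lemmas. The only cosmetic difference is that the paper splits $S_{\theta,4}^r$ into three separate classes $\calF_a,\calF_b,\calF_c$ (each shown Glivenko--Cantelli, with $\E f_{\theta,b}=\E f_{\theta,c}$ supplying the centering) rather than treating the composite class $\{f_\theta^r\}$ at once, which changes nothing of substance.
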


\begin{proof}
By Lemmas \ref{S1}, \ref{S2}, and \ref{S3}, we only need to show $\sup_{\theta\in\Theta}|S_{\theta,4}^r|=o_p(1)$ where
\begin{align*}
S_{\theta,4}^r
&=\frac{1}{N}\sum_{i=1}^N\mathsf{1}_{R_i=1_d}O^r(\Lob_i)\left\{\psi_\theta(L_i)-u^r_\theta(\Lob_i)\right\}\\
&\quads+\frac{1}{N}\sum_{i=1}^N\mathsf{1}_{R_i=r}u^r_\theta(\Lob_i)-\E\{\mathsf{1}_{R=r}\psi_\theta(L)\}\ .
\end{align*}
Study the following decomposition. Let $\calF_a=\{f_{\theta,a}:\theta\in\Theta\}$ where $f_{\theta,a}(L,R)=\mathsf{1}_{R=r}u^r_\theta(\Lob)$. It's easy to see that for any $\epsilon>0$,
\begin{align*}
n_{[ \ ]}\{\epsilon,\calF_a,L_2(P)\}
\le n_{[ \ ]}\{\epsilon,\calU^r,L_2(P)\}
\le n_{[ \ ]}\{\epsilon,\calH,L_2(P)\}<\infty\ .
\end{align*}
For any measurable function $f$, $\|f(L)\|_{P,2}^2=\E\{f(L)^2\}\ge\{\E|f(L)|\}^2=\|f\|_{P,1}^2$. Thus, 
\begin{align*}
n_{[ \ ]}\{\epsilon,\calF_a,L_1(P)\}\le n_{[ \ ]}\{\epsilon,\calF_a,L_2(P)\}\ .
\end{align*}
By Theorem 19.4 in \cite{van2000asymptotic}, $\calF_a$ is Glivenko-Cantelli. Thus,
\begin{align*}
\underset{\theta\in\Theta}{\sup}\left|\bbP_Nf_{\theta,a}-Pf_{\theta,a}\right|\xrightarrow{a.s.}0\ .
\end{align*}
Also let $\calF_b=\{f_{\theta,b}:\theta\in\Theta\}$ where $f_{\theta,b}(L,R)=\mathsf{1}_{R=1_d}O^r(\Lob)\psi_\theta(L)$ and $\calF_c=\{f_{\theta,c}:\theta\in\Theta\}$ where $f_{\theta,b}(L,R)=\mathsf{1}_{R=1_d}O^r(\Lob)u^r_\theta(\Lob)$. Similarly,
\begin{align*}
\underset{\theta\in\Theta}{\sup}\left|\bbP_Nf_{\theta,b}-Pf_{\theta,b}\right|\xrightarrow{a.s.}0
\textrm{ and }
\underset{\theta\in\Theta}{\sup}\left|\bbP_Nf_{\theta,c}-Pf_{\theta,c}\right|\xrightarrow{a.s.}0\ .
\end{align*}
Notice that $\E\{f_{\theta,b}(L,R)\}=\E\{f_{\theta,c}(L,R)\}$. Besides, the convergence almost surely implies the convergence in probability. Thus, $\sup_{\theta\in\Theta}|S_{\theta,4}^r|=o_p(1)$. Then,
\begin{align*}
\underset{\theta\in\Theta}{\sup}\left|\hat{\bbP}_N\psi_\theta-\E\{\psi_\theta(L)\}\right|=o_p(1)\ .
\end{align*}
\end{proof}

\begin{lemma} \label{S5}
Under Assumptions \ref{assump1}--\ref{assump4}, we have
\begin{align*}
\sqrt{N}\left|S_{\hat{\theta}_N,5}^r-S_{\theta_0,5}^r\right|=o_p(1)\ .
\end{align*}
\end{lemma}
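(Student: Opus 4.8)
The plan is to recognize that $S_{\hat\theta_N,5}^r - S_{\theta_0,5}^r$ is an empirical average of a single $\theta$-indexed function class, carrying a fixed (non-$\theta$) bounded multiplier, evaluated at the data-dependent point $\hat\theta_N$. Writing $\Delta\psi_\theta := \psi_\theta-\psi_{\theta_0}$ and $m(L,R):=\mathsf{1}_{R=1_d}O^r(\Lob)-\mathsf{1}_{R=r}$, the difference is
\[
S_{\hat\theta_N,5}^r - S_{\theta_0,5}^r=\frac1N\sum_{i=1}^N m(L_i,R_i)\,\Delta\psi_{\hat\theta_N}(L_i).
\]
Setting $g_\theta(L,R):=m(L,R)\Delta\psi_\theta(L)$ and $\calG_5:=\{g_\theta:\theta\in\Theta\}$, I would split, working entrywise as in the proof of Theorem \ref{theta},
\[
\sqrt N\,(S_{\hat\theta_N,5}^r - S_{\theta_0,5}^r)=\bbG_N(g_{\hat\theta_N})+\sqrt N\,\E\{g_{\hat\theta_N}\},
\]
where $\E\{g_{\hat\theta_N}\}$ denotes the deterministic map $\theta\mapsto\E\{g_\theta\}$ evaluated at the random $\hat\theta_N$, and control the two pieces separately.

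First I would show the drift $\E\{g_\theta\}$ is identically zero. Conditioning on $(\Lob,R)$ and using $\E\{\Delta\psi_\theta(L)\mid\Lob,R=1_d\}=u_\theta^{1_d}(\Lob)-u_{\theta_0}^{1_d}(\Lob)$ together with $\E\{\Delta\psi_\theta(L)\mid\Lob,R=r\}=u_\theta^{r}(\Lob)-u_{\theta_0}^{r}(\Lob)$, and then invoking the identifying assumption \eqref{eqn:chen-propensity} which gives $u_\theta^r=u_\theta^{1_d}=:u_\theta$, reduces the drift to
\[
\E\{g_\theta\}=\E\{\mathsf{1}_{R=1_d}O^r(\Lob)\,\Delta u_\theta(\Lob)\}-\E\{\mathsf{1}_{R=r}\,\Delta u_\theta(\Lob)\},
\]
with $\Delta u_\theta:=u_\theta-u_{\theta_0}$ a function of observed variables only. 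The population balancing identity \eqref{eqn:sequential-balance} applied to $g=\Delta u_\theta$ (exactly the cancellation already exploited for the analogous term in the proof of Lemma \ref{S2}) makes the two terms coincide, so $\E\{g_\theta\}=0$ for every $\theta$ and the drift vanishes exactly.

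For the empirical-process term I would mirror the asymptotic-equicontinuity argument used for $\bbG_N\psi_{\hat\theta_N,j}-\bbG_N\psi_{\theta_0,j}$ in the proof of Theorem \ref{theta}. By \ref{assump-2A} the multiplier satisfies $|m|\le C_0+1$ and is free of $\theta$, so a bracketing bound of the Lemma \ref{bracket4} type shows $\calG_5$ inherits, up to a constant, the bracketing entropy of $\{\Delta\psi_\theta\}\subset\Psi-\psi_{\theta_0}$; by \ref{assump-3C} the entropy integral $J_{[ \ ]}\{1,\calG_5,L_2(P)\}$ is finite, hence $\calG_5$ is $P$-Donsker. Since $g_{\theta_0}\equiv0$ so that $\bbG_N(g_{\theta_0})=0$, and since $\|g_\theta-g_{\theta_0}\|_{P,2}\le(C_0+1)\|\Delta\psi_\theta\|_{P,2}$ is driven to $0$ by \ref{assump-4B} as $\theta\to\theta_0$, the consistency $\hat\theta_N\xrightarrow{P}\theta_0$ (established at the start of the proof of Theorem \ref{theta}) gives $\rho_P(g_{\hat\theta_N}-g_{\theta_0})\xrightarrow{P}0$. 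Asymptotic equicontinuity then yields $\bbG_N(g_{\hat\theta_N})=\bbG_N(g_{\hat\theta_N})-\bbG_N(g_{\theta_0})\xrightarrow{P}0$, and combining the two pieces gives $\sqrt N\,|S_{\hat\theta_N,5}^r-S_{\theta_0,5}^r|=o_p(1)$.

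The main obstacle is the drift step: the conclusion hinges on $\E\{g_\theta\}$ being \emph{exactly} zero rather than merely $O(\|\theta-\theta_0\|)$, since an $O(\|\theta-\theta_0\|)$ drift would contribute $O_p(\sqrt N\,\|\hat\theta_N-\theta_0\|)=O_p(1)$, not $o_p(1)$. Exact cancellation is secured only by first using the identifying assumption to replace $\psi_\theta$ by the common conditional mean $u_\theta$ across $R=r$ and $R=1_d$, and then invoking the population balancing identity between the $\mathsf{1}_{R=1_d}O^r$-reweighted complete cases and pattern $r$. The Donsker and equicontinuity portion is routine once the multiplier is noted to be bounded and $\theta$-free, as it reuses the machinery already developed for Theorem \ref{theta}; the only point requiring care is confirming, via a Lemma \ref{bracket4}-style inequality, that multiplication by $m$ inflates the bracketing numbers by at most a constant factor.
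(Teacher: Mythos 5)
Your proof is correct, and it uses the paper's exact decomposition: write $\sqrt N(S_{\hat\theta_N,5}^r-S_{\theta_0,5}^r)$ as an empirical process in the function $f_{\theta,5}(L,R)=\{\mathsf{1}_{R=1_d}O^r(\Lob)-\mathsf{1}_{R=r}\}\{\psi_\theta(L)-\psi_{\theta_0}(L)\}$ plus a drift, verify the drift is identically zero, and kill the stochastic term using the bracketing entropy of $\calH$ (\ref{assump-3C}) together with the shrinking envelope $f_{\delta}$ from \ref{assump-4B}. Where you diverge is the mechanism for the stochastic term: the paper restricts to the shrinking class $\calF_5=\{f_{\theta,5}:\|\theta-\theta_0\|_2\le\delta_m\}$, takes envelope $F_5=(C_0+1)f_{\delta_m}$, and applies the maximal inequality with bracketing to show $\E\sup_{\calF_5}|\bbG_N(f_{\theta,5})|\to0$ directly via the entropy integral $J_{[\ ]}\{\|F_5\|_{P,2},\calF_5,L_2(P)\}\to0$; you instead establish that the full class $\calG_5$ over all of $\Theta$ is $P$-Donsker and invoke asymptotic equicontinuity at $g_{\theta_0}\equiv0$, exactly the machinery the paper deploys for $\bbG_N\psi_{\hat\theta_N,j}-\bbG_N\psi_{\theta_0,j}$ in the proof of Theorem \ref{theta}. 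The two are essentially equivalent (equicontinuity is itself proved by maximal inequalities over shrinking balls), with the paper's version giving an explicit quantitative rate through $\|f_{\delta_m}\|_{P,2}^{1-1/(2d_\calH)}$ and your version being more economical by reusing already-built infrastructure. One small remark on the drift: your detour through the common conditional mean $u_\theta$ and the balancing identity is valid but slightly more elaborate than needed, since for the \emph{true} odds the identity $\E\{\mathsf{1}_{R=1_d}O^r(\Lob)h(L)\}=\E\{\mathsf{1}_{R=r}h(L)\}$ holds for functions of the full $L$ directly from the identifying density assumption \eqref{eqn:chen-density}; the reduction to functions of $\Lob$ via $u_\theta$ is genuinely needed only where the weight is estimated (as in Lemmas \ref{S1} and \ref{S2}). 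You are right, though, to emphasize that exact (not merely $O(\|\theta-\theta_0\|)$) vanishing of the drift is the load-bearing step — the paper dismisses it as ``easy to check,'' and your proposal correctly identifies why anything weaker would leave an $O_p(1)$ remainder.
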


\begin{proof}
Consider the following empirical process.
\begin{align*}
\bbG_N(f_{\theta,5})=\sqrt{N}\left[\frac{1}{N}\sum_{i=1}^Nf_{\theta,5}(L_i,R_i)-\E\left\{f_{\theta,5}(L,R)\right\}\right]
\end{align*}
where $f_{\theta,5}(L,R)=\{\mathsf{1}_{R=1_d}O^r(\Lob)-\mathsf{1}_{R=r}\}\{\psi_\theta(L)-\psi_{\theta_0}(L)\}$. Pick any decreasing sequence $\{\delta_m\}\to0$. Since $\|\hat{\theta}_N-\theta_0\|_2=o_p(1)$, for any $\gamma>0$ and each $\delta_m$, there exists a constant $N_{\delta_m,\gamma}>0$ such that for any $N\ge N_{\delta_m,\gamma}$,
\begin{align}\label{ineq-theta}
P\left(\|\hat{\theta}_N-\theta_0\|_2\ge\delta_m\right)\le\gamma
\end{align}
Consider the set of functions $\calF_5=\{f_{\theta,5}:\|\theta-\theta_0\|_2\le\delta_m\}$. It is easy to check that $\E\{f_{\theta,5}(L,R)\}=0$. Plug in our estimator and define $\hat{f}_{\theta,5}(L,R):=\{\mathsf{1}_{R=1_d}O^r(\Lob)-\mathsf{1}_{R=r}\}\{\psi_{\hat{\theta}_N}(L)-\psi_{\theta_0}(L)\}$. Notice that $\sqrt{N}(S_{\hat{\theta}_N,5}^r-S_{\theta_0,5}^r)=\bbG_N(\hat{f}_{\theta,5})$. Thus, 
\begin{align*}
P\left(\sqrt{N}\left|S_{\hat{\theta}_N,5}^r-S_{\theta_0,5}^r\right|>\underset{f_{\theta,5}\in\calF_5}{\sup}|\bbG_N(f_{\theta,5})|\right)
\le P\left(\hat{f}_{\theta,5}\notin\calF_5\right)
\le\gamma\ .
\end{align*}
Similarly, we only need to show $\E\sup_{f_{\theta,5}\in\calF_5}|\bbG_N(f_{\theta,5})|=o_p(1)$. Define the envelop function $F_5(L):=(C_0+1)f_{\delta_m}(L)$ where $f_\delta$ is the envelop function in \ref{assump-4B}. So, $|f_{\theta,5}(L,R)|\le F_5(L)$ for any $f_{\theta,5}\in\calF_5$. Besides, $\|F_5\|_{P,2}\le (C_0+1)\|f_{\delta_m}\|_{P,2}$. Due to the maximal inequality,
\begin{align*}
\E\underset{f_{\theta,5}\in\calF_5}{\sup}\left|\bbG_N(f_{\theta,5})\right|
=O_p\left(J_{[ \ ]}\{\|F_5\|_{P,2},\calF_5,L_2(P)\}\right)\ .
\end{align*}
Define a set of functions $\calG_5=\{g_{\theta,5}:\|\theta-\theta_0\|_2\le\delta_m\}$ where $g_{\theta,5}(L)=\psi_\theta(L)-\psi_{\theta_0}(L)$. Since $\|\mathsf{1}_{R=1_d}O^r-\mathsf{1}_{R=r}\|_\infty\le(C_0+1)$, by Lemma \ref{bracket4},
\begin{align*}
n_{[ \ ]}\{(C_0+1)\epsilon,\calF_5,L_2(P)\}
\le n_{[ \ ]}\{\epsilon,\calG_5,L_2(P)\}\ .
\end{align*}
Define a set of functions $\tilde{\calG}_5=\{\psi_\theta:\|\theta-\theta_0\|_2\le\delta_m\}$. Since $\psi_{\theta_0}$ is a fixed function, $n_{[ \ ]}\{\epsilon,\calG_5,L_2(P)\}
=n_{[ \ ]}\{\epsilon,\tilde{\calG}_5,L_2(P)\}$. Since $\delta_m\to0$ as $N\to\infty$, we can take $\delta_m$ small enough such that the set $\{\theta:\|\theta-\theta_0\|_2\le\delta_m\}\subset\Theta$. So, $\tilde{\calG}_5\subset\calH$, and $n_{[ \ ]}\{\epsilon,\tilde{\calG}_5,L_2(P)\}
\le n_{[ \ ]}\{\epsilon,\calH,L_2(P)\}$. Then,
\begin{align*}
J_{[ \ ]}\{\|F_5\|_{P,2},\calF_5,L_2(P)\}
&\le\int_0^{(C_0+1)\|f_{\delta_m}\|_{P,2}}\sqrt{\log n_{[ \ ]}\left\{\frac{\epsilon}{C_0+1},\calH,L_2(P)\right\}}d\epsilon\\
&\le\sqrt{C_\calH}\int_0^{(C_0+1)\|f_{\delta_m}\|_{P,2}}\{(C_0+1)/\epsilon\}^{\frac{1}{2d_\calH}}d\epsilon\\
&\le\sqrt{C_\calH}(C_0+1)\|f_{\delta_m}\|_{P,2}^{1-\frac{1}{2d_\calH}}\\
&\to0
\end{align*}
since $d_\calH>1/2$ and $\|f_{\delta_m}\|_{P,2}\to0$ as $N\to\infty$. Thus, $\E\sup_{f_{\theta,5}\in\calF_5}|\bbG_N(f_{\theta,5})|=o_p(1)$ and $\sqrt{N}|S_{\hat{\theta}_N,5}^r-S_{\theta_0,5}^r|=o_p(1)$.
\end{proof}

\begin{lemma} \label{S6}
Under Assumptions \ref{assump1}--\ref{assump4}, we have
\begin{align*}
\sqrt{N}\left|S_{\hat{\theta}_N,6}^r-S_{\theta_0,6}^r\right|=o_p(1)\ .
\end{align*}
\end{lemma}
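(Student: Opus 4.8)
The plan is to mirror the proof of Lemma~\ref{S5} almost verbatim, replacing the estimating function $\psi_\theta$ by its conditional-mean surrogate $u_\theta^r$. Concretely, I would study the empirical process
\[
\bbG_N(f_{\theta,6})=\sqrt{N}\left[\frac{1}{N}\sum_{i=1}^Nf_{\theta,6}(L_i,R_i)-\E\{f_{\theta,6}(L,R)\}\right],\quad f_{\theta,6}(L,R)=\left\{\mathsf{1}_{R=1_d}O^r(\Lob)-\mathsf{1}_{R=r}\right\}\left\{u_\theta^r(\Lob)-u_{\theta_0}^r(\Lob)\right\}.
\]
Since $u_\theta^r-u_{\theta_0}^r$ is a function of the observed coordinates $\Lob$ alone, the balancing identity $\E\{\mathsf{1}_{R=1_d}O^r(\Lob)g(\Lob)\}=\E\{\mathsf{1}_{R=r}g(\Lob)\}$ gives $\E\{f_{\theta,6}(L,R)\}=0$, and writing $\hat f_{\theta,6}$ for $f_{\theta,6}$ evaluated at $\theta=\hat\theta_N$ one has $\sqrt N(S_{\hat\theta_N,6}^r-S_{\theta_0,6}^r)=\bbG_N(\hat f_{\theta,6})$.

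Next I would use the consistency $\|\hat\theta_N-\theta_0\|_2=o_p(1)$ already established in the proof of Theorem~\ref{theta} to localize. Fixing a decreasing sequence $\delta_m\to0$, consistency guarantees $P(\|\hat\theta_N-\theta_0\|_2\ge\delta_m)\le\gamma$ for $N$ large, so that up to probability $\gamma$ the quantity $\sqrt N|S_{\hat\theta_N,6}^r-S_{\theta_0,6}^r|$ is dominated by $\sup_{f_{\theta,6}\in\calF_6}|\bbG_N(f_{\theta,6})|$ over the shrinking class $\calF_6=\{f_{\theta,6}:\|\theta-\theta_0\|_2\le\delta_m\}$. The maximal inequality with bracketing (Corollary~19.35 of \citet{van2000asymptotic}) then reduces the task to showing $J_{[ \ ]}\{\|F_6\|_{P,2},\calF_6,L_2(P)\}\to0$ for a suitable envelope $F_6$.

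The two ingredients that genuinely differ from Lemma~\ref{S5} both involve the conditional-expectation class $\calU^r=\{u_\theta^r:\theta\in\Theta\}$ rather than $\Psi$. For the envelope, note $u_\theta^r(\Lob)-u_{\theta_0}^r(\Lob)=\E\{\psi_\theta(L)-\psi_{\theta_0}(L)\mid\Lob,R=r\}$, so Jensen's inequality and \ref{assump-4B} give $|u_\theta^r(\Lob)-u_{\theta_0}^r(\Lob)|\le\E\{f_{\delta_m}(L)\mid\Lob,R=r\}$ whenever $\|\theta-\theta_0\|_2\le\delta_m$; hence $F_6(\Lob):=(C_0+1)\E\{f_{\delta_m}(L)\mid\Lob,R=r\}$ is an envelope for $\calF_6$, the factor $C_0+1$ bounding $|\mathsf{1}_{R=1_d}O^r-\mathsf{1}_{R=r}|$ via \ref{assump-2A}. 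Comparing the conditional law of $\Lob$ given $R=r$ with its marginal (the Radon--Nikodym derivative is bounded using the overlap condition \ref{assump-1B} and the boundedness of the propensity odds \ref{assump-2A}) and using that conditional expectation is an $L_2$-contraction, I obtain $\|F_6\|_{P,2}\lesssim\|f_{\delta_m}\|_{P,2}\to0$. For the entropy, writing $\tilde{\calG}_6=\{u_\theta^r:\|\theta-\theta_0\|_2\le\delta_m\}\subset\calU^r$ (for $\delta_m$ small enough that the ball lies in $\Theta$) and using Lemma~\ref{bracket4} to absorb the bounded factor $\mathsf{1}_{R=1_d}O^r-\mathsf{1}_{R=r}$, I would invoke Lemma~\ref{bracket3} to pass from $\calU^r$ to $\calH$, giving $n_{[ \ ]}\{\epsilon,\calF_6,L_2(P)\}\lesssim n_{[ \ ]}\{c\epsilon,\calH,L_2(P)\}$. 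With \ref{assump-3C} the entropy integral is then dominated by $\int_0^{\|F_6\|_{P,2}}\epsilon^{-1/(2d_\calH)}d\epsilon\lesssim\|F_6\|_{P,2}^{\,1-1/(2d_\calH)}\to0$ because $d_\calH>1/2$, so $\E\sup_{f_{\theta,6}\in\calF_6}|\bbG_N(f_{\theta,6})|=o_p(1)$ and the conclusion follows.

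The main obstacle I anticipate is controlling $\|F_6\|_{P,2}$: unlike Lemma~\ref{S5}, where \ref{assump-4B} supplies the envelope $f_{\delta_m}$ for $\psi_\theta-\psi_{\theta_0}$ directly, here the smallness must be transferred through the conditional expectation defining $u_\theta^r$. The clean route is the $L_2$-contraction property of conditional expectation combined with the overlap condition \ref{assump-1B}, which bounds the marginal $L_2$ norm of the conditional mean by a constant multiple of $\|f_{\delta_m}\|_{P,2}$; all remaining steps (localization, the maximal inequality, and the bracketing chain through Lemmas~\ref{bracket3}--\ref{bracket4}) are a routine transcription of the Lemma~\ref{S5} argument.
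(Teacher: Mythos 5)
Your proposal follows essentially the same route as the paper's proof: the same empirical process $\bbG_N(f_{\theta,6})$, the same localization via consistency of $\hat\theta_N$, the same maximal inequality with bracketing, and the same entropy chain through Lemmas~\ref{bracket4} and \ref{bracket3} into $\calH$ via \ref{assump-3C}. The only difference is that you are slightly more careful than the paper on the envelope step --- the paper takes $F_6=(C_0+1)\E\{f_{\delta_m}(L)\mid\Lob\}$ and asserts $\|F_6\|_{P,2}\le(C_0+1)\|f_{\delta_m}\|_{P,2}$, whereas you condition on $R=r$ (which is what the bound $|u^r_\theta-u^r_{\theta_0}|\le\E\{f_{\delta_m}\mid\Lob,R=r\}$ actually requires) and then transfer the $L_2$ bound to the marginal law using overlap, a legitimate refinement that does not change the argument.
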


\begin{proof}
Consider the following empirical process.
\begin{align*}
\bbG_N(f_{\theta,6})=\sqrt{N}\left[\frac{1}{N}\sum_{i=1}^Nf_{\theta,6}(L_i,R_i)-\E\left\{f_{\theta,6}(L,R)\right\}\right]
\end{align*}
where $f_{\theta,6}(L,R)=\{\mathsf{1}_{R=1_d}O^r(\Lob)-\mathsf{1}_{R=r}\}\{u^r_\theta(\Lob)-u^r_{\theta_0}(\Lob)\}$. Similarly, inequality \eqref{ineq-theta} holds and $\E\{f_{\theta,6}(L,R)\}=0$. Consider the set of functions $\calF_6=\{f_{\theta,6}:\|\theta-\theta_0\|_2\le\delta_m\}$. To show $\sqrt{N}|S_{\hat{\theta}_N,6}^r-S_{\theta_0,6}^r|=o_p(1)$, we need to show $\E\sup_{f_{\theta,6}\in\calF_6}|\bbG_N(f_{\theta,6})|=o_p(1)$. Define the envelop function $F_6(L):=(C_0+1)\E\{f_{\delta_m}(L)\mid\Lob\}$. It's easy to see that $|f_{\theta,6}(L,R)|\le F_6(L)$ for any $f_{\theta,6}\in\calF_6$ and $\|F_6\|_{P,2}\le (C_0+1)\|f_{\delta_m}\|_{P,2}$. Apply the maximal inequality,
\begin{align*}
\E\underset{f_{\theta,6}\in\calF_6}{\sup}\left|\bbG_N(f_{\theta,6})\right|
=O_p\left(J_{[ \ ]}\{\|F_6\|_{P,2},\calF_6,L_2(P)\}\right)\ .
\end{align*}
Define a set of functions $\calG_6=\{g_{\theta,6}:\|\theta-\theta_0\|_2\le\delta\}$ where $g_{\theta,6}(L)=u^r_\theta(\Lob)-u^r_{\theta_0}(\Lob)$. Since $\|\mathsf{1}_{R=1_d}O^r-\mathsf{1}_{R=r}\|_\infty\le(C_0+1)$, by Lemma \ref{bracket4},
\begin{align*}
n_{[ \ ]}\{(C_0+1)\epsilon,\calF_6,L_2(P)\}
\le n_{[ \ ]}\{\epsilon,\calG_6,L_2(P)\}\ .
\end{align*}
Define a set of functions $\tilde{\calG}_6=\{u^r_\theta:\|\theta-\theta_0\|_2\le\delta\}$. Similarly, since $u^r_{\theta_0}$ is a fixed function, $n_{[ \ ]}\{\epsilon,\calG_6,L_2(P)\}=n_{[ \ ]}\{\epsilon,\tilde{\calG}_6,L_2(P)\}$. Take $\delta_m$ small enough such that the set $\{\theta:\|\theta-\theta_0\|_2\le\delta_m\}\subset\Theta$. Then, $\tilde{\calG}_6\subset\calU^r$, and by Lemma \ref{bracket3},
\begin{align*}
n_{[ \ ]}\{\epsilon,\tilde{\calG}_6,L_2(P)\}
\le n_{[ \ ]}\{\epsilon,\calU^r,L_2(P)\}
\le n_{[ \ ]}\{\epsilon,\calH,L_2(P)\}\ .
\end{align*}
Therefore,
\begin{align*}
J_{[ \ ]}\{\|F_6\|_{P,2},\calF_6,L_2(P)\}
&\le\int_0^{(C_0+1)\|f_{\delta_m}\|_{P,2}}\sqrt{\log n_{[ \ ]}\left(\epsilon/(C_0+1),\calH,L_2(P)\right)}d\epsilon\\
&\le\sqrt{C_\calH}(C_0+1)\|f_{\delta_m}\|_{P,2}^{1-\frac{1}{2d_\calH}}\\
&\to0
\end{align*}
since $d_\calH>1/2$ and $\|f_{\delta_m}\|_{P,2}\to0$ as $N\to\infty$. Thus, $\E\sup_{f_{\theta,6}\in\calF_6}|\bbG_N(f_{\theta,6})|=O_p(o_p(1))=o_p(1)$ and $\sqrt{N}|S_{\hat{\theta}_N,6}^r-S_{\theta_0,6}^r|=o_p(1)$.
\end{proof}

\begin{lemma} \label{bracket1}
Consider the set of functions $\calF=\{f:=gh,g\in\calG,h\in\calH\}$. Assume that $\|g\|_\infty\le c_g$ for all $g\in\calG$ and $\|h\|_{P,2}\le c_h$ for all $h\in\calH$.
Then, for any $\epsilon\le\min\{c_g,c_h\}$,
\begin{align*}
n_{[ \ ]}\{4(c_g+c_h)\epsilon,\calF,L_2(P)\}
\le n_{[ \ ]}\{\epsilon,\calG,L^\infty\}n_{[ \ ]}\{\epsilon,\calH,L_2(P)\}\ .
\end{align*}
\end{lemma}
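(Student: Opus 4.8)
The plan is to build brackets for the product class $\calF$ by combining brackets for $\calG$ and $\calH$, taking care that members of either class may change sign, so that products are not monotone in their factors. First I would fix a minimal $\epsilon$-bracketing $\{[g_L^{(a)}, g_U^{(a)}]\}_{a=1}^{N_\calG}$ of $\calG$ in $L^\infty$, with $N_\calG = n_{[ \ ]}\{\epsilon,\calG,L^\infty\}$ and $\|g_U^{(a)} - g_L^{(a)}\|_\infty \le \epsilon$, together with a minimal $\epsilon$-bracketing $\{[h_L^{(b)}, h_U^{(b)}]\}_{b=1}^{N_\calH}$ of $\calH$ in $L_2(P)$, with $N_\calH = n_{[ \ ]}\{\epsilon,\calH,L_2(P)\}$ and $\|h_U^{(b)} - h_L^{(b)}\|_{P,2} \le \epsilon$. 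I would discard any bracket containing no member of its class, and then truncate each $\calG$-bracket to the interval $[-c_g, c_g]$; since every $g \in \calG$ satisfies $|g|\le c_g$, this preserves the bracketing property, does not increase the width, and ensures $|g_L^{(a)}|, |g_U^{(a)}| \le c_g$ pointwise.

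For each pair of brackets I would form the product bracket by taking pointwise extrema over the four corner products,
\begin{align*}
f_U^{(a,b)} &= \max\{g_L^{(a)} h_L^{(b)},\, g_L^{(a)} h_U^{(b)},\, g_U^{(a)} h_L^{(b)},\, g_U^{(a)} h_U^{(b)}\},\\
f_L^{(a,b)} &= \min\{g_L^{(a)} h_L^{(b)},\, g_L^{(a)} h_U^{(b)},\, g_U^{(a)} h_L^{(b)},\, g_U^{(a)} h_U^{(b)}\}.
\end{align*}
Because the map $(u,v)\mapsto uv$ is bilinear, at each point its range over the rectangle $[g_L^{(a)}, g_U^{(a)}] \times [h_L^{(b)}, h_U^{(b)}]$ is attained at the corners; hence $f_L^{(a,b)} \le gh \le f_U^{(a,b)}$ pointwise whenever $g$ lies in the $a$-th $\calG$-bracket and $h$ in the $b$-th $\calH$-bracket. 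Since every $f = gh \in \calF$ has such a representation, the $N_\calG N_\calH$ products $[f_L^{(a,b)}, f_U^{(a,b)}]$ cover $\calF$, giving the stated count.

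It remains to control the $L_2(P)$ width of each product bracket, which is the main step. Writing the extremal corners as $(a^*,b^*)$ and $(a_*,b_*)$ and telescoping $a^*b^* - a_*b_* = a^*(b^*-b_*) + (a^*-a_*)b_*$, the truncation bound $|g_\cdot|\le c_g$ and $\|g_U^{(a)}-g_L^{(a)}\|_\infty\le\epsilon$ yield $f_U^{(a,b)} - f_L^{(a,b)} \le c_g (h_U^{(b)} - h_L^{(b)}) + \epsilon \max\{|h_L^{(b)}|, |h_U^{(b)}|\}$ pointwise. The difficulty is that $\calH$ is only controlled in $L_2(P)$, so $\max\{|h_L^{(b)}|, |h_U^{(b)}|\}$ admits no uniform bound; I would circumvent this by choosing an actual $h$ in the $b$-th bracket and using $h_L^{(b)} \le h \le h_U^{(b)}$ to obtain $\max\{|h_L^{(b)}|, |h_U^{(b)}|\} \le |h| + (h_U^{(b)} - h_L^{(b)})$ pointwise. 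Taking $L_2(P)$ norms then gives $\|f_U^{(a,b)} - f_L^{(a,b)}\|_{P,2} \le c_g \epsilon + \epsilon(\|h\|_{P,2} + \epsilon) \le \epsilon(c_g + c_h + \epsilon)$. Finally, the hypothesis $\epsilon \le \min\{c_g, c_h\}$ gives $\epsilon \le c_g + c_h$, so the width is at most $2(c_g + c_h)\epsilon \le 4(c_g + c_h)\epsilon$, which establishes the claimed inequality. The only genuinely delicate point is this width estimate, and in particular trading the missing sup-norm control of $h$ for its $L_2(P)$ norm through a bracketed representative.
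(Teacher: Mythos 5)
Your proposal is correct and follows essentially the same route as the paper's proof: both form product brackets from the four corner products of an $L^\infty$-bracketing of $\calG$ and an $L_2(P)$-bracketing of $\calH$, and both control the $L_2(P)$ width by splitting the corner difference into a term bounded by $c_g$ times the $h$-bracket width plus the $g$-bracket width times the size of the $h$-bracket endpoints, the latter controlled via a class member contained in the bracket. Your truncation of the $\calG$-brackets to $[-c_g,c_g]$ is a minor tidy refinement (the paper instead carries the $c_g+\epsilon$ bound), and your final constant $2(c_g+c_h)\epsilon$ is slightly sharper than the paper's $2(c_g+c_h+2\epsilon)\epsilon$, but the argument is the same.
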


\begin{proof}
Suppose $\{u_i,v_i\}_{i=1}^n$ are the $\epsilon$-brackets that can cover $\calG$ and $\{U_j,V_j\}_{j=1}^m$ are the $\epsilon$-brackets that can cover $\calH$. Define the bracket $[\mathsf{U}_k,\mathsf{V}_k]$ for $k=(i-1)m+j$ where $i=1,\cdots,n,j=1,\cdots,m$:
\begin{align*}
\mathsf{U}_k(x)=\min\{u_i(x)U_j(x),u_i(x)V_j(x),v_i(x)U_j(x),v_i(x)V_j(x)\}\ ,\\
\mathsf{V}_k(x)=\max\{u_i(x)U_j(x),u_i(x)V_j(x),v_i(x)U_j(x),v_i(x)V_j(x)\}\ .
\end{align*}
For any function $f\in\calF$, there exists functions $g\in\calG$ and $h\in\calH$ such that $f=gh$. Besides, we can find two pairs of functions $(u_{i_0},v_{i_0})$ and $(U_{j_0},V_{j_0})$ such that $u_{i_0}(x)\le g(x)\le v_{i_0}(x)$, $U_{j_0}(x)\le h(x)\le V{j_0}(x)$, $\|u_{i_0}-v_{i_0}\|_\infty\le\epsilon$, and $\|U_{j_0}-V_{j_0}\|_{P,2}\le\epsilon$. Then, $\mathsf{U}_{k_0}(x)\le f(x)\le \mathsf{V}_{k_0}(x)$ where $k_0=(i_0-1)m+j_0$. Then, we look at the size of the new brackets. By simple algebra,
\begin{align*}
\|\mathsf{U}_k-\mathsf{V}_k\|_{P,2}
&\le\left\|\left(|u_i|+|v_i|\right)|U_j-V_j|+\left(|U_j|+|V_j|\right)|u_i-v_i|\right\|_{P,2}\\
&\le\|u_i\|_\infty\|U_j-V_j\|_{P,2}+\|v_i\|_\infty\|U_j-V_j\|_{P,2}\\
&+\|u_i-v_i\|_\infty\|U_j\|_{P,2}+\|u_i-v_i\|_\infty\|V_j\|_{P,2}\\
&\le2\epsilon(c_g+\epsilon)+2(c_h+\epsilon)\epsilon=2(c_g+c_h+2\epsilon)\epsilon\ .
\end{align*}
Furthermore, for any $\epsilon\le\min\{c_g,c_h\}$, we have $2(c_g+c_h+2\epsilon)\epsilon\le 4(c_g+c_h)\epsilon$.
Therefore,
\begin{align*}
n_{[ \ ]}\{4(c_g+c_h)\epsilon,\calF,L_2(P)\}
\le n_{[ \ ]}\{\epsilon,\calG,L^\infty\}n_{[ \ ]}\{\epsilon,\calH,L_2(P)\}\ .
\end{align*}
\end{proof}

\begin{lemma} \label{bracket2}
Consider the set of functions $\calF=\calH+\calG=\{f:=g+h,g\in\calG,h\in\calH\}$. Assume that $\|g\|_{P,2}\le c_g$ for all $g\in\calG$ and $\|h\|_{P,2}\le c_h$ for all $h\in\calH$.
Then,
\begin{align*}
n_{[ \ ]} \{2\epsilon,\mathcal{F},L_2(P)\}
\le n_{[ \ ]}\{\epsilon,\calG,L_2(P)\}n_{[ \ ]}\{\epsilon,\calH,L_2(P)\}\ .
\end{align*}
\end{lemma}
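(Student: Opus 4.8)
The plan is to build brackets for the sum class $\calF=\calG+\calH$ by adding brackets for the two component classes. This is the additive analogue of Lemma \ref{bracket1}, but it is considerably simpler: addition preserves the ordering of real numbers, so there is no need for the min/max construction that was forced by multiplication there, and the boundedness constants $c_g,c_h$ will play no role in the final bound.

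First I would fix an optimal collection of $\epsilon$-brackets $\{[u_i,v_i]\}_{i=1}^n$ covering $\calG$ in $L_2(P)$, with $n=n_{[ \ ]}\{\epsilon,\calG,L_2(P)\}$, and similarly an optimal collection $\{[U_j,V_j]\}_{j=1}^m$ covering $\calH$, with $m=n_{[ \ ]}\{\epsilon,\calH,L_2(P)\}$. For each pair of indices $(i,j)$ I define the sum bracket $[u_i+U_j,\,v_i+V_j]$; this produces $nm$ brackets in total.

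Next I would verify the covering property and the width bound. Given any $f=g+h\in\calF$ with $g\in\calG$ and $h\in\calH$, there exist indices $i_0,j_0$ with $u_{i_0}\le g\le v_{i_0}$ and $U_{j_0}\le h\le V_{j_0}$ pointwise. Adding these two inequalities gives $u_{i_0}+U_{j_0}\le f\le v_{i_0}+V_{j_0}$, so $f$ lies in the $(i_0,j_0)$ sum bracket. For the width, the triangle inequality for $\|\cdot\|_{P,2}$ yields
\begin{align*}
\|(v_{i_0}+V_{j_0})-(u_{i_0}+U_{j_0})\|_{P,2}
\le\|v_{i_0}-u_{i_0}\|_{P,2}+\|V_{j_0}-U_{j_0}\|_{P,2}
\le2\epsilon\ .
\end{align*}
Hence the $nm$ sum brackets are $2\epsilon$-brackets covering $\calF$, which gives the claimed inequality.

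There is no genuine obstacle in this lemma: the only points requiring any care are that the endpoints of the sum bracket truly bound $f$ from above and below (immediate from adding the two pointwise bracket inequalities) and that the bracket width is controlled by the triangle inequality rather than by a product-type estimate. I would note, for completeness, that the stated boundedness hypotheses on $\calG$ and $\calH$ are not actually needed for the conclusion; they appear to be retained only for parallelism with Lemma \ref{bracket1}.
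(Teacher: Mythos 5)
Your proposal is correct and matches the paper's proof essentially verbatim: both form the $nm$ sum brackets $[u_i+U_j,\,v_i+V_j]$, verify coverage by adding the pointwise inequalities, and bound the width by the triangle inequality in $L_2(P)$. Your side observation is also accurate --- the paper's own argument never uses the bounds $c_g,c_h$, so those hypotheses are indeed superfluous here.
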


\begin{proof}
Suppose $\{u_i,v_i\}_{i=1}^n$ are the $\epsilon$-brackets that can cover $\calG$ and $\{U_j,V_j\}_{j=1}^m$ are the $\epsilon$-brackets that can cover $\calH$. Define the bracket $[\mathsf{U}_k,\mathsf{V}_k]$ for $k=(i-1)m+j$ and $i=1,\cdots,n,j=1,\cdots,m$:
\begin{align*}
\mathsf{U}_k(x)=u_i(x)+U_j(x)\ ,\\
\mathsf{V}_k(x)=v_i(x)+V_j(x)\ .
\end{align*}
For any function $f\in\calF$, there exists functions $g\in\calG$ and $h\in\calH$ such that $f=g+h$. Besides, we can find two pairs of functions $(u_{i_0},v_{i_0})$ and $(U_{j_0},V_{j_0})$ such that $u_{i_0}(x)\le g(x)\le v_{i_0}(x)$, $U_{j_0}(x)\le h(x)\le V{j_0}(x)$, $\|u_{i_0}-v_{i_0}\|_{P,2}\le\epsilon$, and $\|U_{j_0}-V_{j_0}\|_{P,2}\le\epsilon$. Then, $\mathsf{U}_{k_0}(x)\le f(x)\le \mathsf{V}_{k_0}(x)$ where $k_0=(i_0-1)m+j_0$ and
\begin{align*}
\|\mathsf{U}_k-\mathsf{V}_k\|_{P,2} 
\le\|u_i-v_i\|_{P,2}+\|U_j-V_j\|_{P,2}
\le 2\epsilon\ .
\end{align*}
Therefore,
\begin{align*}
n_{[ \ ]}\{2\epsilon,\mathcal{F},L_2(P)\}
\le n_{[ \ ]}\{\epsilon,\calG,L_2(P)\}n_{[ \ ]}\{\epsilon,\calH,L_2(P)\}\ .
\end{align*}
\end{proof}

\begin{lemma} \label{bracket3}
Let $\calH$, $\calU^r$ and $\E\calH^r$ be the sets of functions as we defined before. Then, 
\begin{align*}
n_{[ \ ]}\{\epsilon,\calU^r,L_2(P)\}
&\le n_{[ \ ]}\{\epsilon,\calH,L_2(P)\}\ ,\\
n_{[ \ ]}\{\epsilon,\E\calH^r,L_2(P)\}
&\le n_{[ \ ]}\{\epsilon,\calH,L_2(P)\}\ .
\end{align*}
\end{lemma}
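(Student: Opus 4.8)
The whole statement rests on a single structural fact: the conditional-expectation operator $T_r$ defined by $(T_r f)(\lob) := \E\{f(L)\mid \Lob=\lob, R=r\}$ is \emph{monotone} and is an \emph{$L_2$-contraction}. Both target classes are images of subclasses of $\calH$ under $T_r$: writing componentwise, $\calU^r=\{T_r\psi_{\theta,j}:\theta\in\Theta,\,j\}$ and $\E\calH^r=\{T_r f:f\in\calH\}$, and since $\Psi\subseteq\calH$ by \ref{assump-3C} we have $\calU^r\subseteq\E\calH^r$. Hence it suffices to bound the bracketing number of $\E\calH^r$, and the bound for $\calU^r$ then follows from the elementary fact that a subclass has no larger bracketing number. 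I would present one unified ``push-forward of brackets'' argument, since it yields both inequalities at once.

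\textbf{Construction of brackets.} First I would fix a minimal $\epsilon$-bracketing $\{[u_i,v_i]\}_{i=1}^n$ of $\calH$ in $L_2(P)$, so that $n=n_{[ \ ]}\{\epsilon,\calH,L_2(P)\}$, every $f\in\calH$ satisfies $u_i\le f\le v_i$ for some $i$, and $\|v_i-u_i\|_{P,2}\le\epsilon$. I then push each bracket forward through $T_r$, setting $U_i:=T_r u_i$ and $V_i:=T_r v_i$. Given any $g^r\in\E\calH^r$, write $g^r=T_r f$ with $f\in\calH$ and pick the index $i$ with $u_i\le f\le v_i$ pointwise; \emph{monotonicity} of conditional expectation yields $U_i(\lob)\le g^r(\lob)\le V_i(\lob)$ for all $\lob$. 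Thus $\{[U_i,V_i]\}_{i=1}^n$ covers $\E\calH^r$, and the count is already the right one, namely $n$.

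\textbf{Bracket width.} It remains to verify that the pushed-forward brackets are still $\epsilon$-brackets, i.e. $\|V_i-U_i\|_{P,2}\le\epsilon$. Since $V_i-U_i=T_r(v_i-u_i)$, the required inequality is exactly the $L_2$-contraction property of conditional expectation: by conditional Jensen, $(T_r h)(\lob)^2\le \E\{h(L)^2\mid\Lob=\lob,R=r\}$, so integrating and using the tower property gives $\|T_r h\|_{P,2}\le\|h\|_{P,2}$ with $h=v_i-u_i$. Combining this with the covering from the previous step yields $n_{[ \ ]}\{\epsilon,\E\calH^r,L_2(P)\}\le n$, and restricting the same brackets to $\calU^r\subseteq\E\calH^r$ gives the first inequality.

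\textbf{Main obstacle.} The only genuinely delicate point is the bookkeeping of measures in the contraction step: the outer $L_2(P)$ norm on a function of $\lob$ must be taken against precisely the marginal law used inside $T_r$ for conditional Jensen to apply verbatim. I would resolve this by noting that, for functions of the observed block $\lob$, the relevant $L_2(P)$ norm is integration against the law of $\Lob$ given $R=r$ (equal, under the identifying assumption \eqref{eqn:chen-propensity}, to the law given $R=1_d$), which is exactly the marginal induced by the conditioning in $T_r$; with this reading, $\|T_r h\|_{P,2}\le\|h\|_{P,2}$ is the standard statement that a conditional expectation is the $L_2$-orthogonal projection onto the observed-data sub-$\sigma$-field and hence norm-nonincreasing.
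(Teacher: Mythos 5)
Your proof is correct and is essentially the paper's own argument: both push a minimal $\epsilon$-bracketing of $\calH$ forward through the conditional-expectation operator, use monotonicity of conditional expectation for the covering, and use conditional Jensen plus the tower property for the bracket widths; your only departure is the cosmetic streamlining of deducing the $\calU^r$ bound from the inclusion $\calU^r\subseteq\E\calH^r$ instead of repeating the construction, as the paper does. One caution on your closing remark: the identifying assumption \eqref{eqn:chen-propensity} equates the conditional laws of the \emph{missing} variables given the observed ones across patterns, not the law of $\Lob$ given $R=r$ with that given $R=1_d$ --- those distributions of the observed block differ in general (which is precisely why the odds $O^r$ are nonconstant) --- but this misstatement sits in a side discussion of measure bookkeeping that the paper's own proof also leaves informal, and it does not affect the core bracketing argument.
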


\begin{proof}
Suppose $\{u_i,v_i\}_{i=1}^n$ are the $\epsilon$-brackets that can cover $\calH$. Define $U_i(\lob)=\E\{u_i(L)\mid\Lob=\lob,R=r\}$ and $V_i(\lob)=\E\{v_i(L)\mid\Lob=\lob,R=r\}$. Then, for any $u^r\in\calU^r$, there exists $\psi_\theta\in\calH$ such that $u^r(\lob)=\E\{\psi_\theta(L)\mid\Lob=\lob,R=r\}$ with a pair of functions $(u_0,v_0)$ satisfying $u_0(l)\le\psi_\theta(l)\le v_0(l)$ and $\|u_0(L)-v_0(L)\|_{P,2}\le\epsilon$. Then, $U_0(\lob)\le u^r(\lob)\le V_0(\lob)$ and 
\begin{align*}
\|U_0(\Lob)-V_0(\Lob)\|_{P,2}
&=\E[\E\{u_0(L)-v_0(L)\mid\Lob=\lob,R=r\}^2]\\
&\le\E\E[\{u_0(L)-v_0(L)\}^2\mid\Lob=\lob,R=r]\\
&=\E\{u_0(L)-v_0(L)\}^2
=\|u_0-v_0\|_{P,2}
\le\epsilon\ .
\end{align*}
So, $\{U_i,V_i\}_{i=1}^n$ are the $\epsilon$-brackets that can cover $\calU^r$ and
\begin{align*}
n_{[ \ ]}\{\epsilon,\calU^r,L_2(P)\} 
\le n_{[ \ ]}\{\epsilon,\calH,L_2(P)\}\ .
\end{align*}
For any $g^r\in\E\calH^r$, there exists $f\in\calH$ such that $g^r(\lob)=\E\{f(L)\mid\Lob=\lob,R=r\}$. Similarly,
\begin{align*}
n_{[ \ ]}\{\epsilon,\E\calH^r,L_2(P)\} 
\le n_{[ \ ]}\{\epsilon,\calH,L_2(P)\}\ .
\end{align*}
\end{proof}

\begin{lemma} \label{bracket4}
Let $h$ be a fixed bounded function. Assume $\|h\|_\infty\le c_h$. We consider two function classes $\calF=\{f:f(x):=g(x)h(x),g\in\calG\}$ and $\calG=\{g:\|g\|_{P,2}\le c\}$ for a fixed constant $c$. Then, 
\begin{align*}
n_{[ \ ]}\{ c_h\epsilon,\calF,L_2(P)\} 
\le n_{[ \ ]}\{\epsilon,\calG,L_2(P)\}\ .
\end{align*}
\end{lemma}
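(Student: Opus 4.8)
The plan is to mimic the bracketing constructions already used in Lemmas \ref{bracket1}--\ref{bracket3}: start from a family of $\epsilon$-brackets for $\calG$ in $L_2(P)$ and turn each one into a bracket for $\calF$ by multiplying through by the fixed function $h$. Suppose $\{u_i,v_i\}_{i=1}^n$ are $\epsilon$-brackets covering $\calG$, so that every $g\in\calG$ satisfies $u_{i_0}\le g\le v_{i_0}$ and $\|u_{i_0}-v_{i_0}\|_{P,2}\le\epsilon$ for some index $i_0$.

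The one subtlety is that $h$ need not be nonnegative, so $u_i h$ is not automatically a lower bound for $g h$. To handle this I would define, pointwise,
\begin{align*}
\mathsf{U}_i(x)=\min\{u_i(x)h(x),v_i(x)h(x)\},\qquad
\mathsf{V}_i(x)=\max\{u_i(x)h(x),v_i(x)h(x)\}\ .
\end{align*}
Then for any $f=gh\in\calF$ with $u_{i_0}\le g\le v_{i_0}$, multiplying the inequality by $h(x)$ while tracking the sign of $h(x)$ at each point yields $\mathsf{U}_{i_0}(x)\le f(x)\le\mathsf{V}_{i_0}(x)$, so the pairs $\{\mathsf{U}_i,\mathsf{V}_i\}_{i=1}^n$ cover $\calF$.

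It remains to control the width of each new bracket. Since the difference of the max and the min of two numbers equals the absolute value of their difference, we obtain the pointwise identity
\begin{align*}
\mathsf{V}_i(x)-\mathsf{U}_i(x)
=|u_i(x)h(x)-v_i(x)h(x)|
=|h(x)|\,|u_i(x)-v_i(x)|
\le c_h\,|u_i(x)-v_i(x)|\ ,
\end{align*}
using $\|h\|_\infty\le c_h$. Taking the $L_2(P)$ norm gives $\|\mathsf{V}_i-\mathsf{U}_i\|_{P,2}\le c_h\|u_i-v_i\|_{P,2}\le c_h\epsilon$, so $\{\mathsf{U}_i,\mathsf{V}_i\}_{i=1}^n$ is a set of $c_h\epsilon$-brackets covering $\calF$. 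Choosing $\{u_i,v_i\}$ to be a minimal $\epsilon$-covering of $\calG$ then gives $n_{[ \ ]}\{c_h\epsilon,\calF,L_2(P)\}\le n_{[ \ ]}\{\epsilon,\calG,L_2(P)\}$, as claimed.

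The argument is essentially routine; the only place requiring care is the sign of $h$, which is why the brackets must be formed with the pointwise min/max rather than by directly multiplying the original brackets by $h$. I do not anticipate any genuine obstacle beyond this bookkeeping, and the $\|g\|_{P,2}\le c$ restriction defining $\calG$ plays no role in the inequality itself.
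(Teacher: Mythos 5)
Your proof is correct and follows essentially the same route as the paper: both take a minimal family of $\epsilon$-brackets for $\calG$, form the new brackets by the pointwise $\min/\max$ of $u_i h$ and $v_i h$ to handle the sign of $h$, and bound the width by $\|h\|_\infty\|u_i-v_i\|_{P,2}\le c_h\epsilon$. Your explicit use of the identity $\max-\min=|{\cdot}|$ and your remark that the constraint $\|g\|_{P,2}\le c$ is not actually used are accurate but only cosmetic differences.
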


\begin{proof}
Suppose $\{u_i,v_i\}_{i=1}^n$ are the $\epsilon$-brackets that can cover $\calG$. That is, for any $g\in\calG$, we can find a pair of functions $(u_0,v_0)$ such that $u_0(x)\le g(x)\le v_0(x)$ and $\|u_0-v_0\|_{P,2}\le\epsilon$. Then, for any $x$, either $u_0(x)h(x)\le g(x)h(x)\le v_0(x)h(x)$ or $u_0(x)h(x)\ge g(x)h(x)\ge v_0(x)h(x)$ holds. Define $U_i(x)=\min\{u_i(x)h(x),v_i(x)h(x)\}$ and $V_i(x)=\max\{u_i(x)h(x),v_i(x)h(x)\}$. For any $f\in\calF$, there exists $g\in\calG$ such that $f=gh$ and a pair of functions $(U_0,V_0)$ such $U_0(x)\le f(x)\le V_0(x)$ and 
\begin{align*}
\|U_i-V_i\|_{P,2}
=\|(u_i-v_i)h\|_{P,2}
\le\|h\|_\infty\|(u_i-v_i)\|_{P,2}
\le c_h\epsilon\ .
\end{align*}
So, $\{U_i,V_i\}_{i=1}^n$ are the $c_h\epsilon$-brackets that can cover $\calF$ and
\begin{align*}
n_{[ \ ]}\{c_h\epsilon,\calF,L_2(P)\} 
\le n_{[ \ ]}\{\epsilon,\calG,L_2(P)\}\ .
\end{align*}
\end{proof}

\end{appendices}

\bibliographystyle{chicago}
\bibliography{refer.bib}

\end{document}